\setlist{nosep}
\newtheorem*{mdresult}{Result}
\newcommand{\IGNORE}[1]{}
\newcounter{note}[section]
\newtheorem{Theorem}{Theorem}[section]
\newtheorem{Lemma}[Theorem]{Lemma}
\newtheorem{Definition}[Theorem]{Definition}
\newtheorem{Observation}[Theorem]{Observation}
\newtheorem{Claim}[Theorem]{Claim}
\newtheorem{Corollary}[Theorem]{Corollary}
\newcommand{\parta}{(\text{\uppercase\expandafter{\romannumeral1}})}
\newcommand{\partb}{(\text{\uppercase\expandafter{\romannumeral2}})}
\newcommand{\partc}{(\text{\uppercase\expandafter{\romannumeral3}})}
\newcommand{\bev}{\mathtt{BE}}
\newcommand{\opt}{\mathsf{Opt}}
\newcommand{\esopt}{\widehat{\mathsf{Opt}}}
\newcommand{\obj}{\mathsf{Obj}}
\newcommand{\poly}{\mathsf{poly}}
\newcommand{\alg}{\mathsf{Alg}}
\newcommand{\cons}{a}
\newcommand{\val}{v}
\newcommand{\loss}{\mathsf{Loss \ in \ Revenue}}
\newcommand{\ptt}{\mathcal{P}}
\newcommand{\type}{\mathsf{Type}}
\newcommand{\D}{\mathcal{D}}
\newcommand{\A}{\mathcal{A}}
\newcommand{\E}{\mathbb{E}}
\newcommand{\B}{\mathbf{B}}
\newcommand{\OTild}{\widetilde{O}}
\newcommand{\one}{\mathbf{1}}%
\newcommand{\Ex}[2][]{\mbox{\rm\bf E}_{#1}\left[#2\right]}%
\renewcommand{\Pr}[2][]{\mbox{\rm\bf Pr}_{#1}\left[#2\right]}%
\newcommand{\pr}{\mathbf{P}} 
\newcommand{\ignore}[1]{{}}
\newcommand{\R}{\mathbb{R}}
\newcommand{\calD}{\mathcal{D}}
\newcommand{\bld}[1]{\boldsymbol{#1}}
\newcommand{\price}{\lambda}
\newcommand{\initprice}{\lambda^{\mathtt{INIT}}}
\newcommand{\accprice}{\lambda^*}
\newcommand{\bprice}{\pmb{\lambda}}
\newcommand{\ba}{\bld{a}}
\newcommand{\EA}{\mathtt{EA}}
\newcommand{\calH}{\mathcal{H}}
\newcommand{\consx}{\cons_{i, j}^*{(x)}}
\newcommand{\esconsx}{\widehat{\cons}_{i, j}}
\newcommand{\objx}{\obj{(x)}}
\newcommand{\stoptime}{{\tau}}
\newcommand{\numr}{{n_\mathsf{R}}}
\newcommand{\numg}{{n_\mathsf{G}}}
\newcommand{\lp}{\mathtt{LP}}
\newcommand{\sse}{\subseteq}
\newcommand{\calg}{a^{\mathtt{ALG}}}
\newcommand{\csoln}{a^*}\allowbreak
\newcommand{\tildx}{\tilde x}
\newcommand{\tildz}{\tilde z}
\newcommand{\disceps}{\epsilon}
\title{\textmd{\bf  Single-Sample and Robust 
 Online Resource Allocation}}
\date{\today}
\author{Rohan Ghuge\footnote{Department of Industrial and Systems Engineering / Algorithms and Randomness Center, Georgia Institute of Technology, Atlanta, USA. Email: rghuge3@gatech.edu.} \and  Sahil Singla\footnote{School of Computer Science, Georgia Institute of Technology, Atlanta, GA, USA. Email: ssingla@gatech.edu. Supported in part by NSF awards CCF-2327010 and CCF-2440113.}  \and  Yifan Wang\footnote{School of Computer Science, Georgia Institute of Technology, Atlanta, GA, USA. Email: ywang3782@gatech.edu. Supported in part by NSF awards CCF-2327010 and CCF-2440113.}}
\begin{document}
\maketitle 

\begin{abstract}
\medskip 

Online Resource Allocation problem  is a central problem in many areas of Computer Science, Operations Research, and Economics.
In this problem, we sequentially receive $n$ stochastic requests  for $m$ kinds of shared resources, where each request can be satisfied in multiple ways, consuming different amounts of resources and generating different values. The goal is to achieve a $(1-\epsilon)$-approximation to the hindsight optimum, where  $\epsilon>0$ is a small constant, assuming each resource has a large budget (at least $\widetilde \Omega\big(\poly(1/\epsilon)\big)$). 

\medskip
In this paper, we investigate the learnability and robustness of online resource allocation. Our primary contribution is a novel \emph{Exponential Pricing} algorithm with the following properties:

\vspace{0.1cm}
\begin{enumerate}
    \item It requires only a \emph{single sample} from each of the 
$n$ request distributions to achieve a  $(1-\epsilon)$-approximation for online resource allocation with large budgets. 
    Such an algorithm was previously unknown, even with access to polynomially many samples, as prior work either assumed full distributional knowledge or was limited to i.i.d.\,or random-order arrivals.

\smallskip

\item It is robust to corruptions in the outliers model of \cite{BGSZ-ITCS20} and the value augmentation model of \cite{immorlica2020prophet}. Specifically, it maintains its \((1 - \epsilon)\)-approximation guarantee under both these robustness models, resolving the open question posed in \cite{AGMS-SODA22}. 

\smallskip

\item It operates as a simple item-pricing algorithm that ensures incentive compatibility. 

\end{enumerate}

\medskip

\noindent The intuition behind our Exponential Pricing algorithm is that the price of a resource should adjust exponentially as it is overused or underused. 
It differs from conventional approaches that use an online learning algorithm for item pricing. 
This departure guarantees that the algorithm will never run out of any resource, but  loses the usual no-regret properties of online learning algorithms, necessitating a new analytical approach.

\end{abstract}

\bigskip

\clearpage

\newpage

\clearpage

\clearpage
\section{Introduction}

Online resource allocation is a central problem in  many areas of Computer Science, Operations Research, Economics, and Stochastic Optimization. It captures a commonly occurring tradeoff: should we allocate resources to meet a current request or reserve them for potential future gains? This dilemma underlies numerous applications, including online advertising and matching, online packing linear programs, online routing, and combinatorial auctions.
Formally, we receive $n$ sequential requests for $m$ kinds of shared resources, where each request can be satisfied in multiple ways, consuming different amounts of resources and generating different \emph{values}.
The primary goal is to maximize total value (also called ``welfare") while respecting a predefined \emph{budget} constraint for each resource. 
 The main motivating question in theoretical computer science is: \emph{Given a small constant $\epsilon>0$,   can we design an online algorithm that achieves $(1-\epsilon)$-fraction of the value of an optimal offline algorithm?}

It is well-known that if the $n$ requests are drawn independently from \emph{known distributions}, simple LP rounding  yields  a $(1-\epsilon)$-approximation, provided each resource’s budget is at least  $\widetilde \Omega\big(\poly(1/\epsilon)\big)$\footnote{Notation $\widetilde{\Omega}(\cdot)$ omits $\poly(\log(nm\epsilon^{-1}))$ factors.} and each request consumes no more than one unit of any resource. This \emph{large budget} assumption is justifiable both in practice and in theory: large-scale applications, such as online advertising, often operate with substantial budgets, whereas when budgets are small, an $\Omega(1)$ loss from the optimum value is unavoidable, even in the single-resource case (commonly studied under the prophet inequality).
However, in practical scenarios, input distributions are rarely fully known, especially in high-dimensional settings where the curse of dimensionality poses a significant challenge. 
Consequently, much of the prior research has focused on  the  special case of   i.i.d.\,arrivals, or the closely related random-order arrivals, both under the large budget assumption~\cite{DevenurHayes-EC09,AWY-OR14,MR-MOR14,devanur2019near,KRTV-SICOMP18,agrawal2014fast,GM-MOR16}. 
In these cases, we  discard the first $\epsilon n$  requests, incurring only an  $\epsilon$-fraction loss in the optimum value, and use them to learn a  $\big(1-O(\epsilon)\big)$-approximation online algorithm; see book chapters 
\cite[Chapter 6]{EIV-Book23} and \cite{GS-Book20}.

Unfortunately, these approaches fail in the setting of  \emph{unknown non-identical} distributions, which captures fully adversarial online arrivals, making an $\Omega(1)$-approximation impossible. 
Moreover, existing algorithms for known/identical distributions lack robustness to small corruptions  \cite{BGSZ-ITCS20,immorlica2020prophet}.
Our paper is concerned with two key themes: \emph{single-sample learning} and \emph{robustness}.

\medskip
\noindent\textbf{Single-Sample Learning.} 
A common approach to tackling stochastic problems with unknown distributions is through sample complexity analysis: Can we design $(1-\epsilon)$-approximation algorithms for online resource allocation  with non-identical distributions, 
given only a small number of samples from each of the $n$ distributions? 
Notably, no such algorithm was previously known with finite sample complexity: standard approaches via distribution learning fail for unbounded value distributions. 
At the extreme, one might even hope: \emph{can we design  a $(1-\epsilon)$-approximation algorithm given just a \emph{single sample} from each of the $n$ request distributions?} 
Various online problems have been explored under this single-sample model, such as prophet inequalities~\cite{AzarKW14,RubinsteinWW20}, online matching~\cite{KaplanNR22,CDFFLLP-SODA22}, combinatorial auctions~\cite{FOCS24-toapper}, network design~\cite{GGLS-SODA08}, and load balancing~\cite{ArgueF0S22}.

\medskip
\noindent \textbf{Robustness.} 
In practical settings, some requests may deviate from the distributional assumptions, and 
designing algorithms resilient to such corruptions becomes critical. 
A popular robustness model, rooted in Huber's contamination framework from Robust Statistics \cite{Huber64,DK-Book23}, is that of \emph{outliers}: here $(1-\delta)$ fraction of the requests are stochastic (e.g., i.i.d.), while the remaining $\delta$ fraction of requests are adversarial. 
The benchmark is the value of the stochastic (inlier) requests. In this model, the best known robust  algorithm for online resource allocation gives only an $\Omega(1)$-approximation~\cite{AGMS-SODA22}, except in the special case of a single resource~\cite{BGSZ-ITCS20}. 
The situation is similar in the ``value augmentation'' model of robustness~\cite{immorlica2020prophet,AGMS-SODA22}, where again the best known algorithm gives only an $\Omega(1)$-approximation, except when dealing with a single resource.  
This raises the question: \emph{can we develop  $(1-\epsilon)$-approximation algorithms for online resource allocation with multiple resources that are robust to outliers and augmentations?}

The main contribution of this work is the development of an Exponential Pricing algorithm for online resource allocation with large budgets having the following key properties:
\begin{enumerate}[topsep=0pt,itemsep=-0.5ex,partopsep=1ex,parsep=1ex,leftmargin=0.6cm]
    \item The algorithm requires only limited information about the input distributions, which we show can be estimated from a \emph{single sample} from each of the $n$ distributions. 
    This yields the first single-sample $(1-\epsilon)$-approximation algorithm for non-identical distributions, where  no bounded sample-complexity result was previously known.  Such questions have been highlighted in several recent works, including \cite{FOCS24-toapper,KaplanNR22,CDFFLLP-SODA22,RubinsteinWW20,AzarKW14}.

    \item The algorithm is robust to a variety of corruptions in the requests. Specifically, it maintains  its $(1-\epsilon)$-approximation guarantee in the outliers model of \cite{BGSZ-ITCS20} and the value augmentation model of \cite{immorlica2020prophet}, thereby resolving the main open question posed in \cite{AGMS-SODA22}. 

    \item The algorithm  uses ``posted pricing", i.e., each incoming request faces $m$ resource prices  and takes the greedy action. Such algorithms are desirable for their ease of implementation (e.g., from groceries to airline tickets) and directly imply online truthful mechanisms; e.g., see Lucier's survey \cite{Lucier17}. Consequently, we obtain an online truthful mechanism that achieves a $(1-\epsilon)$-approximation  to maximum  welfare, partially answering the question of \cite{feldman2014combinatorial} on analyzing ``the efficiency of posted price mechanisms as a function of the minimal number of item copies''. 
\end{enumerate}

\noindent 
Before explaining our results in detail,  we begin by presenting a formal framework for online resource allocation to place things in the proper context.

\subsection{Online Resource Allocation} \label{sec:introModel}
We formally define the  online resource allocation problem. We receive a sequence of $n$  requests for $m$ kinds of limited {resources}. Each request $i \in [n]$, denoted $\gamma_i = (\val_i, a_i, \Theta_i)$, can be satisfied in several ways, where choosing  $\theta_i$ in the decision set $\Theta_i$ generates  \emph{value} $v_i(\theta_i) \in \R_{\geq 0}$ and consumes (potentially many) resources $a_i(\theta_i) \in  [0,1]^m$. 
The goal is to  maximize  total value (\emph{welfare}) without exceeding the known  \emph{budget} $B_j$  for any resource $j \in [m]$.  We will assume that there is  a null decision $\phi \in \Theta_i$,  with $\val_i(\phi) = 0$ and $a_i(\phi) = \mathbf{0}$, allowing the  algorithm to ``stop'' after budget exhaustion. 
The  benchmark is the optimal value in hindsight.


If the sequential requests are chosen fully adversarially,  no online algorithm achieves more than $\Omega(B/n)$ fraction of the hindsight optimum, even for the special case of a single resource.
Hence, we  study the problem in the \emph{stochastic model} where each request $\gamma_i$ is independently drawn from some  distribution $\calD_i$. 
    The benchmark is the expected hindsight optimal value:
    \begin{equation}\label{eq:hindsight-opt}
  \textstyle \opt ~~:=~~         \E_{\calD_1, \ldots, \calD_n}\left[ \max_{\{\theta_i \in \Theta_i\}_{i \in [n]}}  \sum_{i=1}^n \val_i(\theta_i) ~\text{ s.t. }  \sum_{i=1}^n a_i(\theta_i) \leq \B  \right].
    \end{equation}

\medskip
\noindent\textbf{Posted Pricing.} In addition to maximizing total  value, we  desire that our algorithms use \emph{posted pricing}: each incoming request $i \in [n]$ faces resource prices $\bprice_i \in \R_{\geq 0}^m$ and takes the \emph{best-response} (greedy) decision  $\arg\!\max_{\theta  \in \Theta_i} \big( v_i(\theta) - \langle\bprice_i, a_i(\theta)\rangle \big)$.
Pricing algorithms are desirable because they are easy to implement and  crucial for game-theoretic applications such as auctions, where self-interested agents might misreport their values \cite{NRTV-Book07}. 
The  challenge is in setting the resource prices: high prices may lead to underutilization, while low prices can cause early depletion of resources.

\noindent Online resource allocation  captures several well-studied problems as special cases. 
\begin{itemize}[topsep=0pt,itemsep=-0.5ex,partopsep=1ex,parsep=1ex,leftmargin=0.6cm] %
    \item \emph{Online Matching and Advertisement.} {In the trillion-dollar online advertisement industry \cite{eMarketer23}, users searching for keywords arrive sequentially and must be immediately shown relevant advertisements.} 
    This is modeled as online bipartite matching, as detailed in Mehta’s book \cite{Mehta-Book13},  a special
    case of online resource allocation with actions determining the  vertex to match to. Other applications of online matching include kidney exchanges, ridesharing, and online dating.

    \item \emph{Online Packing LPs.} We aim to solve the LP  $\{\max c^T x \mid Ax \leq \B \text{ and } x\in [0,1]^n\}$, where columns of $A$ arrive sequentially  and we must immediately set $x_i$. This is captured by online resource allocation when $\Theta_i = [0,1]$ and $\val_i(\theta) = c_i \theta$. Important special cases  include AdWords \cite{MSVV-JACM07}, online routing \cite{AwerbuchAP-FOCS93,BN-FOCS06}, and refugee resettlement \cite{AGPTT-OR24}. 
        
    \item \emph{Online Combinatorial Auctions.} A sequence of $n$ buyers arrive with combinatorial valuations $v_i : 2^{[m]} \rightarrow \R$ (like submodular/subadditive) over $m$ items and we must immediately allocate them a subset of $m$ items. 
    This is a fundamental problem in Algorithmic Game Theory  \cite{feldman2014combinatorial,Alaei-SICOMP14,DFKL-SICOMP20,DKL-FOCS20,CC-STOC23}, and is captured  when   $\Theta_i$ denotes subsets of items.
\end{itemize}

\subsection{Single-Sample Online Resource Allocation}\label{sec:results-single-sample}

Our first main result is that for online resource allocation with large budgets, a single sample from each of the $n$  distributions is sufficient to achieve a $(1-\epsilon)$-approximation.

\begin{Theorem}
\label{thm:intro-main}
Given a single sample from each of the $n$ request distributions, there exists a $(1-\epsilon)$-approximation algorithm for online resource allocation with \emph{non-identical} distributions, provided that each resource has ${\Omega}\big(\poly({1}/{\epsilon})\cdot \poly(\log(nm\epsilon^{-1}))\big)$ budget.  Moreover, this is a posted pricing algorithm,  
thereby immediately enabling a $(1-\epsilon)$-approximation  online truthful mechanism  to the maximum  welfare.
\end{Theorem}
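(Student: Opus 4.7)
The plan is to establish the theorem via an \emph{Exponential Pricing} scheme that avoids any explicit distribution learning, and instead relies on the samples only for a coarse one-time initialization of per-resource prices. Let $\bprice^* \in \R_{\geq 0}^m$ denote the optimal dual (shadow prices) of the expected hindsight LP. By LP duality, the greedy best-response to static prices $\bprice^*$ already collects $(1-o(1))\cdot\opt$ in expectation; the two obstacles to turning this into an online algorithm are (a) $\bprice^*$ is unknown, and (b) even with $\bprice^*$, random fluctuations may exhaust some resource before time $n$. The exponential pricing rule will address both simultaneously.

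Concretely, I would first set the posted price for resource $j$ at step $i$ to be
\[
\price_j(i) \;=\; \hat{\price}_j \cdot \exp\!\Big(\eta\cdot\big(\text{usage}_j(i)/B_j - i/n\big)\Big),
\]
with learning rate $\eta = \Theta\big(\log(nm/\epsilon)/\epsilon\big)$ and initial guess $\hat{\price}_j$ obtained from the single samples. Each request takes its greedy (best-response) action with respect to $\bprice(i)$. The exponential term enforces a hard safety property: as $\text{usage}_j(i)$ approaches $B_j$, the price blows up super-polynomially and no action with $a_i(\theta)>0$ on resource $j$ can be selected, so the algorithm \emph{never violates any budget}. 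This already departs from no-regret-based pricing and, as the abstract emphasizes, is why a new analysis is needed.

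The core welfare analysis proceeds via a Lagrangian potential $\Phi(i) = \sum_j \frac{B_j}{\eta}\,\exp\!\big(\eta(\text{usage}_j(i)/B_j - i/n)\big)$ that tracks cumulative overshoot. Using $e^{x+\delta}\le e^x(1+\delta+\delta^2)$ for $|\delta|\le 1$ (valid because single-request consumption is at most $1/B_j$), I would show a per-step inequality of the form $v_i(\theta_i) - \Delta\Phi(i) \ge v_i(\theta) - \langle \bprice(i), a_i(\theta)\rangle$ for \emph{every} alternative $\theta\in\Theta_i$, which is the standard best-response step. Taking expectations, choosing $\theta$ to be the optimal fractional action of the hindsight LP, and summing over $i\in[n]$ yields $\alg \ge \opt - \epsilon\cdot\opt$ once $B_j = \widetilde\Omega(\poly(1/\epsilon))$, because the $\Phi$ telescope and the $\delta^2$ error terms are controlled by the large-budget assumption.

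The main obstacle, and the crux of the single-sample claim, is obtaining a usable $\hat{\price}_j$ from only one draw from each $\D_i$. The key observation is that the exponential rule is \emph{self-correcting}: a constant-factor error $\hat\price_j/\price_j^* \in [e^{-c},e^{c}]$ only shifts the steady-state exponent by $O(1)$, which translates to an $O(1/\eta)=O(\epsilon/\log)$ additive loss per step — absorbed into the slack. Thus I only need to recover $\bprice^*$ up to an $O(1)$ multiplicative factor. I would do this by solving the \emph{sample LP} on the realized $\{\tilde\gamma_i\}_{i=1}^n$ with budgets $(1-\epsilon)\B$, and taking its dual as $\hat{\bprice}$. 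Concentration via a Bernstein/McDiarmid argument — again enabled by $B_j=\widetilde\Omega(\poly(1/\epsilon))$ so that no single request is pivotal — implies that the sample-LP optimum concentrates around the expected LP optimum at multiplicative precision, and standard dual-stability arguments (or a Rademacher-style doubling grid across $O(\log(1/\epsilon))$ candidate prices, selected by revenue comparison on the single samples) transfer this to the duals. Putting these three pieces together yields a posted-pricing $(1-\epsilon)$-approximation; truthfulness of the online mechanism follows for free since at each step agents face a fixed price vector and take their best response, which completes the proof of Theorem~\ref{thm:intro-main}.
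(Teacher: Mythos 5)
Your high-level plan (exponential pricing, safety-via-price-blowup, coarse single-sample initialization tolerated because of self-correction) resembles the paper's strategy in spirit, but there is a genuine gap in the central algorithmic choice that would make the argument fail on exactly the regime the theorem targets.

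Your price rule is
\[
\price_j(i) \;=\; \hat{\price}_j \cdot \exp\!\Big(\eta\cdot\big(\text{usage}_j(i)/B_j - i/n\big)\Big),
\]
whose exponent compares actual consumption against the linear-in-time benchmark $i/n$. That benchmark encodes the assumption that the optimum consumes budget at a roughly constant rate $B_j/n$ per step, which is the i.i.d./random-order assumption and does not hold for non-identical distributions — the very setting of the theorem. For example, if all the type-$A_j$ demand arrives in the first half of the horizon, the optimum consumes $B_j$ in the first $n/2$ steps; your exponent would then be $\approx \eta(1 - i/n)$ at time $i=n/2$, i.e., off by $\Theta(\eta)$, not by the $O(1)$ drift your self-correction argument is designed to absorb. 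This is not a fixable constant: with $\eta = \Theta(\log(nm/\epsilon)/\epsilon)$ the price would be off by a $\poly(nm/\epsilon)$ factor, completely destroying the best-response analysis. The paper's algorithm avoids this by placing in the exponent the \emph{time-varying} quantity $\sum_{\ell<i}\big(\calg_{\ell,j} - \widehat\cons_{\ell,j}\big)$, where $\widehat\cons_{\ell,j}$ is an estimate of the optimal per-step consumption; the entire single-sample machinery (random partition into $\widetilde O(\epsilon^{-4})$ blocks, per-block LPs, concentration without replacement) exists precisely to produce these per-step estimates. Your proposal instead spends its single-sample budget on estimating a static initial price $\hat{\price}_j$ — a quantity the paper shows is almost irrelevant, since it sets $\initprice$ uniformly across resources and proportional only to $\esopt$.

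Two smaller issues: (i) your safety argument is incomplete — a finite exponential price does not by itself prevent budget violation against a single extremely-high-value request; the paper includes an explicit termination condition $\sum_{\ell\leq i}\calg_{\ell,j} \ge \sum_{\ell\leq i}\widehat\cons_{\ell,j} + \tfrac12\epsilon B_j$ and separately controls the probability of this event via the ``good estimate'' condition on high-value requests, and (ii) the assertion that best-response to the static optimal dual prices already yields $(1-o(1))\opt$ is not justified; even ignoring budget overruns, this requires a nontrivial rounding/tie-breaking argument and is not the route the paper takes.
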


Notably, no $(1 - \epsilon)$-approximation algorithm for  online resource allocation was previously known with any finite number of samples (even without the posted pricing requirement). 
For known input distributions, however, it is folklore that randomized rounding of the standard LP (see \Cref{sec:LPandGoodEstimates}) already achieves a $(1-\epsilon)$-approximation when the budgets are ${\Omega}({\log (m)}/{\epsilon}^2)$.  
This LP rounding approach does not extend to unknown distributions because the support size could be infinite and the values  unbounded, making distribution learning impossible  with a finite number of samples.

Moreover, we show that $\Omega\big({\log(m)}/{\epsilon^{2}}\big)$ budget for every resource is necessary to obtain a $(1 - \epsilon)$-approximation, even for known input distributions. This proof in \Cref{sec:hardnessExample} is based on a  modification of the hardness example of  \cite{AWY-OR14} in the related secretary model.

Before describing our techniques for proving \Cref{thm:intro-main}, we discuss the limitations of prior works that focused on specific value functions or identical distributions.

\medskip
\noindent \textbf{Limitations of Prior Approaches.} 
Prior to our work, single-sample $\Omega(1)$-approximation algorithms for online resource allocation have been developed for certain value function classes \cite{AzarKW14,KaplanNR22,CDFFLLP-SODA22,FOCS24-toapper}. In particular, the recent paper \cite{FOCS24-toapper} obtains $\Omega(1)$-approximation for any submodular/XOS value functions. However, this line of work is in the ``small-budget''  setting with budget $B_j=1$ for each resource, where $\Omega(1)$-approximation   is impossible for general value functions \cite{DFKL-SICOMP20,CCFPW-MP23}. Consequently, their techniques must exploit special properties of the value function, and are inapplicable to the large-budget setting with general value functions.

For \emph{identical} distributions, or the related random-order model, a long line of work has shown that a   $(1 - \epsilon)$-approximation can be achieved using dynamic posted prices, provided the budgets are $\widetilde{\Omega}(\poly({1}/{\epsilon}))$
\cite{DevenurHayes-EC09,AWY-OR14,MR-MOR14,devanur2019near,agrawal2014fast,GM-MOR16}\footnote{These works don't require any samples  since for identical distributions we can interpret a small number of the initial requests as samples.}. 
Their main idea is to think of the price of $m$ resources as a distribution over $m$ experts, and then update these prices using an online learning algorithm like Multiplicative Weights or Hedge \cite{AHK-ToC12}. The $(1 - \epsilon)$-approximation analysis  crucially relies on the ``no-regret'' property of online learning.

Two major challenges arise in extending dynamic pricing algorithms to non-identical distributions. 
First, a dynamic pricing algorithm for the non-identical distributions setting is not known in the literature, even if all the $n$ request distributions were known. Existing methods for i.i.d.\,(or random-order) arrivals rely on consuming an average budget $\B/n$ per request, which doesn’t hold for non-identical distributions. 
{Second, when the distributions are unknown and we only have a single sample per distribution, the algorithm must only rely on distributional properties  that can be estimated from a single sample.}

 Next, we outline our approach to addressing these challenges through a new Exponential Pricing algorithm,   and then we discuss how to implement it with a single sample.

\medskip
\noindent \textbf{Exponential Pricing Algorithm.} Our new algorithm requires as input, for each time step $i$, 
the expected budget consumption up to that point, 
$\sum_{\ell \, < \, i} \E[a_{\ell}(\theta^*_\ell)] $, 
based on optimal decisions, $\theta^*_\ell$. Additionally,
it requires the expected optimal value $\opt$, as defined in \eqref{eq:hindsight-opt}. 
When request distributions are known, these inputs are straightforward to compute; later, we show how to estimate them from a single sample. 
Now, the  \emph{Exponential Pricing} algorithm sets resource prices $\bprice_i \in \R_{\geq 0}^m$ for the $i$-th request as:
\vspace{-0.2cm}
\begin{align} \label{eq:expPricingHighLevel}
 \textstyle
  \bprice_{i} = \initprice  \cdot \exp\Big(\delta \cdot \left(\,\sum_{\ell \, < \, i} a_{\ell}(\theta_\ell) - \sum_{\ell \, < \, i} \E[a_{\ell}(\theta^*_\ell)]  \right)\Big), 
\end{align}
where the parameters are set as $\initprice  \approx \frac{\opt}{nm}$ and $\delta \approx \frac{1}{\epsilon \cdot B}$, and $\exp$ is applied coordinate-wise. 
This means that the price of each resource increases exponentially based on the difference between its actual consumption and its expected consumption.

The Exponential Pricing algorithm is reminiscent of the famous online learning algorithm, Hedge, due to Freund and Schapire \cite{FS-JCSS97}.   The major difference is that 
\eqref{eq:expPricingHighLevel} is an equality,  not proportionality,
meaning we do not renormalize to maintain a distribution over $m$ experts, allowing the sum of resource prices to increase exponentially. 
This modification is essential in our analysis: it allows us to prove that, with high probability,  Exponential Pricing  never exhausts any resource budget.
This is because the price of a resource increases exponentially if it is overused, meaning that a resource price would exceed $\opt$ before exhaustion,  leading to a contradiction. 
However, this change from an online learning algorithm to Exponential Pricing sacrifices the  ``no-regret'' property that previous algorithms used for identical distributions.

\medskip
\noindent \textbf{No-Regret Property.} To introduce the no-regret concept, consider a standard online learning setup: at each time step $i$, the algorithm first chooses a distribution  $\mathbf{p}_i$ in the $m$-dimensional full simplex $\blacktriangle_m$, and then receives a reward $ \langle \mathbf{p}_i , \mathbf{r}_i \rangle$, where $\mathbf{r}_i \in [-1,1]^m$ is a reward vector that is revealed after the  step.

An online learning algorithm is said to have \emph{no regret} w.r.t.\,$\mathbf{p} \in \blacktriangle_m$ if
\[ \textstyle
         \sum_{i = 1}^n \langle \mathbf{p}_i , \mathbf{r}_i \rangle  \geq   \sum_{i = 1}^n \langle \mathbf{p} , \mathbf{r}_i \rangle -  o(n),
        \] 
meaning that the algorithm's cumulative reward is at least that of fixed distribution $\mathbf{p}$, up to a sublinear regret term. Hence, the average regret over $n$ steps goes to $0$ as $n\rightarrow  \infty$.

In previous analyses of online learning based  algorithms for i.i.d.\,arrivals \cite[Chapter 6]{EIV-Book23}, two cases are typically considered:
{either some resource $j$ exhausts its budget $B_j$, accumulating sufficient value, or no resource is exhausted by the end, and the algorithm should have played the null price in hindsight. In the first case, we apply the no-regret property  w.r.t.\,$\mathbf{p} = e_j$, and in the second case w.r.t.\,$\mathbf{p} = \mathbf{0}$.}
A key insight in our analysis is that, while  Exponential Pricing  does not satisfy the no-regret property relative to every fixed price vector (since it does not even play a distribution), it does satisfy the no-regret property w.r.t.\,the $\mathbf{0}$ price vector. This specialized no-regret property,  together with the fact that Exponential Pricing never exhausts any resource’s budget with high probability, enables us to achieve  a  $(1-\epsilon)$-approximation for non-identical distributions.

\medskip
\noindent \textbf{Exponential Pricing with a Single Sample.}  
To use Exponential Pricing with unknown distributions, we need to estimate 
$\opt$ and the expected prefix budget consumptions,  $\sum_{\ell \, < \, i} \E[a_{\ell}(\theta^*_\ell)]$, from the single sample. Our approach involves two main steps:
\begin{enumerate}[topsep=0pt,itemsep=-0.5ex,partopsep=1ex,parsep=1ex,leftmargin=0.6cm]
    \item \emph{Tolerance to Estimation Errors}: We show that $\poly(mn)$-factor  errors in the estimate of  $\opt$ are tolerable, as they only logarithmically affect the initial price  $\initprice$, yielding negligible impact on budget constraints in \Cref{thm:intro-main}.  
    For prefix budget consumption, we prove that our analysis of  Exponential Pricing can accommodate additive errors of  $O(\epsilon^2 B)$.  Roughly, this is because we can absorb this error in the ``standard-deviation'' errors.

    \item \emph{Single-Sample Estimation}: Estimating $\opt$ to within a $\poly(mn)$ factor is straightforward using the highest $O(1/\epsilon)$ value requests. The main difficulty lies in estimating prefix budget consumptions accurately to within $O(\epsilon^2 B)$. Observe that such an estimate is easy with $\OTild(\epsilon^{-4})$ independent samples as they provide unbiased prefix consumption estimates. 
    Since we only have a single sample, a natural idea is to uniformly partition the sample into $\OTild(\epsilon^{-4})$ parts, treating each part as independent.    
    Although this introduces correlations, we mitigate this by defining a ``local" optimal solution for each part as  our learning target, rather than the global optimal solution. This approach bypasses correlations, allowing us to apply concentration bounds. Additionally, we show that when the partition is chosen uniformly at random, the local solutions combine to a near-optimal global  solution, thereby still yielding a $\big(1-O(\epsilon)\big)$-approximation. 
\end{enumerate}

\subsection{Robust Online Resource Allocation}\label{sec:results-techniques}

Another major focus of this work is on showing that Exponential Pricing is a \emph{robust} online resource allocation algorithm. As mentioned earlier, the motivation for robustness comes from practical settings where some of the requests do not satisfy the stochastic modeling assumptions, like due to noise or a few agents acting adversarially.   Hence, we would like to design algorithms whose performance does not (significantly) degrade due to   ``outliers'' or ``augmentations''.

\medskip
\noindent\textbf{Outliers.} Inspired by the classic Huber's contamination model from  Robust Statistics \cite{Huber64,DK-Book23}, the authors of \cite{BGSZ-ITCS20} introduce the Byzantine Secretary model for robust online resource allocation. Here, $(1-\delta)$ fraction of the requests are drawn  stochastically (arriving in a random order) and the remaining $\delta$ fraction of the requests are \emph{outliers} that are chosen adversarially (both the request and time of arrival). The goal is to obtain a $(1-O(\epsilon))$-approximation to the stochastic/inliers part of the input. Similar models have  been also explored for special cases of online resource allocation in \cite{KM-ICALP20} and \cite{GKRS-ICALP20}. 

The main result of \cite{BGSZ-ITCS20} is a $(1-\epsilon)$-approximation algorithm, in the special case of a \emph{single resource}, that is robust to outliers. Interestingly, this guarantee is independent of the fraction of outliers $\delta$.
 In \cite{AGMS-SODA22}, the authors study robust online resource allocation for the general case of $m$ resources. They give an $\Omega(1)$-approximation algorithm by observing that online learning based pricing algorithms are robust to outliers. They also give a different $(1-\delta - \epsilon)$-approximation algorithm when at most $\delta$ fraction of the inputs are outliers. 
Unfortunately, as discussed later, their techniques do not extend to give a $(1-\epsilon)$-approximation. 

\medskip
\noindent\textbf{Augmentations.} 
For online resource allocation with non-identical distributions, 
the authors of \cite{immorlica2020prophet} introduce an ``augmentation'' model of robustness, where an adversary is  allowed to arbitrarily \emph{augment} (increase) the value of the requests (but not decrease). The main question is whether this  degrades the algorithm's performance. It was observed in  \cite{immorlica2020prophet}  that the performance of  popular online algorithms degrades significantly due to augmentations, but they could design a $(1-\epsilon)$-approximation robust algorithm in the special case of a \emph{single resource}.  The authors of \cite{AGMS-SODA22} could again apply their techniques to obtain an $\Omega(1)$-approximation for the general case of $m$ resources with augmentations, but obtaining  a $(1-\epsilon)$-approximation has remained open.

At a high-level, the    limitation of the approach in \cite{AGMS-SODA22} is that their analysis has two cases: either the algorithm stops early since it runs out of budget for a resource, or the algorithm goes till the end. In the former case they argue that the ``revenue'' is large since we exhausted a resource, and in the latter case they argue that the ``utility'' is large since we always have all the resources. This approach loses at least a factor of $1/2$ to balance between these two cases.

Our main result gives a $(1-\epsilon)$-approximation pricing algorithm for online resource allocation that is robust to both outliers and augmentations.

\begin{Theorem}[Informal version of Theorems~\ref{thm:romainori} and \ref{thm:onlineRA-aug}]
\label{thm:intro-robust}
Given an $\epsilon > 0$, there exists a $(1-\epsilon)$-approximation algorithm for  online resource allocation that is robust to both outliers and augmentations, provided that the  budget for every resource is $\tilde{\Omega}({1}/{\epsilon^2})$. 
\end{Theorem}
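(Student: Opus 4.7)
The plan is to reuse the Exponential Pricing algorithm and its two analytical pillars developed in the single-sample section---the ``never-exhaust'' guarantee and the specialized no-regret inequality against the zero-price comparator---and show that both survive the two types of adversarial perturbation. Because the budget assumption here is the cleaner $\tilde{\Omega}(1/\epsilon^{2})$ (without the single-sample estimation overhead), I can treat the inputs to Exponential Pricing, namely the prefix consumptions $\sum_{\ell<i}\E[a_\ell(\theta^{*}_\ell)]$ and the benchmark $\opt$, as computed from the underlying inlier distributions.

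First I would instantiate Exponential Pricing with $\initprice \approx \opt_{\text{inl}}/(nm)$ and $\delta \approx 1/(\epsilon B)$, where $\opt_{\text{inl}}$ refers to the LP optimum restricted to the inlier (or un-augmented) requests. The never-exhaust property is essentially structural: if resource $j$ were ever depleted, a telescoping argument on the price updates \eqref{eq:expPricingHighLevel} forces $\bprice_{i,j}$ above $\opt_{\text{inl}}$, contradicting that any agent ever buys it. This step goes through in the outliers model once we pre-filter requests whose value exceeds a $\opt_{\text{inl}}/\epsilon$ threshold (discarding at most $\epsilon$ of the inlier welfare), and it goes through in the augmentation model for free, since augmentations only raise prices and make the contradiction easier.

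Next I would lower bound the \emph{inlier} (respectively \emph{un-augmented}) welfare via the same potential-function argument as in the stochastic case, using $\Phi_i = \sum_{j} \bprice_{i,j}$. For the outliers model, I would restrict the telescoping sum to the inlier time steps and note that outlier steps can only \emph{increase} $\Phi$, which only helps the direction of the inequality; the zero-comparator no-regret bound then converts this into an $(1-\epsilon)$ fraction of $\opt_{\text{inl}}$, explaining why the approximation is independent of the outlier fraction $\delta$ (as in the single-resource result of \cite{BGSZ-ITCS20}). For the augmentation model, I would run the same potential argument but track the \emph{un-augmented} values: since posted-pricing decisions are greedy with respect to augmented values, every accepted request's augmented utility is at least its consumption cost, hence monotonicity and the never-exhaust guarantee together imply that the algorithm's un-augmented welfare is at least $(1-\epsilon)\cdot\opt_{\text{un-aug}}$. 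A key point is that, unlike the two-case ``stops early vs.\ survives'' dichotomy of \cite{AGMS-SODA22} that inherently loses a constant factor, using the zero-comparator no-regret inequality avoids that bifurcation entirely and recovers the $(1-\epsilon)$ guarantee.

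The main obstacle I expect is controlling \emph{single-request extremes}: the analysis needs each request's contribution to the price update $\delta\cdot a_\ell(\theta_\ell)$ to be small relative to the aggregate exponential regret budget, otherwise a lone outlier with huge value (or an augmented request spiked to an enormous value) can consume $\Theta(1)$ units of some resource in one step and break the telescoping estimate. I would address this by combining the $\opt_{\text{inl}}/\epsilon$ value cap above with a Freedman-style concentration argument on the cumulative consumption under the capped increments; this is precisely the step that forces the $\tilde{\Omega}(1/\epsilon^{2})$ budget requirement and is the tightest part of the proof. A secondary difficulty in the augmentation model is proving that no un-augmented request is ``crowded out'' by augmented requests---handled by observing that the exponential price increase caused by a spurious acceptance immediately raises the effective threshold that subsequent requests must pass, so the total displacement of true-welfare acceptances is $\widetilde{O}(\epsilon B)$ per resource.
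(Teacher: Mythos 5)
Your augmentation argument matches the paper's in substance: since augmentations $r_i \geq 0$ only raise the best-response objective, the chain $\val_i(\theta_i)+r_i(\theta_i)-\langle\bprice_i,a_i(\theta_i)\rangle \geq \val_i(\theta^*_i)-\langle\bprice_i,a_i(\theta^*_i)\rangle$ still holds, and the rest of the Exponential Pricing analysis goes through verbatim. The one small correction is that the paper does not need a separate argument about ``crowding out'': it simply observes that if the termination event fires often, revenue already exceeds $\opt$, and otherwise the no-regret bound applies unchanged.

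For the outliers direction, however, there is a genuine gap, and it comes from a mismodeling of the Byzantine Secretary setting. You treat the inliers as if they were drawn from fixed, \emph{independent} per-step distributions (a prophet-style stream), so that the current price vector $\bprice_i$, which depends only on history, would be independent of the current request $\gamma_i$. That independence is exactly what lets one move expectations inside the dot product when bounding the loss in revenue, and it is exactly what fails here: in the Byzantine Secretary model of \cite{BGSZ-ITCS20} the green requests are a \emph{fixed set} permuted uniformly at random, so the request seen at time $t$ is a sample \emph{without replacement} from the remaining pool and is therefore correlated with the prices, which encode which green requests have already appeared. Your proposed Freedman-style concentration on the cumulative consumption is the right kind of tool for an independent-increments stream but does not address this cross-correlation term $\E[\lambda_{t,j}\,(\widetilde a_{t,j}-a^*_j)]$, which is the actual crux. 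The paper resolves it by conditioning on the history $\calH$, showing via a sampling-without-replacement Bernstein inequality that $\Pr_\calH[\E[\widetilde a_{t,j}\mid\calH]\geq a^*_j]$ is tiny, multiplying by the price cap enforced by the termination rule, and additionally restarting the algorithm at $T/2$ to keep the without-replacement remainder pool large enough for the concentration to hold. None of these three ingredients (conditioning on history, without-replacement concentration, midpoint restart) appear in your sketch, and without them the ``restrict the telescoping to inlier steps'' argument does not control the dependence that the random-order arrival introduces.
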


 We prove \Cref{thm:intro-robust} by showing the Exponential Pricing is the desired algorithm. For robustness w.r.t.\,augmentations, we show that the previous analysis still works as augmentations can only make our algorithm run out of budget early, but that is the easy case for  Exponential Pricing  since then the  prices are even higher than the optimal value.

 Proving robustness w.r.t.\,outliers in the Byzantine Secretary model requires more work. The idea  is to still apply  Exponential Pricing. 
  Although the proof structure resembles previous analysis, a  key difference arises due to a dependency in the current price vector and the current request. This is because when requests arrive   in random order, the current request depends on previous requests, creating a correlation with the current price vector. To address this, we interpret the randomness in historical data as a ``sampling without replacement” process and apply concentration inequalities for this type of sampling to control the added correlation. Performing this effectively requires some other modifications like restarting the algorithm at the midpoint.

\subsection{Further Related Work}

\noindent \textbf{Prophet and Secretary models.} 
Originating in Optimal Stopping Theory approximately 50 years ago \cite{Dynkin, KrengelS77, KrengelS78}, prophet and secretary problems have become central in TCS over the last two decades. These models are especially valuable in TCS for two reasons: (1)~they enable ``beyond the worst-case" analysis of online problems, where good algorithms are otherwise impossible under adversarial arrivals, and (2)~they are an important tool in Algorithmic Mechanism Design, often leading to near-optimal posted-pricing mechanisms. For a comprehensive overview, we refer  to the survey~\cite{Lucier17} and book chapters \cite{GS-Book20} and \cite[Chapter 30]{EIV-Book23}.

The most relevant part of this literature to our work is on  online packing LPs for i.i.d.\,or random-order arrivals. This line of study began with the work of  Devanur and Hayes \cite{DevenurHayes-EC09} for matchings/AdWords, and then extended to general packing LPs in \cite{AWY-OR14,FHKMS-ESA10,MR-MOR14}. These initial works showed that $(1-\epsilon)$-approximation is possible if the budget is at least $\poly(m/\epsilon)$. The dependency on the number of resources $m$ was exponentially improved in subsequent works to $\poly(\log m/\epsilon)$, initially through non-pricing algorithms \cite{KRTV-SICOMP18}, and later through pricing algorithms  \cite{agrawal2014fast,GM-MOR16,devanur2019near,BLM-OR23}.   A recent work \cite{BHKKO-SODA24} shows how to convert online algorithms into pricing algorithms, but they price  (exponentially many) 
 subsets of resources, instead of  individual resources, and are inapplicable with only a single sample per distribution. 

Additionally, online packing LPs have been explored in the context of regret minimization, where the benchmark is the best fixed distribution over decisions \cite{BKS-JACM18, AD-OR19, ISSS-JACM22}.  These  techniques  are   similar in spirit to the above techniques for obtaining a  $(1-\epsilon)$-approximation ratio.

\medskip
\noindent\textbf{Sample complexity.}
Motivated by developments in Machine Learning,   data-driven algorithms  seek to learn the optimal algorithm for inputs arriving from an unknown distribution; for an overview, see  book chapter \cite{Balcan-Chp20}.
In the context of prophet problems, the sample complexity of various packing problems has been investigated. It is known, e.g., that a single sample is sufficient to achieve $\Omega(1)$-competitive   algorithms for matchings and restricted classes of matroids \cite{AzarKW14,RubinsteinWW20,KaplanNR22,CDFFLLP-SODA22,correa2022two}. However, obtaining the optimal competitive ratio, up to an additive $\epsilon$, requires $\poly(1/\epsilon)$-samples, even for single-item prophet problems \cite{CorreaDFS19,RubinsteinWW20,CristiZilliotto24,correa2024sample}.

\medskip
\noindent\textbf{Adversarial arrivals or small budgets.} 
For adversarial arrivals, achieving a bounded competitive ratio is generally impossible in online resource allocation. However, assuming bounded values, logarithmic competitive ratios can be obtained when either fractional decisions are allowed or there is budget augmentation \cite{AwerbuchAP-FOCS93, BN-FOCS06, BN-MOR09, BuchbinderNaor-Book09, AzarBCCCG0KNNP16}. In the AdWords setting, 
$(1-1/e)$-competitive algorithms are achievable under a large-budget assumption \cite{MSVV-JACM07, DJ-STOC12}.

With small budgets, achieving a $(1-\epsilon)$-competitive-ratio is infeasible in either of prophet or secretary models. However, an $\Omega(1)$ competitive-ratio is possible for certain submodular/XOS valuations, both in the prophet model \cite{feldman2014combinatorial,DKL-FOCS20,CC-STOC23} and in the secretary model \cite{KRTV-ESA13,FOCS24-toapper}. A major open question is to obtain an $\Omega(1)$ competitive-ratio using posted pricing  for submodular/XOS valuations in the secretary model \cite{AS-FOCS19,AKS-SODA21}.
For general valuations,  $\Theta(1/s)$ competitive-ratio is achievable when each request requires at most $s$ resources \cite{CCFPW-MP23,DFKL-SICOMP20,MRST-OR20,MMZ-arXiv24}.

\subsection{Paper Organization}

In \Cref{sec:exp-pricing-estimates}, we provide our  Exponential Pricing algorithm for stochastic Online Resource Allocation, assuming  ``good estimates'' of certain parameters of the $n$ request distributions  are provided. 
In \Cref{sec:single-sample}, we show how to obtain these good estimates using a single sample from each of the $n$ distributions, and prove \Cref{thm:intro-main}. 

We then apply the Exponential Pricing algorithm to the Byzantine Secretary model and the Prophet with Augmentations model in \Cref{sec:robustness} and \Cref{sec:sub-augmentation}, respectively, and show that it is robust against both outliers and augmentations.


\newcommand{\supx}{(x)}

\section{Exponential Pricing given Good Estimates}
\label{sec:exp-pricing-estimates}

In this section, we give and analyze the Exponential Pricing algorithm in the Prophet model. 
To recap, here $n$ requests arrive sequentially, where the $i$-th request $\gamma_i = (\val_i,a_i, \Theta_i)$ is independently drawn from a distribution $\calD_i$.
Notably, we show that the Exponential Pricing algorithm
requires only specific resource consumption estimates
from the distributions $\calD_1, \ldots, \calD_n$ to obtain a nearly-optimal solution.
This will be useful, as we show in the following \Cref{sec:single-sample} that these estimates can be learned using a single sample from each distribution, thus proving  \Cref{thm:intro-main}. Before we state the main result of this section, we will first outline some useful preliminaries and definitions.

\subsection{Linear Program and Good Estimates} \label{sec:LPandGoodEstimates}

To define the required estimates, we first present a linear program (a.k.a. configuration LP) that provides an upper bound on the expected hindsight optimum (see~\eqref{eq:hindsight-opt}). 
For concreteness, we assume that there are $K$ different request tuples (types) $\gamma_{i,1}, \ldots, \gamma_{i,K}$ for the $i^{\text{th}}$ request, which equals $\gamma_{i,k} = (\val_{i,k}, a_{i,k}, \Theta_{i,k})$ with probability $p_{i, k}$, where  $\sum_{k=1}^K p_{i, k} = 1$.
We define variables $x_{i, k, \theta}$ 
to denote the fractional allocation to  decision $\theta \in \Theta_{i, k}$ for $\gamma_{i,k}$ being the $i^{\text{th}}$ request. The LP says:
\begin{align}
    \text{maximize } \quad \quad  &  \textstyle  \quad\sum_{i \in [n]}  \sum_{k=1}^{K} \sum_{\theta \in \Theta_{i, k}} \val_{i,k}(\theta) \cdot x_{i, k, \theta},\notag \\
    \text{s.t.}\qquad\qquad & \textstyle \quad   \sum_{i \in [n]} \sum_{k=1}^{K}\sum_{\theta \in \Theta_{i, k}}  a_{i,k}(\theta) \cdot x_{i, k, \theta}  \leq \B \tag{$\text{LP}_{\textsc{UB}}$} \label{program:ora-ub}  \\
    \forall i \in [n], k \in [K] &  \textstyle \quad \sum_{\theta \in \Theta_{i,k}} x_{i, k, \theta} \leq p_{i, k} \notag \\
    \forall i \in [n],  k \in [K], \theta \in \Theta_{i, k},&  \quad  0 \leq x_{i, k, \theta} \leq 1. \notag
\end{align}

        { 
        For explicitly given $n$  request distributions we can solve this LP in $\poly(n,m,K)$ time, independent of the size of the decision sets $\{\Theta_i\}_i$, assuming  \emph{demand oracle}\footnote{For any prices  $\bprice \in \R_{\geq 0}^m$,  the demand oracle for request 
        $\gamma_i = (\val_i, a_i, \Theta_i)$
        returns  $\arg\!\max_{\theta  \in \Theta_i} \big( v_i(\theta) - \langle\bprice, a_i(\theta)\rangle \big)$ .} access to the requests.   This is a standard technique in combinatorial auctions, e.g., see \cite[Chapter 11]{NRTV-Book07} for details. 
        In general, however, the number of request tuples $K$ could be arbitrarily large. Fortunately, our single-sample algorithm will not need to solve \ref{program:ora-ub}, and will only rely on the  observation that this LP provides an upper bound on the optimum.
        }

\begin{Observation}\label{lem:hindsight-opt-ub}
Given an instance of online resource allocation in the prophet model, \ref{program:ora-ub} gives an upper bound to the value of the expected hindsight optimum, $\opt$, as defined in \eqref{eq:hindsight-opt}.
\end{Observation}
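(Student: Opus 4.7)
The plan is to exhibit a feasible solution to \ref{program:ora-ub} whose LP objective equals $\opt$; since \ref{program:ora-ub} is a maximization, this suffices.

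First, I would fix the random realizations of the requests. For any realization, let $(\theta_1^\ast,\ldots,\theta_n^\ast)$ denote an optimal hindsight solution (breaking ties arbitrarily but measurably), so $\sum_i a_i(\theta_i^\ast)\leq \B$ pointwise. Now define
\[
x_{i,k,\theta} \;:=\; \Pr\bigl[\gamma_i=\gamma_{i,k}\ \text{and}\ \theta_i^\ast=\theta\bigr]
\]
for every $i\in[n]$, $k\in[K]$, and $\theta\in\Theta_{i,k}$. Non-negativity and the upper bound $x_{i,k,\theta}\leq 1$ are immediate.

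Next, I would verify the two nontrivial constraints. For the type-probability constraint, summing over $\theta$ telescopes the event:
\[
\sum_{\theta\in\Theta_{i,k}} x_{i,k,\theta} \;=\; \Pr[\gamma_i=\gamma_{i,k}] \;=\; p_{i,k}.
\]
For the budget constraint, I would use linearity of expectation, the indicator representation $x_{i,k,\theta}=\E[\mathbbm{1}\{\gamma_i=\gamma_{i,k},\,\theta_i^\ast=\theta\}]$, and the pointwise feasibility of the hindsight solution:
\[
\sum_{i,k,\theta} a_{i,k}(\theta)\cdot x_{i,k,\theta}
\;=\;\E\!\left[\sum_{i\in[n]} a_i(\theta_i^\ast)\right]
\;\leq\; \B,
\]
coordinate-wise. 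A completely analogous computation shows that the LP objective of this $x$ equals $\E[\sum_i v_i(\theta_i^\ast)] = \opt$, as defined in \eqref{eq:hindsight-opt}. Since $x$ is feasible for \ref{program:ora-ub} and achieves objective value $\opt$, the optimal LP value is at least $\opt$, proving the observation.

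There is no real obstacle here; the only subtlety worth flagging is that $(\theta_i^\ast)$ is a random object depending on the full realization $(\gamma_1,\ldots,\gamma_n)$, so one must be a little careful to use the \emph{joint} distribution (not marginals) when defining $x_{i,k,\theta}$. The independence of the $\gamma_i$'s is not needed for this direction—only the pointwise feasibility of the hindsight optimum and linearity of expectation.
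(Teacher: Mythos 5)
Your proof is correct, and since the paper states this as an Observation without proof, your argument is exactly the standard/folklore one the authors had in mind: define $x_{i,k,\theta}$ as the joint probability that request $i$ realizes type $k$ and the hindsight optimum plays $\theta$, then check feasibility and objective by linearity of expectation and pointwise feasibility of the hindsight solution. Your flag that independence of the $\gamma_i$'s is irrelevant here (and that one must use the joint distribution, not marginals, since $\theta_i^*$ depends on the whole realization) is the right subtlety to note.
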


We now define the following quantities for any feasible solution $x = \{x_{i, k, \theta}\}$ to \ref{program:ora-ub}:
\begin{itemize}
    \item   $\objx :=~ \sum_{i, k, \theta} \val_{i, k}(\theta)\cdot x_{i,k,\theta}$, which denotes the objective value of $x$.
    \item  $ \cons_{i, j}^*{\supx} :=~ \sum_{k = 1}^K \sum_{\theta \in \Theta_{i, k}} \left(a_{i, k}(\theta)\right)_j \cdot x_{i, k, \theta}$, which denotes the amount of resource $j$ consumed by request $i$ under solution $x$, for all $i \in [n]$ and $j \in [m]$.  
\end{itemize}
We use $\lp\left(\B; \{\calD_i\}_{i \in S}\right)$ to denote the corresponding linear program for a subset $S \sse [n]$ of the requests and the corresponding budget vector $\B$ (obtained by replacing $i \in [n]$ in \ref{program:ora-ub} with $i \in S$).
Consequently, \ref{program:ora-ub} is equivalent to $\lp\left(\B; \{\calD_1, \ldots, \calD_n\}\right)$.
When $S = [n]$, we will drop the argument $\{\calD_1, \ldots, \calD_n\}$ from $\lp(\cdot; \cdot)$. 

The main result of this section shows that, given ``good'' estimates of $\opt$ and $\big\{ \cons^*_{i, j}{\supx} \big\}$, there exists a pricing algorithm that obtains a value  nearly equal to $\objx$, while respecting the budget constraints.
The following definition formalizes what we mean by good estimates.

\begin{Definition}[Good Estimates]
\label{def:goodestimates}
    We say that $\esopt$ and   $\left\{\widehat{\cons}_{i, j}\right\}$   are \emph{good estimates} of $\opt$ and  $\big\{\cons^*_{i, j}{\supx}\big\}$ 
    with a  known parameter   $\beta \geq 1$  
    if     the following holds:
    \begin{itemize}
        
        \item for all $i, j$, we have
        $ 0 \,\, \leq \,\, \widehat{\cons}_{i, j} \,\, \leq \,\,  1 \quad \text{and} \quad 
        \big|\sum_{\ell \leq i}\widehat \cons_{\ell, j} - \sum_{\ell\leq i}\cons_{\ell, j}^*{\supx}\big| \,\, \leq \,\, \frac{\epsilon^2 \cdot B_j}{16 \log (nm\beta/\epsilon)}.$

        \item $\esopt \leq \opt$ and  $\pr_{(\gamma_1, \cdots, \gamma_n)}\left[\sum_{i \in [n]} \one \big[\max_{\theta \in \Theta_i} \val_i(\theta) \geq \beta \cdot \esopt\big] > \frac{10}{\epsilon}\right] \leq \epsilon$, which means that  the number of high-value requests  with respect to $\esopt$ is $O(1/\epsilon)$. 
    \end{itemize}
\end{Definition}

%
%
The last condition above is natural as it is implied by the stronger condition requiring $\opt \leq \beta\cdot \esopt$. Given good estimates, the following is the main result of this section.
\begin{Theorem}\label{thm:exp-pricing-estimates}
    Let $\epsilon > 0$ be an error parameter, and let $x = \{x_{i, k, \theta}\}$ be a feasible solution to $\lp\left(\B\cdot(1-\epsilon)\right)$.
    Furthermore, suppose that we are given \emph{good estimates} $\esopt$ and $\big\{\widehat{\cons}_{i, j}\big\}$ of $\opt$ and $\big\{\cons_{i, j}^*{\supx}\big\}$ with parameter $\beta \geq 1$.   
    Then, there exists a posted pricing algorithm  for online resource allocation in the prophet model that obtains
    expected total value at least $(1-\epsilon)\cdot \objx - 7\epsilon \cdot \opt$, provided that $B_{j} = \Omega\left(\log(nm \beta /\epsilon)/ \epsilon^{2} \right)$ for every resource $j \in [m]$. 
\end{Theorem}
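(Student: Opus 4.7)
The plan is to implement the Exponential Pricing algorithm with $\initprice \approx \esopt/(nm)$ and $\delta \approx \log(nm\beta/\epsilon)/(\epsilon B)$, where $B := \min_j B_j$. For the $i$-th request we post $\bprice_i \in \R^m_{\geq 0}$ with coordinate prices $\lambda_{i, j} := \initprice \cdot \exp(\delta Z_{i, j})$, where $Z_{i, j} := \sum_{\ell < i}\bigl(a_\ell(\theta_\ell)_j - \widehat{\cons}_{\ell, j}\bigr)$ is the running gap between actual and estimated consumption of resource $j$; the request greedily picks $\theta_i$ maximizing $v_i(\theta) - \langle \bprice_i, a_i(\theta)\rangle$. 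The analysis has two pillars: (i) with probability $\geq 1-\epsilon$ no resource is ever exhausted, and (ii) on this good event, a greedy comparison against the LP combined with a ``no-regret against $\mathbf{0}$'' property yields the claimed value bound.

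\textbf{Budget feasibility.} The first step is to show that $Z_{i,j} \leq \epsilon B_j/2$ for every $i,j$, except on a bad event of probability $\leq \epsilon$. Call a request $i$ \emph{high-valued} if $\max_\theta v_i(\theta) \geq \beta \esopt$; by the good-estimates condition there are at most $10/\epsilon$ such requests with probability $\geq 1-\epsilon$. The key observation is that whenever $\lambda_{i, j} > \beta \esopt$, every non-high-valued request must satisfy $\lambda_{i, j} \cdot a_i(\theta_i)_j \leq v_i(\theta_i) \leq \beta \esopt$ (greedy beats the null action), forcing $a_i(\theta_i)_j \leq \beta\esopt/\lambda_{i, j}$. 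Hence, once $Z_{i, j}$ crosses the threshold $\tau := \delta^{-1}\log(nm\beta)$ at which $\lambda_{i, j} = \beta\esopt$, the per-step growth from low-valued requests decays exponentially in $Z-\tau$, and a discrete Gr\"onwall/geometric-sum argument bounds this contribution above $\tau$ by $\OTild(1)/\delta \leq \epsilon B_j/4$. The at most $10/\epsilon$ high-valued requests can each add at most $1$ to $Z_{i, j}$. Combining with the partial-sum estimate bound $\sum_\ell \widehat{\cons}_{\ell, j} \leq B_j(1-\epsilon/2)$, we conclude $\sum_\ell a_\ell(\theta_\ell)_j \leq B_j$, so every algorithm decision is actually feasible.

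\textbf{Value guarantee via no-regret.} For the value bound, the greedy rule gives $v_i(\theta_i) - \langle \bprice_i, a_i(\theta_i)\rangle \geq v_i(\theta) - \langle \bprice_i, a_i(\theta)\rangle$ for every $\theta \in \Theta_i$. Averaging against an LP-sampled decision $\tilde\theta_i$ (drawn with probability $x_{i,k,\theta}/p_{i,k}$ conditional on type $k$), and using that $\bprice_i$ is measurable with respect to the history $\gamma_1,\ldots,\gamma_{i-1}$ and hence independent of the LP randomness and of $\gamma_i$, I obtain
\[
\textstyle \E\!\left[\sum_i v_i(\theta_i)\right] \;\geq\; \objx + \E\!\left[\sum_i \langle \bprice_i, a_i(\theta_i) - \cons^*_i{(x)}\rangle\right].
\]
I split the regret term as $\sum_i \langle \bprice_i, a_i(\theta_i) - \widehat{\cons}_i\rangle + \sum_i \langle \bprice_i, \widehat{\cons}_i - \cons^*_i{(x)}\rangle$. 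The first piece is the ``no-regret against $\mathbf 0$'' statement: telescoping the potential $\Phi_i := \sum_j \lambda_{i, j}$ using $e^{\delta y} \leq 1 + \delta y + (\delta y)^2$ for $|\delta y| \leq 1$ gives
\[
\textstyle \sum_i \langle \bprice_i, a_i(\theta_i) - \widehat{\cons}_i\rangle \;\geq\; -\Phi_1/\delta \;-\; \delta \sum_{i, j}\lambda_{i, j}(a_i(\theta_i)_j - \widehat{\cons}_{i, j})^2,
\]
where $\Phi_1/\delta = m\initprice/\delta \leq \epsilon\opt$ by our parameter choices. The second piece is handled by summation by parts applied coordinate-wise, using the partial-sum estimate bound together with a total-variation bound on $\lambda_{\cdot, j}$.

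\textbf{Main obstacle.} The principal technical difficulty is controlling the quadratic regret $\delta\sum_{i, j} \lambda_{i, j}(a_i(\theta_i)_j - \widehat{\cons}_{i, j})^2$ and the analogous total-variation factor in the summation by parts; the naive bounds $(\cdot)^2 \leq 1$ or $(\cdot)^2 \leq a + \widehat{\cons}$ are too lossy because $\lambda_{i, j}$ can grow polynomially above $\initprice$. The central observation, dual to the budget-feasibility argument, is that whenever $\lambda_{i, j}$ is significantly above $\initprice$ the greedy rule forces $a_i(\theta_i)_j \leq \beta\esopt/\lambda_{i, j}$, which kills the large-price contribution to the quadratic sum. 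Combining this per-coordinate bound with $\sum_i a_i(\theta_i)_j \leq B_j$, the fact that $\E[\sum_i v_i(\theta_i)] \leq \opt$ on the budget-feasibility event, and a separate accounting of the at most $10/\epsilon$ high-valued requests, I expect to obtain $\E[\delta \sum_{i,j}\lambda_{i,j} (a_i(\theta_i)_j - \widehat{\cons}_{i,j})^2] \leq O(\epsilon\opt)$ and a similar $O(\epsilon\opt)$ bound on the Abel-summation remainder. Summing all error terms yields $\E[\sum_i v_i(\theta_i)] \geq (1-\epsilon)\objx - 7\epsilon\opt$ as claimed, provided $B_j = \Omega(\log(nm\beta/\epsilon)/\epsilon^2)$.
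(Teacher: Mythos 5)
Your high-level blueprint matches the paper's: set coordinate-wise exponential prices, show budgets are not exhausted with probability $1-\epsilon$, use the best-response inequality against an LP-sampled decision, and close with a ``no-regret to $\mathbf 0$'' telescoping of the potential $\Phi_i = \sum_j\lambda_{i,j}$. But two load-bearing steps deviate from the paper's route, and as written both have real gaps.

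\textbf{Budget feasibility.} The paper's Algorithm~\ref{alg:prsp} has an \emph{explicit termination rule}: it stops as soon as $\sum_{\ell\leq i}\calg_{\ell,j} \geq \sum_{\ell\leq i}\widehat\cons_{\ell,j} + \tfrac12\epsilon B_j$ for some $j$. This makes feasibility \emph{deterministic} (a one-line check, using $\sum_i\cons^*_{i,j}(x)\leq(1-\epsilon)B_j$ plus the estimate error and one extra unit), and the whole probabilistic work is then confined to \Cref{lemma:prob-ea-small}, which shows termination is \emph{rare} by a contradiction: conditioned on termination, the low-valued requests near the stopping time face prices above $\initprice\cdot(nm\beta/\epsilon)^2$, forcing the collected value past $n\beta\esopt$, which is impossible. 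Your version has no stopping rule. Instead you argue $Z_{i,j}\leq \epsilon B_j/2$ via a Gr\"onwall argument above the threshold $\lambda_{i,j}=\beta\esopt$, ``except on a bad event of probability $\leq\epsilon$.'' On that bad event your algorithm may attempt infeasible allocations, and the best-response inequality you rely on later simply does not apply to whatever fallback you take once the budget is gone. You need either the paper's stopping rule (in which case your Gr\"onwall lemma becomes unnecessary) or a deterministic version of the Gr\"onwall bound, which you cannot have because $10/\epsilon$ high-valued requests is only a probabilistic guarantee.

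\textbf{The error term.} Both you and the paper prove the same telescoping inequality, but the paper applies it to the \emph{shadow} price $\accprice_{i,j}=\initprice\exp(\delta\sum_{\ell<i}(\calg_{\ell,j}-\cons^*_{\ell,j}(x)))$, not the actual price $\price_{i,j}$. That choice is what makes the analysis close: the error term is then $4\delta\sum_i\accprice_{i,j}\,r^+_{i,j}$ with $r_{i,j}=\calg_{i,j}-\cons^*_{i,j}(x)$, and since $\cons^*_{i,j}(x)\geq 0$ one has $r^+_{i,j}\leq \calg_{i,j}$, so the error is bounded by $4\delta\cdot\mathsf{Revenue}\leq\epsilon\cdot\opt$. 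The relation $\price_{i,j}/\accprice_{i,j}\in[1-\epsilon,1+\epsilon]$ (forced by the good-estimates condition) then converts the inequality back to the actual prices with only another $O(\epsilon)$ multiplicative slack, and there is no Abel/summation-by-parts residual at all. You instead apply the telescoping directly to $\price_{i,j}$ with $r_{i,j}=\calg_{i,j}-\widehat\cons_{i,j}$, which leaves you with $\delta\sum_{i,j}\price_{i,j}(\calg_{i,j}-\widehat\cons_{i,j})^2$, and the bound $(\cdot)^2\leq|\cdot|\leq \calg_{i,j}+\widehat\cons_{i,j}$ contains a $\delta\sum_i\price_{i,j}\widehat\cons_{i,j}$ piece. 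Your ``large price forces small $a$'' observation controls $\delta\sum_i\price_{i,j}\calg_{i,j}$, but $\widehat\cons_{i,j}$ is a \emph{fixed, price-independent} estimate, so the greedy rule gives you nothing on that piece, and with $\price_{i,j}$ allowed to reach $\initprice\cdot(nm\beta/\epsilon)^4$ the naive bound $\delta\,\lambda_{\max}B_j$ is far too large. The same difficulty is exactly what your Abel-summation remainder encounters via the total-variation factor. You flag this as the ``main obstacle'' and say you ``expect to obtain'' the bound, but I do not see how to make it go through without essentially rediscovering the paper's move of switching to the $\accprice$ prices. In short: swap to the shadow price $\accprice$ before telescoping and the quadratic/Abel machinery disappears; and add the stopping rule to make feasibility unconditional.
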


On setting $x = \{x_{i, k, \theta}\}$ to the optimal solution to $\lp\left(\B\cdot(1-\epsilon)\right)$,  we obtain the following guarantee on the performance of Exponential Pricing for online resource allocation in the prophet model with known distributions.

\begin{Corollary}\label{cor:exp-pricing}
    Given an error parameter $\epsilon >0$, there exists a posted pricing algorithm  for online resource allocation in the stochastic model with $n$ non-identical distributions that obtains
    expected total value at least $\big(1 - O(\epsilon)\big)$ times the expected hindsight optimum, provided that $B_{j} = \Omega\left({\log(mn/\epsilon)}\cdot{\epsilon^{-2}}\right)$ for every resource $j \in [m]$.
\end{Corollary}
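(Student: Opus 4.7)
The plan is to instantiate the Exponential Pricing rule from~\eqref{eq:expPricingHighLevel}, replacing the expected prefix consumptions $\sum_{\ell<i}\E[a_\ell(\theta_\ell^*)]$ by the supplied estimates $\sum_{\ell<i}\widehat{\cons}_\ell$, and to set the parameters $\initprice\approx\esopt/(nm)$ and $\delta\approx\log(nm\beta/\epsilon)/(\epsilon B)$ where $B=\min_j B_j$. Each request is served by the greedy decision $\theta_i=\arg\max_\theta(v_i(\theta)-\langle\bprice_i,a_i(\theta)\rangle)$, reverting to the null decision if acceptance would overflow any budget. The analysis rests on two pillars: a greedy-utility comparison with a randomized interpretation of the LP solution $x$, which relates the algorithm's value to $\obj(x)$ minus a price-weighted consumption mismatch, and a potential-function argument bounding this mismatch together with a ``never-stop'' property guaranteeing that the budget-overflow fallback is essentially inactive.

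For the first pillar, I would interpret $x$ as the randomized offline benchmark in which, conditional on request type $(i,k)$, one draws decision $\theta_i^*$ from $x_{i,k,\cdot}/p_{i,k}$ (with null the residual). Then $\E[v_i(\theta_i^*)]=\sum_{k,\theta}v_{i,k}(\theta)x_{i,k,\theta}$, $\E[a_i(\theta_i^*)]=\cons_i^*(x)$, and $\theta_i^*$ is independent of $\bprice_i$ (which depends only on past draws). The pointwise greedy property gives
\[
\textstyle\sum_i v_i(\theta_i)\;\ge\;\sum_i v_i(\theta_i^*)+\sum_i\langle\bprice_i,\,a_i(\theta_i)-a_i(\theta_i^*)\rangle,
\]
so after taking expectations and using independence, it suffices to lower bound $\E[\sum_i\langle\bprice_i,\,a_i(\theta_i)-\cons_i^*(x)\rangle]$ by roughly $-O(\epsilon)(\opt+\obj(x))$.

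For the second pillar, I would introduce the potential $\Phi_i=\sum_j\exp(\delta Y_i^j)$ with $Y_i^j=\sum_{\ell\le i}(a_\ell(\theta_\ell)_j-\widehat{\cons}_{\ell,j})$. Applying $e^x\le 1+x+x^2$ coordinate-wise and telescoping yields
\[
\textstyle\sum_i\langle\bprice_i,\,a_i(\theta_i)-\widehat{\cons}_i\rangle\;\ge\;\tfrac{\initprice}{\delta}(\Phi_n-m)-\delta\sum_{i,j}\bprice_{i,j}(a_i(\theta_i)_j-\widehat{\cons}_{i,j})^2.
\]
Because $\Phi_n\ge 0$, the first term is at least $-\initprice m/\delta$, which is negligible by the choice of $\initprice$ and $\delta$; the second term can be absorbed into an $O(\epsilon)$-fraction of the algorithm's revenue using $B\ge\log(nm\beta/\epsilon)/\epsilon^2$. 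Then Abel summation, combined with the good-estimates bound $|\sum_{\ell\le i}(\widehat{\cons}_{\ell,j}-\cons_{\ell,j}^*(x))|\le\epsilon^2 B_j/(16\log(nm\beta/\epsilon))$, converts the left-hand side into the $\cons_i^*(x)$ version at the cost of another additive $O(\epsilon\opt)$-term.

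The final and most delicate step is showing that the budget-overflow fallback is essentially inactive, so that the pointwise greedy inequality really holds for all $n$ requests. The idea is that exponential prices self-regulate: whenever $Y_i^j$ reaches order $\epsilon B_j/2$, the price $\bprice_{i,j}$ exceeds $\initprice\cdot e^{\delta\epsilon B_j/2}\gg\beta\esopt/\epsilon$, so low-value requests (with $\max_\theta v_i(\theta)<\beta\esopt$) can contribute only a vanishing amount to further consumption of resource $j$; by the second good-estimate condition, outside an event of probability $\epsilon$ there are at most $10/\epsilon$ high-value requests, which collectively add at most $10/\epsilon\ll B_j$. Combined with the two pillars above, this yields $\E[\alg]\ge(1-\epsilon)\obj(x)-7\epsilon\opt$. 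I expect the main obstacle to be coordinating the four error sources -- the $\initprice m/\delta$ constant, the $\delta^2$ potential error, the Abel discrepancy, and the high-value tail -- so that each one fits inside the $7\epsilon\cdot\opt$ slack while preserving the clean greedy comparison of the first pillar.
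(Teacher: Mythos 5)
Your proposal essentially re-derives Theorem~\ref{thm:exp-pricing-estimates} from scratch (greedy utility decomposition, potential/no-regret bound, never-stop event), whereas the paper proves Corollary~\ref{cor:exp-pricing} in one line by instantiating that theorem: since the distributions are known, take $x$ to be the optimal solution of $\lp((1-\epsilon)\B)$, so $\obj(x)\geq(1-\epsilon)\opt$, and feed in the \emph{exact} values $\esopt=\opt$ and $\widehat\cons_{i,j}=\cons^*_{i,j}(x)$, which satisfy the good-estimates conditions with $\beta=1$; the theorem's $(1-\epsilon)\obj(x)-7\epsilon\opt$ is then $(1-O(\epsilon))\opt$, and the budget requirement $\Omega(\log(nm\beta/\epsilon)\epsilon^{-2})$ becomes exactly the stated $\Omega(\log(nm/\epsilon)\epsilon^{-2})$. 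Your write-up stops at the intermediate bound $\E[\alg]\geq(1-\epsilon)\obj(x)-7\epsilon\opt$ and never performs this final specialization: you do not say how $x$ is chosen (hence do not relate $\obj(x)$ to $\opt$), and you do not note that $\beta=1$ here because the distributions are known, which is precisely what makes the budget bound collapse to the one in the corollary. Without these two observations the argument proves a version of the theorem, not the corollary.

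A smaller technical issue in your second pillar: the error term $-\delta\sum_{i,j}\bprice_{i,j}\,r_{i,j}^2$ cannot be absorbed into the revenue directly. When $r_{i,j}<0$ one only has $r_{i,j}^2\leq|r_{i,j}|\leq\widehat\cons_{i,j}$, which is not controlled by $a_i(\theta_i)_j$. The paper's Lemma~\ref{lem:noRegret} deals with this by first bounding $e^x-1\leq x+x\cdot\mathsf{sign}(x)\cdot|x|$ and then subtracting $\delta\accprice_{i,j}r_{i,j}$ from both sides, which makes the negative branch cancel and leaves an error term involving only $r_{i,j}^+\leq\calg_{i,j}$, i.e.\ genuine revenue. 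Your bound can be salvaged by moving the $\widehat\cons$ contributions of the quadratic term to the left-hand side and rearranging, but that step needs to be made explicit; as written, the claim that the quadratic term is ``absorbed into an $O(\epsilon)$-fraction of the algorithm's revenue'' is not yet justified.
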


We note two things about this result. 
First,  this is the first \emph{posted pricing} algorithm that gets a $(1 - \epsilon)$-approximation for online resource allocation with non-identical distributions when the budgets are large. 
Earlier pricing based results either lose a constant factor in the approximation guarantee~\cite{feldman2014combinatorial, DFKL-SICOMP20}, or require the additional assumption that the request distributions are identical~\cite{DevenurHayes-EC09, AWY-OR14, devanur2019near}.
Second, by setting parameters appropriately, we can slightly improve the budget bound in \Cref{cor:exp-pricing} to obtain  $(1 - O(\epsilon))$-approximation as long as  $B_{j} = \Omega\big({\log(m/\epsilon)}/{\epsilon^{2}}\big)$ for every $j \in [m]$. 
This is nearly the best-possible budget bound achievable due to our $\Omega\big({\log(m)}/{\epsilon^{2}}\big)$ budget lower bound in \Cref{sec:hardnessExample}.

In the remainder of this section, we state our Exponential Pricing algorithm and prove \Cref{thm:exp-pricing-estimates}. 
We will assume that  $\epsilon \in (0,1/2]$ is fixed, since \Cref{thm:exp-pricing-estimates} is trivially true when $\epsilon \geq 1/2$.

\subsection{Exponential Pricing Algorithm} The algorithm uses the estimates on the cumulative consumption (with respect to some unknown solution $\{x_{i, k, \theta}\}$)
and its past decisions to set prices on each resource.
More concretely, the price of a resource before the $i$-th request arrives is directly proportional to the difference between the algorithm's cumulative allocation and the estimated cumulative allocation for that resource for the first $i-1$ requests (see \eqref{eq:learning-prices}).

Given the prices, the algorithm makes best-response decisions, i.e, the $i$-th decision $\theta_i$ equals $\arg\!\max_{\theta \in \Theta_i} \left( \val_i(\theta) - \langle\bprice_i, a_i(\theta)\rangle \right)$ where $(\val_i, a_i, \Theta_i)$ denotes the $i^{\text{th}}$ request and $\bprice_i$ denotes the prices before the $i^{\text{th}}$ arrival. 
We use $\calg_{i, j}$ to denote the consumption of resource $j$ by the $i$-th request; so, $ (\calg_{i,1}, \cdots, \calg_{i,m}) =  a_i(\theta_i)$.
Lastly, the algorithm terminates in the $i$-th step if for any resource $j \in [m]$ the budget consumption $ \sum_{\ell \leq i} \calg_{\ell, j} \,\, \geq \,\,  \sum_{\ell \leq i}   \widehat{\cons}_{\ell, j} + \frac{1}{2}\epsilon\cdot B_j$, i.e.,  resource $j$  is   consumed significantly over its estimated allocation. See \Cref{alg:prsp} for a formal description.


\begin{algorithm}
\caption{\textsc{Exponential Pricing via Estimates}}
\label{alg:prsp}
\begin{algorithmic}[1]
\State \textbf{input:} instance $\mathcal{I}$, budgets $\{B_j\}_{j \in [m]}$, error parameter $\epsilon$, estimates $\esopt$, $\left\{\widehat \cons_{i, j}\right\}$
\For{$i = 1, \ldots n$}
\State $(\val_i, a_i, \Theta_i) \gets $ request $i$
\State $\bprice_i \gets $ vector denoting item prices before request $i$ arrives;
for $j \in [m]$, we set %
\begin{equation}\label{eq:learning-prices}
\textstyle  \price_{i,j} \gets \initprice \cdot \exp\left(\delta \cdot \sum_{\ell \, < \, i} \left( \calg_{\ell, j} - \widehat \cons_{\ell, j}\right)\right)
\end{equation}
where $\initprice \gets \esopt \cdot \frac{4\log (nm\beta/\epsilon)}{nm}$ and $\delta \gets \frac{8 \log(nm\beta/\epsilon)}{\epsilon \cdot B_j}$.
\State  $\theta_i = \arg\max_{\theta \in \Theta_i} \left( \val_i(\theta) - \langle\bprice_i, a_i(\theta) \rangle \right)$, and the algorithm gains $\val_i(\theta_i)$.
\State Set vector $ (\calg_{i,1}, \cdots, \calg_{i,m}) =  a_i(\theta_i)$.
\If{there exists resource $j \in [m]$ such that $ \sum_{\ell \leq i} \calg_{\ell, j} \,\, \geq \,\, \sum_{\ell \leq i}\widehat \cons_{\ell, j} + \frac{1}{2}\epsilon\cdot B_j$}
\State {\bf terminate}
\EndIf
\EndFor
\end{algorithmic}
\end{algorithm}

\subsection{Analyzing the Algorithm}
We analyze the total value obtained by \Cref{alg:prsp} and prove \Cref{thm:exp-pricing-estimates}. 
Let $\alg$ denote the random total value obtained by our algorithm. To prove \Cref{thm:exp-pricing-estimates}, we need to show that $\E[\alg] \geq (1-\epsilon)\cdot \objx - 7\epsilon \cdot \opt$, when we are given good estimates of $\opt$ and $\big\{ \cons^*_{i, j}{\supx} \big\}$.

\medskip
\noindent {\bf Proof Overview.} To bound the value obtained by our algorithm, we define an event $\EA$ (short for ``excess allocation'')
to capture its termination condition, which roughly  captures the situation when the budget for one of the resources gets exhausted.
We show (in \Cref{lemma:prob-ea-small}) that the probability of event $\EA$ is negligible, and hence the analysis reduces to bounding the value obtained in the case when all resources are available; that is, when $\EA$ does not occur.

The analysis then proceeds by applying the best-response property to show (in \eqref{eq:for-augmentation}) that
    \[
    \textstyle\E[\alg] ~\geq~  \E \big[ \sum_{i = 1}^\stoptime \val_i(\theta^*_i) \big]  - {\E\big[\sum_{i = 1}^{\stoptime} \langle \bprice_i,   a_i(\theta^*_i) - a_i(\theta_i)  \rangle \big]},
    \]
    where $\stoptime$ denotes the random stopping time and  $\{\theta^*_i\}_{i \in [n]}$ denote  decisions taken according to the solution $\{x_{i, k, \theta}\}$.
    We first show that the first term $\E[ \sum_{i = 1}^\stoptime \val_i(\theta^*_i)] \approx (1-\epsilon)\cdot \objx$.  
    Next, we show that the second term  can be interpreted as ``loss in revenue'' of the algorithm and is $\approx 0$. 
    The heart of its proof is the idea  that  
    the   Exponential Pricing algorithm (even when run with estimates) has a no-regret property w.r.t.\,the $\mathbf{0}$ price vector.

\begin{proof}[Proof of \Cref{thm:exp-pricing-estimates}.]
We first verify that \Cref{alg:prsp} is feasible; i.e., it always returns a solution that satisfies the budget constraints.
We note that \Cref{alg:prsp} continues for request $i$ only when $\sum_{\ell < i} \calg_{\ell, j} \,\, <  \,\,  \sum_{\ell < i}\widehat\cons_{\ell, j} + \frac{1}{2}\epsilon B_j$ holds for all $j \in [m]$. 
Thus, the algorithm allocates at most 
\begin{align*}  
        \sum_{i=1}^n \calg_{i, j} ~<~  \sum_{i=1}^{n-1}\widehat\cons_{i,j} + \frac{1}{2}\epsilon \cdot B_j + \calg_{n, j}
        ~\leq~ \sum_{i=1}^n\cons_{i, j} + \frac{\epsilon^2 B_j}{16 \log (nm\beta/\epsilon)} + \frac{1}{2}\epsilon \cdot B_j + 1  \,\, \leq \,\,  B_j,
\end{align*}
for any resource $j \in [m]$. 
Above, the second inequality follows from the assumptions on the estimates $\widehat \cons_{i, j}$ and that the maximum possible allocation is $1$, and the final inequality follows from the fact that $x$ is a feasible solution to $\lp(\B\cdot(1-\epsilon))$ and $B_{j} = \Omega\left(\log(nm \beta /\epsilon)\cdot \epsilon^{-2} \right)$.

Next, we need to show that \Cref{alg:prsp} obtains good total value. As mentioned earlier, we do this by 
defining an event $\EA$ (short for ``excess allocation'')
which corresponds to
Algorithm~\ref{alg:prsp} terminating for some $i \in [n]$ and $j \in [m]$; that is,
after some request $i$ makes its selection, the allocation for some resource $j \in [m]$ exceeds the estimated budget consumption by $\frac12 \epsilon B_j$.
The following lemma 
shows that we can upper bound the probability of event $\EA$,
which in turn implies that our algorithm does not trigger the termination condition.
In particular, we prove the following.

\begin{Lemma}
\label{lemma:prob-ea-small}
    $\pr(\EA) \leq \epsilon$.
\end{Lemma}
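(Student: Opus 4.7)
The plan is to show that whenever $\EA$ triggers, an unlikely event on the request realizations must have occurred. Suppose $\EA$ is triggered at some time $i^*$ on some resource $j^*$; writing $Y_i := \sum_{\ell \le i}(\calg_{\ell,j^*} - \widehat\cons_{\ell,j^*})$ for the running over-consumption, this means $Y_{i^*} \ge \tfrac{1}{2}\epsilon B_{j^*}$. I will call a request $i$ \emph{high-value} if $\max_{\theta\in\Theta_i} \val_i(\theta) \ge \beta\cdot\esopt$ and \emph{low-value} otherwise, and let $H$ denote the set of high-value indices. By the second bullet of \Cref{def:goodestimates}, the event $\calH := \{|H|\le 10/\epsilon\}$ has probability at least $1-\epsilon$, so it suffices to prove that $\EA$ is impossible conditional on $\calH$; this yields $\pr[\EA]\le \pr[\calH^c]\le \epsilon$.

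The first step is to identify a regime in which the price of resource $j^*$ stays enormous. Because $Y_i$ is not monotone --- it decreases at any step where the algorithm allocates nothing yet $\widehat\cons_{i,j^*}>0$ --- I will let $i_0\le i^*$ be the \emph{largest} index with $Y_{i_0-1} < \tfrac{1}{4}\epsilon B_{j^*}$. By this choice, $Y_{\ell-1}\ge \tfrac{1}{4}\epsilon B_{j^*}$ for every $\ell\in\{i_0{+}1,\dots,i^*\}$, so substituting into \eqref{eq:learning-prices} with $\delta=8\log(nm\beta/\epsilon)/(\epsilon B_{j^*})$ gives
\[
\textstyle\price_{\ell,j^*} \,\ge\, \initprice\cdot\exp\!\bigl(\delta\cdot\tfrac14\epsilon B_{j^*}\bigr) \,=\, \initprice\cdot(nm\beta/\epsilon)^{2}.
\]
Plugging in $\initprice=4\esopt\log(nm\beta/\epsilon)/(nm)$, this exceeds $\beta\cdot\esopt$ by a factor of order $\beta nm\log(nm\beta/\epsilon)/\epsilon^{2}$.

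Next, I will use best-response together with the null decision $\phi$ (which gives $0$ value and $0$ consumption) to bound low-value consumption. The chain $\val_\ell(\theta_\ell)\ge \langle\bprice_\ell,a_\ell(\theta_\ell)\rangle\ge \price_{\ell,j^*}\cdot \calg_{\ell,j^*}$ forces, for each low-value $\ell\in\{i_0{+}1,\dots,i^*\}$, that $\calg_{\ell,j^*}\le \beta\esopt/\price_{\ell,j^*} = O(\epsilon^{2}/(\beta nm\log(nm\beta/\epsilon)))$; summing over at most $n$ such indices this totals $o(1)$. High-value requests contribute at most $|H|\le 10/\epsilon$ on $\calH$, and the boundary index $\ell=i_0$ contributes at most $1$. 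Combined with $\widehat\cons_{\ell,j^*}\ge 0$, this yields
\[
\textstyle\tfrac14\epsilon B_{j^*} \,\le\, Y_{i^*}-Y_{i_0-1} \,\le\, \sum_{\ell=i_0}^{i^*}\calg_{\ell,j^*} \,\le\, 1 + 10/\epsilon + o(1),
\]
which contradicts $B_{j^*}=\Omega(\log(nm\beta/\epsilon)/\epsilon^{2})$ with a sufficiently large hidden constant, completing the argument.

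The hard part is the non-monotonicity of $Y$: a naive ``first-crossing'' argument would fail because $Y$ can dip back below any chosen threshold, invalidating the price bound at later steps. Picking $i_0$ to be the \emph{last} dip below $\tfrac14\epsilon B_{j^*}$ rather than the first crossing bypasses this cleanly, at the cost of one boundary step whose contribution of at most $1$ is easily absorbed into the $10/\epsilon$ slack from high-value requests.
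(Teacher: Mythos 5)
Your proposal is correct and takes essentially the same approach as the paper. Both proofs condition on the good event $\{|H|\le 10/\epsilon\}$, which has probability at least $1-\epsilon$ by the second bullet of \Cref{def:goodestimates}, and then show that $\EA$ is impossible on this event by locating a terminal window of steps on which the price of the offending resource $j^*$ is at least $\initprice\cdot(nm\beta/\epsilon)^2$, from which a contradiction is extracted via the best-response inequality $\val_\ell(\theta_\ell)\ge\langle\bprice_\ell,a_\ell(\theta_\ell)\rangle\ge\price_{\ell,j^*}\calg_{\ell,j^*}$. The only difference is cosmetic, a dual presentation of the same fact: the paper fixes a window $[\widehat\ell,\stoptime]$ whose low-value consumption of $j^*$ lies in $[1,2]$, and derives a contradiction by noting that one unit of consumption at such a high price would force $\alg_{low}\ge n\beta\esopt$, exceeding the trivial upper bound $n\beta\esopt$ on the total value of low-value requests. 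You instead fix the window $[i_0,i^*]$ via the last time $Y$ dips below $\tfrac14\epsilon B_{j^*}$, and derive the contradiction on the consumption side: at such a high price each low-value request consumes $o(1/n)$ of $j^*$, so the total consumption in the window is $1+10/\epsilon+o(1)$, contradicting the lower bound $\tfrac14\epsilon B_{j^*}$ that follows from the termination condition and the choice of $i_0$. Your final paragraph, pointing out that non-monotonicity of $Y$ defeats a naive first-crossing argument and that taking the \emph{last} dip fixes it, is a useful observation; the paper sidesteps the same issue by defining the window via cumulative low-value consumption rather than via a level crossing, but the two workarounds serve exactly the same role.
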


The proof of \Cref{lemma:prob-ea-small} proceeds by showing that the total value obtained from the final unit of the resource that triggered the event is high. This is a crucial feature of our algorithm, which allows the resource price to increase exponentially. 
As a result, if $\pr(\EA)$ is high, the algorithm would obtain a value greater than $\opt$, leading to a contradiction.
We defer the proof of \Cref{lemma:prob-ea-small} to \Cref{sec:NoRegretAndMissingProofs}.

The crux of the proof of \Cref{thm:exp-pricing-estimates}  lies in showing that our algorithm obtains good total value given that $\pr(\EA) \leq \epsilon$. 
Crucially, in this case, the algorithm reaches the end with probability at least $1 - \epsilon$. 
Exploiting this fact, we show the algorithm achieves total value $(1-\epsilon)\cdot \objx - 7\epsilon\cdot \opt$.

To begin, let $\stoptime$ be a random variable denoting
the time at which the algorithm stops.
We note that $\pr(\stoptime = n) = 1 - \pr(\EA) \geq 1 - \epsilon$.
Furthermore, let $\{\theta^*_i\}_{i \in [n]}$ denote decisions taken according to $x$, a feasible solution to $\lp(\B\cdot (1-\epsilon))$. Formally, $\{\theta^*_i\}_{i \in [n]}$ are random decisions such that when the $i$-th request is of type $k$, we set $\theta^*_i = \theta$ with probability $x_{i, k, \theta} / p_{i, k}$.

We decompose algorithm's total value into  utility and revenue: 
\begin{align*} 
 \alg ~=~ \sum_{i = 1}^\stoptime \val_i(\theta_i) \, &= \,\, \underbrace{\sum_{i = 1}^\stoptime \Big( \val_i(\theta_i) - \langle\bprice_i, a_i(\theta_i)\rangle \Big)}_{\mathsf{Utility}} \,\, + \,\, \underbrace{\sum_{i = 1}^\stoptime \langle\bprice_i, a_i(\theta_i)\rangle}_{\mathsf{Revenue}} \\
 &\geq~ \sum_{i = 1}^\stoptime \Big( \val_i(\theta^*_i) - \langle\bprice_i, a_i(\theta^*_i)\rangle \Big) +  \sum_{i = 1}^\stoptime \langle\bprice_i, a_i(\theta_i)\rangle, \notag 
\end{align*}
where the inequality follows from the fact that the decisions $\{\theta_i\}_{i \in [n]}$ are best-responses (greedy) to the incoming requests.
On taking expectations and rewriting,  
\begin{align}
    \E[\alg] \,\, = \,\, \E\Big[\sum_{i = 1}^\stoptime \val_i(\theta_i)\Big] \,\, \geq \,\, \E\Big[\sum_{i = 1}^\stoptime  \val_i(\theta^*_i)\Big] \,\, - \,\, \E \Big[\sum_{i = 1}^\stoptime   \langle \bprice_i, a_{i}(\theta^*_i) - a_i(\theta_i)  \rangle\Big]. \label{eq:for-augmentation}
\end{align}

The first term on the right hand side can be  lower bounded as
\begin{align}
    \E\Big[ \sum_{i = 1}^\stoptime \val_i(\theta^*_i)\Big] ~=~ \sum_{i = 1}^n \E\Big[\val_i(\theta^*_i) \cdot \one[\stoptime \geq i] \Big]
    ~&=~\sum_{i = 1}^n \E\left[\val_i(\theta^*_i)\right] \cdot \pr[\stoptime \geq i] \notag \\
    & \geq~ \sum_{i = 1}^n \Big(1 - \pr(\EA)\Big) \cdot \E\left[\val_i(\theta^*_i)\right]
    ~\geq~ (1 - \epsilon)\cdot \objx, \notag
 \end{align}
where in the second equality we used independence since $\one[\stoptime \geq i]$ only depends on the requests from $1$ to $i-1$ and $\val_i(\theta^*_i)$ only depends on the $i$-th request. 
The first inequality uses $\pr(\stoptime \geq i) \geq \pr(\stoptime = n) \geq 1 - \epsilon$, and the second (and final) inequality uses the fact that the decisions $\{\theta^*_i\}$ correspond to the solution $x$, whose value we denote by $\objx$. So far, we have shown:

\begin{equation}\label{eq:alg-lb-1}
    \E[\alg] \geq (1-\epsilon)\cdot \objx \,\, - \,\, \underbrace{\E \Big[\sum_{i=1}^\stoptime \langle \bprice_i, a_i(\theta^*_i) - a_{i}( \theta_i) \rangle\Big]}_{\loss}, 
\end{equation}
where the last term in the above expression corresponds to the ``loss in revenue'' that arises from making the best-response decisions (compared to fractional solution $x$).

\medskip
\noindent {\bf Bounding the Loss in Revenue.} We start by rewriting the loss. 
Since the price vector $\bprice_i$ is independent from the request $\gamma_i$ and the random decision $\theta^*_i$, we have
\begin{align}
     \loss \,\, :=\,\, \E \Big[\sum_{i=1}^\stoptime \langle \bprice_i, a_i(\theta^*_i) - a_{i}( \theta_i) \rangle\Big] \,\,
     &= \,\,  \E_{\gamma_1, \ldots, \gamma_n} \Big[ \sum_{i=1}^\stoptime\left\langle \bprice_i,  \E_{\theta^*_i}\left[a_{i}(\theta^*_i)\right] - a_i(\theta_i) \right\rangle\Big] \notag \\
    &=  \,\, \E_{\gamma_1, \cdots, \gamma_n}\Big[ \sum_{i = 1}^\stoptime   \sum_{j = 1}^m \price_{i, j} \cdot \big( \csoln_{i,j}\supx - \calg_{i,j}\big)\Big]. \label{eq:independent}
\end{align}

Now, to bound the $\loss$, it suffices to show that the loss is small for  every  $j \in [m]$.

\begin{Claim} \label{claim:RevLoss}
For any fixed sequence of requests $\gamma_1, \ldots, \gamma_n$, the  loss    for  every resource $j \in [m]$:
    \begin{equation}\label{eq:key}
    \sum_{i = 1}^\stoptime \price_{i, j} \cdot (\csoln_{i, j}\supx - \calg_{i, j}) ~\leq~ \frac{3\initprice}{\delta}  + 5\epsilon \cdot \sum_{i = 1}^\stoptime \price_{i, j} \cdot \calg_{i, j}.
    \end{equation}
\end{Claim}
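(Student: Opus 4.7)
The claim holds pointwise along any realized sample path $\gamma_1,\ldots,\gamma_n$, so I would fix this path—thereby fixing $\{\price_{i,j},\calg_{i,j},\theta_i,\stoptime\}$—and argue deterministically. The central trick is to run a Hedge-style potential argument not on $\price_{i,j}$ itself but on the auxiliary price
\[
\tilde\price_{i,j} \,:=\, \initprice\cdot\exp\Big(\delta\sum_{\ell<i}\big(\calg_{\ell,j}-\csoln_{\ell,j}\supx\big)\Big),
\]
which replaces the estimate $\widehat\cons_{\ell,j}$ by the true LP consumption $\csoln_{\ell,j}\supx$. By the good-estimate hypothesis $\big|\sum_{\ell\leq i}(\widehat\cons_{\ell,j}-\csoln_{\ell,j}\supx)\big|\leq \epsilon^2 B_j/(16\log(nm\beta/\epsilon)) = \epsilon/(2\delta)$, so pathwise $\tilde\price_{i,j}\in[e^{-\epsilon/2},e^{\epsilon/2}]\cdot\price_{i,j}$.

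Then I would apply the standard one-step Hedge bound to $\tilde\price$. Its one-step ratio is $\tilde\price_{i+1,j}/\tilde\price_{i,j}=\exp(\delta(\calg_{i,j}-\csoln_{i,j}\supx))$, whose argument has magnitude at most $\delta\leq 1$ under the budget hypothesis $B_j=\Omega(\log(nm\beta/\epsilon)/\epsilon^2)$. Using $e^x\leq 1+x+x^2$ for $|x|\leq 1$, telescoping $i=1,\ldots,\stoptime$, and invoking $\tilde\price_{\stoptime+1,j}\geq 0$ together with $\tilde\price_{1,j}=\initprice$ gives
\[
\delta\sum_{i=1}^{\stoptime}\tilde\price_{i,j}\big(\csoln_{i,j}\supx-\calg_{i,j}\big) \,\leq\, \initprice \,+\, \delta^{2}\sum_{i=1}^{\stoptime}\tilde\price_{i,j}\big(\calg_{i,j}-\csoln_{i,j}\supx\big)^{2}.
\]
Bounding $(a-b)^2\leq a+b$ for $a,b\in[0,1]$, dividing by $\delta$, and moving the $\csoln\supx$ summand arising on the right to the left produces
\[
(1-\delta)\sum_{i}\tilde\price_{i,j}\csoln_{i,j}\supx \,\leq\, \frac{\initprice}{\delta} \,+\, (1+\delta)\sum_{i}\tilde\price_{i,j}\calg_{i,j}.
\]

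Finally I would convert back to $\price$ using $\tilde\price_{i,j}\in[e^{-\epsilon/2},e^{\epsilon/2}]\price_{i,j}$ and use that $\delta=O(\epsilon)$ (again from the large-budget assumption). Because $1/(1-\delta)$ and $e^{\epsilon/2}$ are each $1+O(\epsilon)$, the $\initprice/\delta$ coefficient inflates by at most a factor of $3$ and the coefficient on $\sum_i\price_{i,j}\calg_{i,j}$ becomes $e^{\epsilon}(1+\delta)/(1-\delta)-1\leq 5\epsilon$, yielding exactly the bound claimed. The main obstacle that this approach sidesteps is the naive alternative: Taylor-expanding $\price$ directly with $\widehat\cons$ leaves a residual error $\sum_i \price_{i,j}(\csoln_{i,j}\supx-\widehat\cons_{i,j})$, and a summation-by-parts bound on it naturally involves $\sum_i\price_{i,j}$ rather than $\sum_i\price_{i,j}\calg_{i,j}$, which is useless for proving the stated inequality. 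Passing to $\tilde\price$ absorbs the estimation error into the pathwise $e^{\pm\epsilon/2}$ price ratio guaranteed by the good-estimate hypothesis and lets the no-regret calculation run verbatim on the exact LP consumption.
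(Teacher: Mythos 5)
Your proposal is correct and takes essentially the same approach as the paper: the auxiliary price $\tilde\price_{i,j}$ you introduce is exactly the paper's $\accprice_{i,j}$, and the conversion to $\price_{i,j}$ via the $e^{\pm\epsilon/2}$ bound from the good-estimate hypothesis matches the paper's argument verbatim. The only cosmetic difference is that the paper isolates the telescoping Hedge-potential calculation as a standalone Lemma (``No-Regret to Zero Price'') stated in terms of a generic reward $r_{i,j}$ and uses $r^2\le r\cdot\mathsf{sign}(r)$ to land directly on a $(\cdot)^+$ term, whereas you inline the same calculation with the equivalent bound $(a-b)^2\le a+b$ and absorb the resulting $\csoln$ summand into the left-hand side.
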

  Before proving \Cref{claim:RevLoss}, we use it to complete the proof of \Cref{thm:exp-pricing-estimates}. Summing  \eqref{eq:key}  over all $j \in [m]$ gives
    \begin{align*}
    \loss ~ &\leq~   \frac{3m\initprice}{\delta} + 5\epsilon \cdot \E\Big[\sum_{i = 1}^\stoptime \sum_{j = 1}^m \price_{i, j} \cdot \calg_{i, j}\Big]  \\
    &\leq~   \frac{3m\initprice}{\delta} + 5\epsilon \cdot \E\left[\alg\right] ~~\leq~~  \frac{3m\initprice}{\delta} + 5\epsilon \cdot \opt,
\end{align*}
where the second inequality  uses $ \alg =  \sum_{i = 1}^\stoptime   \val_i(\theta_i) \geq \sum_{i = 1}^\stoptime \sum_{j = 1}^m \price_{i, j} \cdot \calg_{i, j}  $ since the utilities are always non-negative due to the null action. Combining this inequality with \eqref{eq:alg-lb-1} gives
\begin{align*}
    \E[\alg] ~&\geq~ (1-\epsilon)\cdot \objx - 5\epsilon \cdot \opt - \frac{3m\initprice}{\delta} \\
    &\geq~ (1-\epsilon)\cdot\objx - 5\epsilon \cdot \opt - \frac{3\epsilon B_j}{2n} \cdot \esopt ~~\geq~~ \objx - 7\epsilon \cdot \opt,
\end{align*}
where the last inequality uses $B_j \leq n$.
\end{proof}

Thus, it only remains to prove \Cref{claim:RevLoss}. The following lemma is our main tool in its proof.

\begin{Lemma}[No-Regret to Zero Price] \label{lem:noRegret}
Consider an online setting where in the $i$-th step we play a price $\accprice_{i, j} \geq 0$ for  item $j$. On playing price $\accprice_{i, j}$, we get revenue  $\accprice_{i, j} \cdot r_{i,j}$ for some adversarially chosen $r_{i,j} \in [-1,1]$ (denote $r_{i,j}^+ := \max\{0, r_{i, j}\}$). Then, playing exponential prices   $\accprice_{i, j} = \initprice \cdot \exp\left(\delta \cdot \sum_{\ell \, < \, i}  r_{\ell,j} \right)$ for some $\delta \in (0,1/2)$ and $\initprice >0$ gives  us total revenue 
\[ 
         \sum_{i = 1}^\stoptime \accprice_{i, j} \cdot r_{i,j} ~\geq ~ - \frac{2\initprice}{\delta} - 4\delta \cdot \sum_{i = 1}^\stoptime \accprice_{i, j} \cdot r_{i,j}^+ .
        \] 
\end{Lemma}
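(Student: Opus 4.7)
The plan is a telescoping / potential-function analysis analogous to the standard Hedge proof, except that the prices are not renormalized so we obtain a direct lower bound rather than a comparison with a fixed distribution. Fix the resource $j$ and write $p_i := \accprice_{i, j}$, $r_i := r_{i, j}$ to declutter notation. The key analytic fact is the second-order Taylor bound $\exp(x) \leq 1 + x + x^2$, which holds for all $x \in [-1, 1]$, and hence for $x = \delta r_i \in [-1/2, 1/2]$ since $\delta \in (0, 1/2)$ and $r_i \in [-1, 1]$.

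First, from $p_{i+1} = p_i \exp(\delta r_i)$ and the above bound, $p_{i+1} - p_i \leq p_i(\delta r_i + \delta^2 r_i^2)$. Telescoping over $i = 1, \ldots, \stoptime$ and using $p_{\stoptime + 1} \geq 0$ together with $p_1 = \initprice$ yields
\[-\initprice ~\leq~ p_{\stoptime+1} - p_1 ~\leq~ \delta \sum_{i=1}^{\stoptime} p_i r_i ~+~ \delta^2 \sum_{i=1}^{\stoptime} p_i r_i^2,\]
which rearranges to $\sum_i p_i r_i \geq -\initprice/\delta - \delta \sum_i p_i r_i^2$. Next, I will convert the $r_i^2$ penalty into an $r_i^+$ penalty. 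Since $|r_i| \leq 1$, $r_i^2 \leq r_i^+ + r_i^-$, where $r_i^- := \max(0, -r_i)$ satisfies the identity $r_i^- = r_i^+ - r_i$. Substituting gives $\delta \sum_i p_i r_i^2 \leq 2\delta \sum_i p_i r_i^+ - \delta \sum_i p_i r_i$. Plugging this back and folding the $\delta \sum_i p_i r_i$ term into the left-hand side yields $(1-\delta) \sum_i p_i r_i \geq -\initprice/\delta - 2\delta \sum_i p_i r_i^+$. Since $\delta \leq 1/2$, $1/(1-\delta) \leq 2$, and the claimed bound follows.

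The main obstacle is the asymmetry between the natural penalty and the target penalty: the Taylor bound produces $\delta \sum_i p_i r_i^2$, which treats positive and negative parts of $r_i$ symmetrically, whereas the lemma only charges against $r_i^+$. This asymmetry is crucial downstream, since in \Cref{claim:RevLoss} the term $\sum_i p_i r_i^+$ must be absorbed into the algorithm's own revenue $\sum_i p_i \calg_{i, j}$; any bound that retained $r_i^-$ would involve $\csoln_{i, j} - \calg_{i, j}$ with the wrong sign and would not be chargeable. The identity $r_i^- = r_i^+ - r_i$ is exactly what lets us recycle the negative-part penalty back onto the left-hand side, at the cost of only a $(1-\delta)^{-1} \leq 2$ multiplicative loss, and this is where the hypothesis $\delta \leq 1/2$ is used.
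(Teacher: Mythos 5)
Your proof is correct and uses essentially the same argument as the paper: the second-order bound $e^x \leq 1+x+x^2$, telescoping of $p_{i+1}-p_i$, the estimate $r_i^2 \leq |r_i| = 2r_i^+ - r_i$ (you write it via $r_i^- = r_i^+ - r_i$, the paper via $\mathsf{sign}(r_i)$), and folding the $\delta \sum_i p_i r_i$ term back into the left side before dividing by $1-\delta \geq 1/2$. The only cosmetic difference is that the paper performs the $r_i^2 \to r_i^+$ conversion and the rearrangement per-step before summing, while you sum first and then rearrange; the underlying arithmetic is identical.
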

 At first glance, the R.H.S. appears to be a large negative quantity for a small value of $\delta$, but we will set $\initprice \ll \delta \ll 1$ where the R.H.S. approaches $0$.
If the prices $\accprice_{i, j}$ were being played using an online learning algorithm such as  Hedge or Multiplicative-Weights then something like \Cref{lem:noRegret} would immediately follow by applying the no-regret property w.r.t.\,the fixed $\mathbf{0}$ price vector.
 However, as mentioned in \Cref{sec:results-single-sample}, playing Exponential Prices is crucial in our analysis since it allows us to argue that, with high probability, the Exponential
Pricing algorithm never exhausts the budget of any resource (formalized in \Cref{lemma:prob-ea-small}).
In the next subsection we will prove that the exponential pricing algorithm also satisfies such a no-regret w.r.t.\,the $\mathbf{0}$ price vector. 
But first, we use this property to complete the proof of \Cref{claim:RevLoss}.

\begin{proof} [Proof of \Cref{claim:RevLoss}.]
We start by applying \Cref{lem:noRegret} with 
$r_{i,j} = \big( \calg_{i,j} - \csoln_{i, j}\supx  \big)$ to get 
\begin{align} \label{eq:noRegretApplied}
 \sum_{i = 1}^\stoptime \accprice_{i, j} \cdot \big(\csoln_{i, j}\supx - \calg_{i, j} \big) ~&\leq~  \frac{2\initprice}{\delta}  + 4\delta \cdot \sum_{i = 1}^\stoptime \accprice_{i, j} \cdot  \big( \calg_{i,j} - \csoln_{i, j}\supx  \big)^+ \notag \\
~&\leq~ 
 \frac{2\initprice}{\delta}  + 4\delta \cdot \sum_{i = 1}^\stoptime \accprice_{i, j}  \cdot \calg_{i, j}.
\end{align}
This nearly proves \Cref{claim:RevLoss} but a crucial difference is that in this inequality the prices $\accprice_{i, j}$ are adjusted according to $r_{i,j}=\big( \calg_{i,j} - \csoln_{i, j}\supx  \big)$ whereas for \eqref{eq:key} we need to adjust the prices $\price_{i, j}$  using the estimates  with $r_{i,j}= \big( \calg_{i,j} - \widehat{\cons}_{i, j}  \big)$.

Fortunately, this can be fixed using the fact that our estimates are good (\Cref{def:goodestimates}) and satisfy
        $ \big|\sum_{\ell \leq i}\widehat \cons_{\ell, j} - \sum_{\ell\leq i}\cons_{\ell, j}^*\supx\big| \,\, \leq \,\, \frac{\epsilon^2 \cdot B_j}{16 \log (nm\beta/\epsilon)}$ 
for all $i, j$. Combining this  with the fact that
\[ \textstyle
 {\price_{i, j}} = {\accprice_{i, j}} \cdot \exp\Big(\delta \cdot \sum_{\ell<i} \big(\csoln_{\ell, j}\supx - \widehat{\cons}_{\ell, j} \big) \Big)
\]
gives 
$1-\epsilon \leq  \frac{\price_{i, j}}{\accprice_{i, j}} \leq 1+\epsilon$ since $1+x \leq e^x \leq 1 + 2x$ for any $x \in [0, 1]$. This implies
 \begin{align*}
         \sum_{i = 1}^\stoptime \accprice_{i, j} \cdot \big( \csoln_{i, j}\supx -  \calg_{i,j} \big) \,\,  \geq  \,\, \frac{1}{1+\epsilon}\big(\sum_{i = 1}^\stoptime \price_{i, j} \cdot \csoln_{i, j}\supx\big) - \frac{1}{1-\epsilon}\big(\sum_{i = 1}^\stoptime \price_{i, j} \cdot \calg_{i,j}\big). 
    \end{align*}
Combining this with \eqref{eq:noRegretApplied} gives
\begin{align*}
      \frac{1}{1+\epsilon}\big(\sum_{i = 1}^\stoptime \price_{i, j} \cdot \csoln_{i, j}\supx\big)  - \frac{1}{1-\epsilon}\big(\sum_{i = 1}^\stoptime \price_{i, j} \cdot \calg_{i,j}\big) ~&\leq~   \frac{2\initprice}{\delta}  + 4\delta  \sum_{i = 1}^\stoptime \accprice_{i, j}  \cdot \calg_{i, j} \\
     ~&\leq~  \frac{2\initprice}{\delta}  + \frac{\epsilon}{1-\epsilon} \sum_{i = 1}^\stoptime \price_{i, j}  \cdot \calg_{i, j},
\end{align*}
where the last inequality uses  $4\delta \leq \epsilon$ when $B_j \geq {32 \log (nm\beta/\epsilon)}/{\epsilon^2}$. Multiplying both sides of the above inequality by $(1 + \epsilon)$, rearranging, and using  $\epsilon \leq 1/2$, 
    \[
     \sum_{i = 1}^\stoptime \price_{i, j} \cdot (\csoln_{i, j}\supx - \calg_{i, j}) ~\leq~ \frac{3\initprice}{\delta}  + 5\epsilon \cdot \sum_{i = 1}^\stoptime \price_{i, j} \cdot \calg_{i, j},    
    \]
which completes the proof of  \Cref{claim:RevLoss}.
\end{proof}

Next, we prove the missing  \Cref{lemma:prob-ea-small} and  \Cref{lem:noRegret}.


\subsection{No-Regret Property and the Probability of Excess Allocation} \label{sec:NoRegretAndMissingProofs}

In this section, we finish the missing proofs of \Cref{lemma:prob-ea-small} and \Cref{lem:noRegret}. First, we prove the no-regret property in \Cref{lem:noRegret}. Recall, in this online problem we play a price $\accprice_{i, j} \geq 0$ in the $i$-th step and get   revenue $\accprice_{i, j} \cdot r_{i,j} $ from the   $j$-th item where $r_{i,j} \in [-1,1]$; note that $r_{i,j}$  could be negative. We want to lower bound the total revenue of the exponential pricing rule that sets prices as follows.
\[
\textstyle   \accprice_{i, j} \gets \initprice \cdot \exp\left(\delta \cdot \sum_{\ell \, < \, i}  r_{\ell,j} \right).
\]

\begin{proof}[Proof of \Cref{lem:noRegret}.]
We start by observing that
\begin{align*}
    \accprice_{i+1, j} - \accprice_{i, j}~=~ \accprice_{i, j} \cdot\left(\exp \left(\delta \cdot r_{i,j} \right) - 1\right) 
    &~\leq~ \accprice_{i,j} \cdot \left(\delta \cdot r_{i,j} + (\delta)^2 \cdot  r_{i,j}^2\right) \\
    &~\leq~ \accprice_{i,j} \cdot \left(\delta \cdot  r_{i,j} + (\delta)^2   \cdot r_{i,j} \cdot \mathsf{sign}( r_{i,j})\right),
\end{align*}
where the first inequality follows from $e^x - 1 \leq x + x^2$ for $x = \delta \cdot r_{i,j}  \in [-1/2, 1/2]$  and the second inequality uses $|r_{i,j}| \leq 1$. 
On re-arranging this inequality,  we get
\begin{align*}
\accprice_{i, j} \cdot r_{i,j}~&\geq~ \frac{1}{\delta}  \cdot (\accprice_{i+1,j} - \accprice_{i,j}) -  \delta \cdot \accprice_{i,j} \cdot r_{i,j}\cdot \mathsf{sign}(r_{i,j}).
\end{align*}
Subtracting $\delta \cdot \accprice_{i, j} \cdot r_{i,j}$ from both sides of the above inequality, 
\begin{align*}
    (1 - \delta) \cdot \accprice_{i, j} \cdot r_{i,j} ~&\geq~ \frac{1}{\delta}  \cdot  (\accprice_{i+1,j} - \accprice_{i,j}) -  2\delta \cdot \accprice_{i,j} \cdot r_{i,j}^+,
\end{align*}
where we use $1 + \mathsf{sign}(r_{i,j}) = 0$ when $r_{i,j} < 0$.
Summing the above inequality over all $i \in [\stoptime]$ gives
\begin{align*}
(1 - \delta)\cdot \sum_{i \in [\stoptime]} \accprice_{i, j} \cdot r_{i,j} ~&\geq~ \frac{1}{\delta}  \cdot  \sum_{i \in [\stoptime]}(\accprice_{i+1,j} - \accprice_{i,j}) - 2\delta \sum_{i \in [\stoptime] }\accprice_{i,j} \cdot r_{i,j}^+\\
    ~&=~ \frac{1}{\delta} \cdot (\accprice_{\stoptime+1,j} - \accprice_{1,j}) - 2\delta \sum_{i \in [\stoptime] }\accprice_{i,j} \cdot r_{i,j}^+ 
    \quad \geq \quad  -\frac{\initprice}{\delta}  - 2\delta \sum_{i \in [\stoptime] }\accprice_{i,j} \cdot r_{i,j}^+.
\end{align*}
Finally, using $\frac{1}{1 - \delta} \leq 2$ completes the proof of the lemma.
\end{proof}

Finally, we bound the probability of excess allocation and prove \Cref{lemma:prob-ea-small}.

\begin{proof}[Proof of \Cref{lemma:prob-ea-small}.]

We first recall the setting of \Cref{lemma:prob-ea-small}. 
Recall that event $\EA$ corresponds to the situation that
Algorithm~\ref{alg:prsp} terminates for some $i \in [n]$ and $j \in [m]$, 
and our goal is to show that $\pr(\EA) \leq \epsilon$.
Furthermore, since  
$\esopt$ and   $\big\{\widehat{\cons}_{i, j}\big\}$   are \emph{good estimates} of $\opt$ and  $\big\{\cons^*_{i, j}{\supx}\big\}$, we have (by \Cref{def:goodestimates}) that $\pr_{(\gamma_1, \cdots, \gamma_n)}\big[\sum_{i \in [n]} \one \big[\max_{\theta \in \Theta_i} \val_i(\theta) \geq \beta \cdot \esopt\big] > \frac{10}{\epsilon}\big] \leq \epsilon$; i.e.,   the number of high-value requests with respect to $\esopt$ is $O(1/\epsilon)$.
We will argue that under event $\EA$,
we have $\sum_{i \in [n]} \one \left[\max_{\theta \in \Theta_i} \val_i(\theta) \geq \beta \cdot \esopt\right] > \frac{10}{\epsilon}$, 
which, given our assumption occurs with probability  at most $\epsilon$, thus bounding $\pr(\EA)$ by $\epsilon$.
Towards this end, we say that the $i$-th request
is a \emph{low value} (resp., \emph{high value}) request
if $\max_{\theta \in \Theta_i} \val_i(\theta)$ is less than (resp., at least) $\beta \cdot \esopt$.

Now suppose that event $\EA$ is triggered after request $\stoptime$ for resource $j^*$, and that
$|\{i \in [n]:i \text{ is high value} \}| \leq \frac{10}{\epsilon}$.
We will show that, in this case, the algorithm obtains total value at least $n \cdot \beta \cdot \esopt$ from  the low value requests. 
This is a contradiction since each low value request can contribute at most $\max_\theta \val_i(\theta) < \beta \cdot \esopt$ to the total value.
We first note that we have 
$\sum_{i \in [\stoptime] \, : \, i \text{ is low value} }\calg_{i, j^*}  \,\, \geq \,\, \frac{1}{2}\epsilon\cdot B_j - \frac{10}{\epsilon} \,\, \geq \,\, 2,$
where the final inequality uses $B_j \geq {32 \log (nm\beta/\epsilon)}/{\epsilon^2}$ for all $j \in [m]$.
Next, we define $\widehat \ell$ to be an index such that
$\sum_{i \in [\widehat \ell, \stoptime]\, :\, i \text{ is low value} }\calg_{i, j^*} \in [1, 2]$: such an index always exists since at
most one unit of a resource can be allocated for a given request. Let $\alg_{low}$ be the total value gained from those low value requests.
Conditioned on $\EA$, and using $\val_i(\theta_i) - \langle \bprice_i, a_i(\theta_i) \rangle \geq 0$ since $\phi \in \Theta_i$, we have
\begin{equation}\label{eq:largeEA2}
 \alg_{low} \,\, \geq \,\, \sum_{i \in [\stoptime] \, : \, i \text{ is low value} } \sum_{j=1}^m \price_{i, j} \cdot \calg_{i, j} ~\geq~ \sum_{i \in [\widehat \ell, \stoptime] \, : \, i \text{ is low value} } \price_{i, j^*} \cdot \calg_{i, j^*}.
\end{equation}
Also, for $i \in [\widehat \ell, s]$, 
 \begin{align}
 \sum\limits_{\ell < i } \calg_{\ell, j^*} - \sum\limits_{\ell < i}\widehat{\cons}_{ \ell, j^*}  
~\geq~  \sum\limits_{\ell \leq \stoptime } \calg_{\ell, j^*} - \sum\limits_{\ell \leq s}\widehat{\cons}_{ \ell, j^*} - \sum\limits_{i \leq \ell \leq \stoptime} \calg_{i, j^*}
~\geq~  \frac{\epsilon}{2}   B_j - 2 - \frac{10}{\epsilon}  ~ \geq ~\frac{\epsilon}{4}   B_j,\label{eq:largeEA3}
\end{align}
where the second inequality uses the termination condition, the assumption that there are no more than $\frac{10}{\epsilon}$ high value requests and that $\sum_{i \in [\widehat \ell, \stoptime]\, :\, i \text{ is low value} }\calg_{i, j^*} \leq 2$. 
The final inequality above is true when $B_j \geq 44\epsilon^{-2}$. 
Substituting~\eqref{eq:largeEA3} in the definition of $\price_{i,j^*}$ implies  $\lambda_{i, j^*} \geq \initprice \cdot \big( \delta \cdot \frac{1}{4}\epsilon B_j\big) = \initprice \cdot \frac{n^2 m^2 \beta^2}{\epsilon^2}$ for $i \in [\widehat{\ell}, \stoptime]$. 
Using this in \eqref{eq:largeEA2} gives
\[
\alg_{low} ~\geq~ \initprice \cdot \frac{n^2m^2\beta^2}{\epsilon^2} \cdot \sum_{i \in [\stoptime] \, : \, i \text{ is low value} } \cons_{i, j^*} ~\geq~ \initprice \cdot \frac{n^2m^2\beta^2}{\epsilon^2} \cdot 1 ~\geq~ n \cdot \beta \cdot \esopt,
\]
which is a contradiction, since there must be $\alg_{low} < n \cdot \beta \cdot \esopt$.
\end{proof}

\section{Learning Good Estimates using a Single Sample}\label{sec:single-sample}

In this section, we design a single sample online resource allocation algorithm that achieves value $(1 - O(\epsilon)) \cdot \opt$ and prove \Cref{thm:intro-main} by using \Cref{thm:exp-pricing-estimates} from the previous section.

To apply \Cref{thm:exp-pricing-estimates}, we need two kinds of estimates: $\esopt$ and $\{\esconsx\}_{i \in [n], j \in [m]}$, which  are good (see \Cref{def:goodestimates}) with respect to some feasible solution $x$ to  $\lp\left(\B\cdot(1-\epsilon)\right)$ with a large objective value $\obj(x) \geq (1 - O(\epsilon)) \cdot \opt$.
Here, learning a good $\esopt$ will be relatively straightforward: we will use the value of the $(1/\epsilon)$-th largest value in  the sample as $\esopt$ and argue, via concentration, that it satisfies the desired conditions. 
For details, see \Cref{sec:CompletingSingleSample}. 
Our main challenge will be to learn good estimates $\{\esconsx\}_{i \in [n], j \in [m]}$. 
Recall that for these estimates to be good, we need that for all $i \in[n], j \in [m]$,
    \begin{align} \label{eq:goodEst}
    \textstyle 0 \,\, \leq \,\, \widehat{\cons}_{i, j} \,\, \leq \,\,  1 \quad \text{and} \quad 
        \big|\sum_{\ell \leq i}\widehat \cons_{\ell, j} - \sum_{\ell\leq i}\cons_{\ell, j}^*{\supx}\big| \,\, \leq \,\, \frac{\epsilon^2 \cdot B_j}{16 \log (nm\beta/\epsilon)},
    \end{align}
    where $ \cons_{\ell, j}^*{\supx} := \sum_{k = 1}^K \sum_{\theta \in \Theta_{\ell, k}} \left(a_{\ell, k}(\theta)\right)_j \cdot x_{\ell, k, \theta}$.

We note that our goal is to estimate budget consumptions $\{\esconsx\}_{i \in [n], j \in [m]}$  with respect to some solution $x$ with a large $\obj(x)$, rather than to learn $x$ itself. 
Indeed, estimating such an $x$  from a single sample is impossible  since the number of request types $K$ could be arbitrarily large. 
The following is the main result of this section.

{
\begin{Theorem}
\label{lma:step1}
    There exists an algorithm that,
    given a single sample $\{\tilde \gamma_1, \ldots, \tilde \gamma_n\}$ from an online resource allocation instance and assuming all resource budgets $B_j \geq \Omega\big({\log^4(nm/\epsilon)}\cdot{\epsilon^{-6}}\big)$,
    outputs estimates $\{\esconsx\}_{i \in [n], j \in [m]}$, such that, with probability at least $1-2\epsilon$, 
    there exists a solution $x$ of $\lp((1 - \epsilon) \cdot \B)$ with $\obj(x) \geq (1 - 4\epsilon) \cdot \opt$ and the estimates satisfy 
    \eqref{eq:goodEst} with respect to $x$.
\end{Theorem}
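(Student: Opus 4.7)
The plan is to follow the randomized-partition strategy outlined in \Cref{sec:results-single-sample}. Randomly partition $[n]$ into $T=\widetilde\Theta(\epsilon^{-4})$ buckets $S_1,\ldots,S_T$, and within each bucket $t$ solve the offline packing LP on the realized sample: maximize $\sum_{i\in S_t}\sum_\theta y^{(t)}_{i,\theta}\,\tilde v_i(\theta)$ subject to $\sum_{i\in S_t}\sum_\theta y^{(t)}_{i,\theta}\,\tilde a_i(\theta)\le\B^{(t)}:=(1-2\epsilon)\cdot\B\cdot|S_t|/n$ and $\sum_\theta y^{(t)}_{i,\theta}\le 1$. I then set $\esconsx_{i,j}:=\sum_\theta y^{(t)}_{i,\theta}(\tilde a_i(\theta))_j$ for $i\in S_t$, which is automatically in $[0,1]$.

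The reference solution $x$ in the theorem is defined implicitly from the $y^{(t)}$'s by \emph{type-averaging}: for each $i\in S_t$, each type $k$ of $\calD_i$, and each $\theta$, set $x_{i,k,\theta}:=p_{i,k}\cdot\E[y^{(t)}_{i,\theta}\mid\tilde\gamma_i\text{ is of type }k]$, where the expectation is over the other independent samples $\{\tilde\gamma_\ell\}_{\ell\in S_t\setminus\{i\}}$. This construction gives the key identity $\cons^*_{i,j}\supx=\E[\esconsx_{i,j}\mid\text{partition}]$ and makes $x$ feasible for $\lp((1-\epsilon)\B)$ by summing the bucket budget constraints in expectation.

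Given this setup, the three required properties become standard consequences. Feasibility is immediate from the construction. Near-optimality $\obj(x)\ge(1-4\epsilon)\opt$ follows by combining two ingredients: the random partition ensures the expected population-level local-LP value on $S_t$ is at least an $|S_t|/n$ fraction of $\opt$ (restrict the global optimum to $S_t$ and use linearity of expectation); and since $\B^{(t)}\ge\widetilde\Omega(\epsilon^{-2})$ under the budget hypothesis $B_j\ge\widetilde\Omega(\epsilon^{-6})$, the sample-based local LP is within a $(1-\epsilon)$-factor of its distributional counterpart by the standard large-budget LP-rounding theorem. For the goodness of estimates, boundedness $\esconsx_{i,j}\in[0,1]$ is built in, and the prefix-sum error decomposes bucket-wise as $\sum_{\ell\le i_0}(\esconsx_{\ell,j}-\cons^*_{\ell,j}\supx)=\sum_t Z_t$ with $Z_t:=\sum_{\ell\in S_t\cap[1,i_0]}(\esconsx_{\ell,j}-\E[\esconsx_{\ell,j}])$. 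The $Z_t$'s are independent across buckets (given the partition), so a Bernstein bound on the outer sum, combined with a union bound over the $nm$ (prefix, resource) pairs, yields the required additive error $\epsilon^2 B_j/\log(nm\beta/\epsilon)$ once $B_j=\widetilde\Omega(\epsilon^{-6})$.

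The main obstacle is controlling each $Z_t$: because $y^{(t)}$ is the optimum of an LP defined by \emph{all} samples in $S_t$ simultaneously, perturbing a single $\tilde\gamma_\ell$ can swing many coordinates of $y^{(t)}$, so a naive Hoeffding bound on the summands within a bucket does not apply. My approach is to invoke McDiarmid's bounded-differences inequality after establishing an $O(1)$ per-sample $\ell_1$-sensitivity for the aggregate $\sum_{\ell\in S_t}\esconsx_{\ell,j}$: this should follow from an LP-reroute argument, where swapping out the perturbed request's allocation costs at most one unit of budget and one unit of value, so the total resource-$j$ consumption across the bucket changes by $O(1)$. This bounded-differences step is the most delicate part of the proof, and its quantitative form is what ultimately drives the $\epsilon^{-6}$ budget lower bound in the theorem statement.
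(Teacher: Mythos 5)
Your high-level plan matches the paper's: random partition into $\widetilde\Theta(\epsilon^{-4})$ buckets, solve a sample-based packing LP on each bucket, and use type-averaging of those bucket solutions to define the implicit reference solution $x$. The identity $\cons^*_{i,j}(x)=\E[\esconsx_{i,j}]$ and the bucket-wise decomposition $\sum_{\ell\le i_0}(\esconsx_{\ell,j}-\cons^*_{\ell,j}(x))=\sum_t Z_t$ with independent $Z_t$'s are both correct and are the paper's skeleton too.

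However, the part you flag as the ``most delicate'' --- controlling each $Z_t$ via McDiarmid and an $O(1)$ bounded-differences argument --- is where your route diverges, and it would actually fail. McDiarmid with per-sample sensitivity $O(1)$ gives a sub-Gaussian tail for $Z_t$ with variance proxy $\Theta(|S_t|)=\Theta(n/D)$, so the outer Bernstein/Hoeffding bound on $\sum_t Z_t$ would need the deviation threshold $\epsilon_D B_j$ to scale like $\sqrt{n\log(nm/\epsilon)}$, which is impossible to guarantee under the stated hypothesis $B_j=\widetilde\Omega(\epsilon^{-6})$ when $n$ is large. Separately, the $O(1)$ $\ell_1$-sensitivity of the bucket LP optimum to swapping a single request is itself nontrivial and not obviously true.

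The paper's observation sidesteps all of this: because the bucket-$t$ LP imposes the budget cap $\sum_{\ell\in S_t}\esconsx_{\ell,j}\le (1-\epsilon)B_j/D$ and the estimates are nonnegative, every prefix aggregate $A^{(t)}_{\le i_0,j}:=\sum_{\ell\in S_t\cap[1,i_0]}\esconsx_{\ell,j}$ lies \emph{deterministically} in $[0,(1-\epsilon)B_j/D]$, and so does its expectation. Hence $|Z_t|\le (1-\epsilon)B_j/D$ with probability $1$, and you can directly apply Hoeffding to the $D$ independent bounded variables $\{A^{(t)}_{\le i_0,j}\}_t$ --- no within-bucket sensitivity analysis is needed. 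This yields the required error $\epsilon_D B_j$ as soon as $\epsilon_D\gtrsim\sqrt{\log(nm/\epsilon)/D}$, which dictates $D=\widetilde\Theta(\epsilon^{-4})$, and the $B_j=\widetilde\Omega(\epsilon^{-6})$ budget comes instead from the near-optimality step.

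One more caution on your near-optimality sketch: ``restrict the global optimum to $S_t$ and use linearity of expectation'' glosses over the fact that the restricted allocation may exceed the bucket budget $(1-2\epsilon)\B\cdot|S_t|/n$ for some buckets. You need a high-probability event (which the paper establishes with a sampling-without-replacement concentration argument) asserting that every bucket's share of the restricted global optimum's consumption is at most $(1-2\epsilon)\B/D$; this ``average partition'' event is exactly where the extra $D$-factor in the budget requirement --- and hence the $\epsilon^{-6}$ rather than $\epsilon^{-2}$ --- enters the analysis.
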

}

\noindent \textbf{Proof Overview.}
We start by observing that the lemma is much easier to prove if we were given $\OTild(\epsilon^{-4})$ independent samples from the underlying distributions. This is because by solving the configuration LP relaxation for each sample,  we obtain an unbiased estimator for $\consx$.  As we only need to learn prefix budget consumptions up to  $\OTild(\epsilon^2 \cdot B_j)$  accuracy,  $\OTild(\epsilon^{-4})$ samples suffice, and we can then union bound over every prefix. However, this argument crucially relies on obtaining independent samples.

Given a single sample, a natural idea is to uniformly partition the single sample into $\OTild(\epsilon^{-4})$ parts, and treat each part as an independent sample. That is,  let $\ptt = \{S_1, \ldots, S_D\}$ be a uniformly at random partition of $[n]$ into $D=\OTild(\epsilon^{-4})$   parts where each part $S_d$ has size $n/D$\footnote{Without loss of generality, we assume $n$ is a multiple of $D$. This assumption is feasible because we can arbitrarily add $0$-value requests into the instance.}. 
 Now we can treat  each part $S_d$  as a ``mini-sample'' for estimating  $\consx$ by solving the configuration LP on $S_d$ with budget $\B/D$. However, two  challenges arise with such an approach.  

The first and primary challenge arises from the correlations among different parts, as they must remain disjoint. While we can interpret random partitioning as ``sampling without replacement,” there is no negative correlation between samples of $\consx$ from different parts, making it difficult to apply standard concentration bounds.
Our key idea is to redefine the learning target. Instead of aiming for the ``global” LP optimal solution that depends on all $n$ requests, we define ``local” LP optimal solutions for each part. This approach effectively bypasses correlations, as separate parts no longer interact, allowing us to leverage concentration results with  $\OTild(\epsilon^{-4})$ independent samples. Additionally, we show that when  $\ptt$ is chosen uniformly at random, with high probability the local solutions can be merged to obtain a near-optimal global solution.

The second challenge comes from the inaccuracy of the budget for each mini-sample. After randomly choosing $\ptt$,  each mini-sample needs $\OTild(\epsilon^4 \cdot \B)$ budget in expectation. 
However, because of the standard deviation, some mini-samples may require  more budget than the expectation, leading to an extra loss when only the expected amount of budget is assigned to this mini-sample. 
{To resolve this   challenge, we relax the assumption on the budget to be $B_j \geq \tilde \Omega(\epsilon^{-6})$, instead of $B_j \geq \tilde \Omega(\epsilon^{-2})$, and argue that under the new budget assumption, with high probability, this leads to at most $O(\epsilon)$ loss in the objective value.}

\subsection{Learning Estimates $\{\esconsx\}$ via Sample Partition: Proof of \Cref{lma:step1}}

    We start by defining, for every fixed partition $\ptt = \{S_1, \cdots, S_D\}$ of $[n]$, a feasible solution $x^\ptt$ for  $\lp((1 - \epsilon) \cdot \B)$. Note that this solution depends only on the partition and not on our single sample. 
    To define this solution, consider the following random process.
    \begin{itemize}
        \item For each $i \in S_d$, draw a random $\gamma_i = (\val_i, a_i, \Theta_i)$ from $\D_i$.
        \item Then, solve  \ref{program:notype} for $\{\gamma_i\}_{i \in S_d}$ and         let  $z^{(d)}$ be the corresponding optimal solution.
        \begin{align}
            \text{maximize } \quad \quad  &  \textstyle  \quad\sum_{i \in S_d}  \sum_{\theta \in \Theta_{i}} \val_{i}(\theta) \cdot z_{i, \theta}, \notag \\
           \text{s.t.} \qquad & \textstyle \quad   \sum_{i \in S_d} \sum_{\theta \in \Theta_{i}}  a_{i}(\theta) \cdot z_{i, \theta}  \leq \frac{1 - \epsilon}{D} \cdot \B \tag{$\text{LP}_{\textsc{Sample}}$} \label{program:notype} \\
            \forall i \in S_d, & \textstyle \quad \sum_{\theta \in \Theta_{i}} z_{i,\theta} \leq 1 \notag \\
            \forall i \in S_d,  \theta \in \Theta_{i, k},& \quad   0 \leq z_{i, k, \theta} \leq 1. \notag
        \end{align}

    \end{itemize}
    
    For each $d \in [D]$, let $y^{(d)}$ be a solution of $\lp((1 - \epsilon)\cdot\B / D, \{\D_i\}_{i \in S_d})$ that represents the above random process; i.e., we set  $ y^{(d)}_{i, k, \theta} = \E\big[z^{(d)}_{i, \theta} \cdot \one[\type(i) = k]\big]$, where the function $\type(\cdot)$ represents the type of request $i$.
    In other words, we set $\type(i) = k$ if, and only if, the realization of $\gamma_i$ is $\gamma_{i,k}$. 
    We further define $x^\ptt$ to be the combination of all $\{y^{(d)}\}$; formally, let $x^\ptt_{i, k, \theta} = y^{(d)}_{i, k, \theta}$ for $i \in S_d$. 
    The following claim guarantees that $x^\ptt_{i, k, \theta}$ is a feasible for $\lp((1 - \epsilon) \cdot \B)$.

\begin{Claim} \label{claim:xpFeasibility}
    The solution $x^\ptt$ defined above is feasible for  $\lp((1 - \epsilon) \cdot \B)$.
\end{Claim}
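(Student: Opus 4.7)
The plan is to verify, one family of constraints at a time, that $x^\ptt$ lies in the feasible region of $\lp((1-\epsilon)\cdot\B)$. Since $x^\ptt$ is assembled from the block solutions $y^{(d)}$ on a partition $\{S_d\}$ of $[n]$, I would first establish that each $y^{(d)}$ is feasible for the corresponding local program $\lp((1-\epsilon)\cdot\B/D;\{\D_i\}_{i\in S_d})$, and then add the block-wise budget constraints to recover the global budget. The type-marginal and $[0,1]$ constraints will then transfer coordinate-wise.

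For the local feasibility of $y^{(d)}_{i,k,\theta}=\E[z^{(d)}_{i,\theta}\cdot\one[\type(i)=k]]$: non-negativity and the upper bound of $1$ are immediate since $z^{(d)}_{i,\theta}\in[0,1]$. For the type-marginal, I would compute
\[
\textstyle \sum_{\theta\in\Theta_{i,k}} y^{(d)}_{i,k,\theta} \,=\, \E\!\left[\one[\type(i)=k]\cdot\sum_{\theta}z^{(d)}_{i,\theta}\right] \,\leq\, \E[\one[\type(i)=k]] \,=\, p_{i,k},
\]
using the constraint $\sum_{\theta}z^{(d)}_{i,\theta}\leq 1$ of \ref{program:notype} and the fact that when $\type(i)=k$ we have $\Theta_i=\Theta_{i,k}$. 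For the local budget, linearity of expectation gives
\[
\textstyle \sum_{i\in S_d}\sum_{k,\theta} a_{i,k}(\theta)\,y^{(d)}_{i,k,\theta} \,=\, \E\!\left[\sum_{i\in S_d}\sum_{\theta\in\Theta_i}a_i(\theta)\,z^{(d)}_{i,\theta}\right] \,\leq\, \tfrac{(1-\epsilon)\B}{D},
\]
where the inequality is the budget constraint of \ref{program:notype} applied pointwise inside the expectation.

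Global feasibility then follows by summing over the blocks: since $\{S_d\}_{d=1}^{D}$ partitions $[n]$ and $x^\ptt_{i,k,\theta}=y^{(d)}_{i,k,\theta}$ whenever $i\in S_d$, adding the $D$ local budget inequalities yields $\sum_{i\in [n]}\sum_{k,\theta}a_{i,k}(\theta)\,x^\ptt_{i,k,\theta}\leq (1-\epsilon)\B$. The type-marginal and $[0,1]$ constraints carry over at the coordinate level, so $x^\ptt$ satisfies every constraint of \ref{program:ora-ub} with right-hand side $(1-\epsilon)\B$.

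The only real subtlety, and the closest thing to an obstacle, is a small notational translation between the type-indexed configuration LP \ref{program:ora-ub} and the realized-sample LP \ref{program:notype}: one must track that multiplying by $\one[\type(i)=k]$ correctly distributes a realized decision $z^{(d)}_{i,\theta}$ (where $\theta\in\Theta_i$) into the type-conditioned slot $(i,k,\theta)$ with $\theta\in\Theta_{i,k}$, and that $a_i(\theta)$ coincides with $a_{i,k}(\theta)$ on the event $\{\type(i)=k\}$. Once this bookkeeping is made explicit, the claim reduces to linearity of expectation and the feasibility of $z^{(d)}$ for \ref{program:notype}.
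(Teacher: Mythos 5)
Your proposal is correct and mirrors the paper's own proof: both verify the local budget and type-marginal constraints for each $y^{(d)}$ by linearity of expectation (using $\sum_k \one[\type(i)=k]=1$ and $\sum_\theta z^{(d)}_{i,\theta}\leq 1$), then sum the block-wise budget constraints over the partition. Your version is slightly more explicit about the $[0,1]$ bounds and the bookkeeping of indicators, but the argument is the same.
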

\begin{proof}
    
    To verify the budget constraint, it suffices to check that each mini-solution $y^{(d)}$ uses at most $(1 - \epsilon) \cdot \B/D$ budget, which follows from the fact that
    \begin{align*}
    \textstyle        \sum_{i \in S_d} \sum_{k \in [K]} \sum_{\theta \in \Theta_{i,k}} a_{i,k}(\theta) \cdot y^{(d)}_{i,k,\theta} ~&=~ \textstyle \E_{\{\gamma_i\}_{i \in S_d}}\left[ \sum_{i \in S_d} \sum_{\theta \in \Theta_i} a_i(\theta) \cdot z^{(d)}_{i, \theta} \cdot \sum_{k \in [K]} \one[\type(i) = k]\right] \\
    ~&=~ \textstyle \E_{\{\gamma_i\}_{i \in S_d}}\left[ \sum_{i \in S_d} \sum_{\theta \in \Theta_i} a_i(\theta) \cdot z^{(d)}_{i, \theta}\right] ~\leq~ (1 - \epsilon) \cdot \B/D.
    \end{align*}
    To check that $x^{\ptt}$ satisfies $\sum_{\theta} x^{\ptt}_{i, k, \theta} \leq p_{i, k}$ for all $i \in [n]$ and $k \in [K]$, it's sufficient to check $\sum_{\theta} y^{(d)}_{i, k, \theta} \leq p_{i, k}$ for every $i \in S_d, k \in [K]$. This follows from the fact that
    \begin{align*}
        \textstyle \sum_{\theta \in \Theta_{i,k}} y^{(d)}_{i, k, \theta} ~=~ \E_{\{\gamma_i\}_{i \in S_d}}\left[\sum_{\theta \in \Theta_i}  z^{(d)}_{i, \theta} \cdot \one[\type(i) = k]\right] ~\leq~ \E_{\{\gamma_i\}_{i \in S_d}}\left[1 \cdot \one[\type(i) = k]\right] ~=~ p_{i,k}.
    \end{align*}
    Therefore, $x^\ptt$ is a solution of $\lp((1 - \epsilon) \cdot \B)$.
\end{proof}

Next, we show that $\obj(x^\ptt)$ is large when $\ptt$ (and consequently $x^\ptt$) is chosen uniformly at random.

\begin{Lemma}
\label{lma:xptt}
    Let $D\geq 1$ be an integer that divides $n$. Suppose that, for every resource $j \in [m]$, the budget $B_j$ satisfies $B_j \geq \Omega(\frac{D\cdot \log (nm/\epsilon)}{\epsilon^{2}})$. 
    Then, with probability at least $1 - \epsilon$, a uniformly random  partition $\ptt$ of $[n]$ into parts of size $n/D$ will yield a solution $x^\ptt$ such that $\obj(x^\ptt) \geq (1 - 4\epsilon) \cdot \opt$.
\end{Lemma}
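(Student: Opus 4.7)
\textbf{The plan} is to benchmark against an optimal configuration LP solution. Let $x^*$ be an optimal solution to $\lp(\B)$, so by \Cref{lem:hindsight-opt-ub} we have $\obj(x^*) \geq \opt$. Recalling from the proof of \Cref{claim:xpFeasibility} that $\obj(y^{(d)}) = \E_{\{\gamma_i\}_{i \in S_d}}[V^{(d)}]$, where $V^{(d)}$ denotes the optimal value of \ref{program:notype} on a fresh realization of $\{\gamma_i\}_{i \in S_d}$, any feasible sample LP solution constructed from $x^*|_{S_d}$ gives a lower bound on $\obj(y^{(d)})$. I would build such a solution as a type-dependent scaled restriction of $x^*|_{S_d}$ and argue that its expected value is close to $\obj(x^*|_{S_d})$, so that summing over $d$ recovers $\obj(x^*)$ up to lower-order terms.

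\textbf{Concentration over the partition.} For each $(d, j)$, define $W^{(d)}_j := \sum_{i \in S_d} \cons^*_{i, j}(x^*)$, the ``intended'' budget usage of $x^*|_{S_d}$ for resource $j$. Since $\cons^*_{i, j}(x^*) \in [0, 1]$ and $x^*$ is feasible for $\lp(\B)$, we have $\E_\ptt[W^{(d)}_j] \leq B_j/D$. A Chernoff-type bound for sampling without replacement gives $\Pr[W^{(d)}_j > (1+\epsilon) B_j/D] \leq \exp(-\Omega(\epsilon^2 B_j/D))$, which is at most $\epsilon/(mD)$ under the budget assumption $B_j \geq \Omega(D\log(nm/\epsilon)/\epsilon^2)$. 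A union bound over all $(d, j)$ yields a ``good partition'' event $\mathcal{E}$ of probability at least $1 - \epsilon$ on which every $W^{(d)}_j \leq (1+\epsilon) B_j/D$.

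\textbf{Constructing a feasible sample LP solution.} Conditioned on $\mathcal{E}$ and on a realization of $\{\gamma_i\}_{i \in S_d}$, I would set $\tilde z^{(d)}_{i, \theta} := \alpha \cdot x^*_{i, \type(i), \theta}/p_{i, \type(i)}$ for a suitable $\alpha = 1 - O(\epsilon)$, which is a natural randomized rounding of $x^*|_{S_d}$ by type. Its expected value (over the realization) equals $\alpha \cdot \obj(x^*|_{S_d})$, and its expected budget usage for resource $j$ is $\alpha \cdot W^{(d)}_j \leq \alpha(1+\epsilon) B_j/D$. Since contributions of distinct $i \in S_d$ are independent and each is in $[0, 1]$, a Chernoff bound with $\alpha$ tuned to leave a $\Theta(\epsilon)$ slack below $(1-\epsilon)B_j/D$ shows $\tilde z^{(d)}$ is feasible for the sample LP with probability $\geq 1 - \epsilon$ (union bounded over resources).

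\textbf{Decoupling value from feasibility, and conclusion.} Since $V^{(d)} \geq \text{value}(\tilde z^{(d)}) \cdot \one[\tilde z^{(d)} \text{ feasible}]$, taking expectations yields $\obj(y^{(d)}) \geq \E[\text{value}(\tilde z^{(d)})] - \E[\text{value}(\tilde z^{(d)}) \cdot \one[\text{infeasible}]]$. The hard part, where I expect the most care is needed, is bounding the second term without a bounded-value assumption, since both value and the infeasibility indicator depend on the same realization of $\{\gamma_i\}_{i \in S_d}$. The key observation is that the per-$i$ contributions to value and to resource usage are independent across $i$, so for any fixed $i$, its value contribution is independent of $\sum_{i' \neq i} \text{usage}_{i', j}$, and ``infeasibility for resource $j$'' is determined by this partial sum up to an additive $1$. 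A slightly shifted Chernoff tail then gives $\E[(\text{value of } i) \cdot \one[\text{infeasible}_j]] \leq O(\epsilon/m) \cdot \E[\text{value of } i]$; summing over $i$ and $j$ yields $\E[\text{value}(\tilde z^{(d)}) \cdot \one[\text{infeasible}]] \leq O(\epsilon) \cdot \obj(x^*|_{S_d})$. Hence $\obj(y^{(d)}) \geq (1 - O(\epsilon)) \obj(x^*|_{S_d})$; summing over $d$ and using $\sum_d \obj(x^*|_{S_d}) = \obj(x^*) \geq \opt$, with the constants tuned appropriately, gives $\obj(x^\ptt) \geq (1 - 4\epsilon)\, \opt$ on event $\mathcal{E}$, completing the proof.
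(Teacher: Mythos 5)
Your proposal is correct and follows essentially the same route as the paper: both establish that a uniformly random partition spreads the budget usage of a near-optimal global LP solution evenly across parts with high probability (sampling-without-replacement concentration), then construct a type-dependent rescaled sample-LP solution from that global solution, and handle the correlation between a request's own value and the budget-feasibility event by removing that request from the budget sum (paying at most an additive one unit) to regain independence. The only cosmetic difference is that the paper truncates its construction to zero whenever its budget condition fails, making $\tilde z^{(d)}$ always feasible and bounding its expected value via the per-$i$ exclusion condition, whereas you keep the unconditional rescaled construction and bound an explicit ``infeasibility loss'' term $\E[\text{value}\cdot\one[\text{infeasible}]]$---but both hinge on the identical independence observation.
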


\begin{proof}
    Let $\tildx$ be an arbitrary but fixed solution of $\lp((1 - 3\epsilon) \cdot \B)$, such that $\obj(\tildx) \geq (1 - 3\epsilon) \cdot \opt$. The existence of $\tildx$ is guaranteed by the observation that it's sufficient to scale down the optimal solution of $\lp(\B)$ by a factor of $1 - 3\epsilon$. Our main idea is to show that $\obj(x^\ptt)$ is comparable to $\obj(\tildx) \geq  (1 - 3\epsilon) \cdot \opt$.
    Towards this end, we call a partition $\ptt$ \emph{average} with respect to $\tildx$ if, for every $S_d \in \ptt$, the following condition holds: 
    \[  \textstyle 
    \sum_{i \in S_d} \sum_{k \in [K]} \sum_{\theta \in \Theta_{i,k}} a_{i,k}(\theta) \cdot \tildx_{i,k,\theta} \leq \frac{(1 - 2\epsilon)}{D} \cdot \B.
    \] 
    In other words, each part $S_d$ does not significantly exceed its expected budget consumption.
    The following claim shows that a uniformly random partition $\ptt$ of $[n]$ into parts of size $n/D$ is average with high probability.

    \begin{restatable}{Claim}{clmnottoomuch}\label{clm:nottoomuch} Let $\tildx$ be a solution of $\lp((1-3\epsilon) \cdot \B)$. If $\ptt$ is chosen uniformly at random and $B_j \geq \Omega(\frac{\log (nm/\epsilon) \cdot D}{\epsilon^2})$ for all $j$,     
        then with probability at least $1 - \epsilon$, we have 
        \[  \textstyle 
            \sum_{i \in S_d} \sum_{k \in [K]} \sum_{\theta \in \Theta_{i,k}} a_{i,k}(\theta) \cdot \tildx_{i,k,\theta} \leq \frac{(1 - 2\epsilon)}{D} \cdot \B
        \] 
         for all parts $S_d \in \ptt$; i.e., $\ptt$ is average with respect to $\tildx$. 
    \end{restatable}

      We defer the proof of \Cref{clm:nottoomuch} to \Cref{app:clm-nottoomuch}, as it follows by applying standard concentration inequalities.
     We proceed by assuming that $\ptt$ satisfies \Cref{clm:nottoomuch}, and consider the following random process, which defines a feasible solution $\tildz^{(d)}$ for \ref{program:notype}.
     \begin{itemize}
         \item For each $i \in S_d$, draw $\gamma_i \sim \D_i$.
         \item Check the following condition: 
         \begin{align}  \textstyle 
             \sum_{i \in S_d} \sum_{\theta \in \Theta_i} a_i(\theta) \cdot \sum_{k \in [K]} \one[\type(i) = k] \cdot  \frac{\tildx_{i,k, \theta}}{p_{i,k}} ~\leq~ (1 - \epsilon) \cdot \B/D \label{eq:condition}
         \end{align}
         If Condition \eqref{eq:condition} is satisfied, set $\tildz^{(d)}_{i,\theta} = \frac{\tildx_{i, k, \theta}}{p_{i,k}}$, where $k = \type(i)$; otherwise, set $\tildz^{(d)}_{i,\theta} = 0$.
     \end{itemize}

     The following claim shows that the objective obtained by $\tildz^{(d)}$ is sufficiently high.

     \begin{restatable}{Claim}{clmfollowtildx}
     \label{clm:followtildx}
         Let $\tildx$ be a solution of $\lp((1-3\epsilon) \cdot \B)$, and let $\ptt$ be an average partition with respect to $\tildx$ (per \Cref{clm:nottoomuch}). Then, we have
         \[  \textstyle 
         \E_{\{\gamma_i\}_{i \in S_d}} \left[\sum_{i \in S_d} \sum_{\theta \in \Theta_i} \val_i(\theta) \cdot \tildz^{(d)}_{i, \theta} \right] ~\geq~ (1 - \epsilon) \cdot \sum_{i \in S_d} \sum_{k \in [K]} \sum_{\theta \in \Theta_{i,k}} \val_{i,k}(\theta) \cdot \tildx_{i,k, \theta}
         \]
         for every $d \in [D]$, provided that $B_j \geq \Omega(\frac{\log (nm/\epsilon) \cdot D}{\epsilon^2})$.
     \end{restatable}
     Note that when \Cref{clm:nottoomuch} is satisfied,
     following solution $\tildx$ implies that, in expectation, the realization of $\{\gamma_i\}_{i \in S_d}$ only consumes $(1 - 2\epsilon) \cdot \B/D$ budget. 
     Then, \Cref{clm:followtildx} essentially suggests that Condition \eqref{eq:condition} is satisfied with probability at least $1 - \epsilon$. 
     In other words, with high probability, the consumption of the realized requests does not exceed the expected consumption by a significant amount.
     This, in turn, allows us to lower bound the expected value in terms of the value obtained by $\tildx$.
     We defer the proof of \Cref{clm:followtildx} to \Cref{app:clm-followtildx}, and proceed to complete the proof of \Cref{lma:xptt}.

     Recall that in the random process of solving \ref{program:notype} that defines $y^{(d)}$,  each random $z^{(d)}$ is solved optimally. 
     Therefore, the objective value given by $z^{(d)}$ is at least the value of $\tildz^{(d)}$, i.e.,
     \begin{align*} \textstyle 
         \obj(y^{(d)}) ~=~ \E_{\{\gamma_i\}_{i \in S_d}} \left[\sum_{i \in S_d} \sum_{\theta \in \Theta_i} \val_i(\theta) \cdot z^{(d)}_{i, \theta} \right] ~\geq~ \E_{\{\gamma_i\}_{i \in S_d}} \left[\sum_{i \in S_d} \sum_{\theta \in \Theta_i} \val_i(\theta) \cdot \tildz^{(d)}_{i, \theta} \right].
     \end{align*}
     When \Cref{clm:followtildx} holds, we get
     \[  \textstyle 
     \obj(y^{(d)}) ~\geq~ (1 - \epsilon) \cdot \sum_{i \in S_d} \sum_{k \in [K]} \sum_{\theta \in \Theta_{i,k}} \val_{i,k}(\theta) \cdot \tildx_{i,k, \theta}.
     \]
     Summing the above inequality for all $d \in [D]$ gives
     \[ \textstyle 
     \obj(x^\ptt) ~=~ \sum_{d \in [D]} \obj(y^{(d)}) ~\geq~ (1 - \epsilon) \cdot \obj(\tildx).
     \]
     Therefore, the objective of $x^\ptt$ is at least $(1 - \epsilon) \cdot (1 - 3\epsilon) \cdot \opt \geq (1 - 4\epsilon) \cdot \opt$ when \Cref{clm:nottoomuch} holds, i.e., when $\ptt$ is chosen uniformly at random.
     Thus, with probability at least $1 - \epsilon$, we have $\obj(x^\ptt) \geq (1 - 4\epsilon) \cdot \opt$, which finishes the proof of \Cref{lma:xptt}.
\end{proof}
The final step in proving \Cref{lma:step1} is to show that a single sample from each request distribution is sufficient to concentrate around $x^\ptt$. 
Formally, we give the following lemma.

\begin{Lemma}
    \label{lma:xptt-concentrate}
    Let $D \geq 1$ be an integer that divides $n$. Given a single sample $\{\tilde \gamma_1, \ldots, \tilde \gamma_n\}$ from each of the $n$ request distributions, and a partition $\ptt$ of $[n]$ into parts of size $n/D$, there exists an algorithm that outputs $\{\widehat \cons^{\ptt}_{i, j}\}_{i \in [n], j \in [m]} \in [0, 1]$ such that with probability at least $1 - \epsilon$,   we have for every $i, j$  that
    \[ \textstyle 
    \left|\sum_{\ell \leq i} \widehat \cons^{\ptt}_{\ell, j} - \sum_{\ell \leq i}  \cons^*_{\ell, j}{(x^\ptt)}\right| \leq \epsilon_D \cdot B_j
    \]
    for every $i \in [n]$ and $j \in [m]$, as long as $\epsilon_D \geq \sqrt{\frac{4\log (nm/\epsilon)}{D}}$.
\end{Lemma}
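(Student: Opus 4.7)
The algorithm is the natural one: for each part $S_d$ of the partition $\ptt$, use the one available sample $\{\tilde\gamma_i\}_{i \in S_d}$ and solve $\lp\bigl((1-\epsilon)\B/D;\{\tilde\gamma_i\}_{i \in S_d}\bigr)$ (the \ref{program:notype} on the realized requests with budget $(1-\epsilon)\B/D$) to obtain an optimal solution $\hat z^{(d)}=\{\hat z^{(d)}_{i,\theta}\}$. Then, for every $i \in S_d$ and $j \in [m]$, we set
\[
\widehat{\cons}^{\ptt}_{i,j} \;:=\; \sum_{\theta \in \Theta_i}\bigl(\tilde a_i(\theta)\bigr)_j \cdot \hat z^{(d)}_{i,\theta}.
\]
The fact that $\widehat{\cons}^{\ptt}_{i,j}\in[0,1]$ is immediate from $\bigl(\tilde a_i(\theta)\bigr)_j\in[0,1]$ and $\sum_\theta \hat z^{(d)}_{i,\theta}\le 1$. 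The key observation is that $\hat z^{(d)}$ is exactly the random variable $z^{(d)}$ from the process defining $x^\ptt$; hence, by the definition $y^{(d)}_{i,k,\theta}=\E\bigl[z^{(d)}_{i,\theta}\cdot\one[\type(i)=k]\bigr]$ and unpacking $\cons^*_{i,j}(x^\ptt)=\sum_{k,\theta}(a_{i,k}(\theta))_j\, y^{(d)}_{i,k,\theta}$, the estimator is unbiased: $\E\bigl[\widehat{\cons}^{\ptt}_{i,j}\bigr]=\cons^*_{i,j}(x^\ptt)$ for every $i \in S_d$ and $j \in [m]$.

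Next, I would fix a prefix length $i\in[n]$ and a resource $j\in[m]$, and decompose
\[
\sum_{\ell \leq i}\widehat{\cons}^{\ptt}_{\ell,j} \;=\; \sum_{d=1}^{D} X_d, \qquad \text{where } X_d \;:=\; \sum_{\ell\in S_d,\, \ell\leq i}\widehat{\cons}^{\ptt}_{\ell,j}.
\]
The crucial features are that (a) the variables $X_1,\ldots,X_D$ are mutually independent, since they depend only on the disjoint samples $\{\tilde\gamma_\ell\}_{\ell\in S_d}$, which are independent across $d$; and (b) each $X_d$ is deterministically bounded by $B_j/D$, because $X_d \le \sum_{\ell \in S_d}\sum_{\theta}(\tilde a_\ell(\theta))_j\, \hat z^{(d)}_{\ell,\theta}\le (1-\epsilon)B_j/D \le B_j/D$ by the budget constraint in \ref{program:notype}. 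Their total expectation is exactly $\sum_{\ell \leq i}\cons^*_{\ell,j}(x^\ptt)$ by linearity, so Hoeffding's inequality gives
\[
\Pr\!\left[\,\Bigl|\sum_{\ell\leq i}\widehat\cons^{\ptt}_{\ell,j}-\sum_{\ell\leq i}\cons^*_{\ell,j}(x^\ptt)\Bigr|>\epsilon_D\cdot B_j\,\right]\;\leq\; 2\exp\!\bigl(-2\epsilon_D^{\,2} D\bigr).
\]

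Finally, I would take a union bound over all $n$ prefixes and $m$ resources. The bad-event probability is at most $2nm\exp(-2\epsilon_D^{\,2} D)\leq \epsilon$ whenever $\epsilon_D \geq \sqrt{4\log(nm/\epsilon)/D}$, which is exactly the assumed threshold. I do not expect any genuine obstacle here: the conceptual work (redefining the learning target as the local solutions $z^{(d)}$ so that parts become independent) was already done in the construction of $x^\ptt$ in \Cref{lma:xptt}. The only mild point to verify carefully is the per-part deterministic bound $X_d \leq B_j/D$, which is what makes Hoeffding give the right scaling $\Theta(B_j/\sqrt{D})$ in the deviation (rather than the much weaker $\Theta(B_j)$ one would get from naively bounding each $\widehat\cons^{\ptt}_{\ell,j}\in[0,1]$ and summing); this is precisely why partitioning into $D\gg \epsilon^{-4}$ pieces is beneficial for the downstream requirement $\eqref{eq:goodEst}$.
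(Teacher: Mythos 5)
Your proposal is correct and follows essentially the same route as the paper's proof: the same algorithm (solve $\text{LP}_{\textsc{Sample}}$ on each part's sample and read off the per-request consumptions), the same unbiasedness argument via the distributional identity between $\hat z^{(d)}$ and the $z^{(d)}$ used to define $x^{\ptt}$, the same per-part deterministic bound $X_d\le(1-\epsilon)B_j/D$ feeding Hoeffding, and the same union bound over $nm$ prefix/resource pairs.
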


\begin{proof}
We start by considering the following algorithm.
\begin{itemize}
    \item For each $S_d \in \ptt$, solve \ref{program:notype} on requests $\{\tilde \gamma_i\}_{i \in S_d}$, and let $\bar z^{(d)}$ be the corresponding optimal solution. As discussed in \Cref{sec:LPandGoodEstimates}, we can solve this LP in $\poly(n,m)$ time (since $K=1$ here) given demand-oracle access to the $n$ requests.

    \item For each $i \in [n]$ and $j \in [m]$, compute
    \[ \textstyle 
    \widehat \cons^{\ptt}_{i, j} ~:=~ \sum_{d \in [D]} \one[i \in S_d] \cdot \sum_{\theta \in \tilde \Theta_i} \left(\tilde a_{i}(\theta)\right)_j \cdot \bar z^{(d)}_{i, j} \enspace .
    \]
\end{itemize}
We will show that for each $i \in [n], j \in [m]$, we have
\begin{equation}\label{eq:xptt-concentrate-1}
\textstyle 
\pr\left(\left|\sum_{\ell \leq i} \widehat \cons^{\ptt}_{\ell, j} - \sum_{\ell \leq i} \cons^*_{\ell, j}{(x^\ptt)}\right| > \epsilon_D \cdot B_j \right) ~\leq~ \frac{\epsilon}{n  m}  \enspace .
\end{equation} 
Then, applying the union bound over all $i \in [n]$ and $j\in [m]$ finishes the proof.

We now fix $i\in [n]$ and $j \in [m]$, and prove \Cref{eq:xptt-concentrate-1}. First, we define random variables $A^{(d)}_{\leq i, j} := \sum_{\ell \in S_d: \, \ell \leq i} \sum_{\theta \in \tilde \Theta_\ell} \left(\tilde a_{\ell}(\theta)\right)_j \cdot \bar z^{(d)}_{\ell, j}$, where the randomness stems from the realization of requests $\{\tilde \gamma_\ell\}_{\ell \in S_d}$. 
Since $\bar z^{(d)}$ is feasible for \ref{program:notype} for requests  $\{\tilde \gamma_\ell\}_{\ell \in S_d}$, we have
\[ \textstyle 
A^{(d)}_{\leq i, j} ~\leq~ \sum_{\ell \in S_d} \sum_{\theta \in \tilde \Theta_\ell} \left(\tilde a_{\ell}(\theta)\right)_j \cdot \bar z^{(d)}_{\ell, j} ~\leq~ \frac{(1 - \epsilon)}{D} \cdot B_j  \enspace . 
\]
On the other hand, we have
\begin{align*} \textstyle 
    \E\left[A^{(d)}_{\leq i, j}\right] ~&=~ \textstyle   \E_{\{\tilde \gamma_\ell\}_{\ell \in S_d}}\left[\sum_{\ell \in S_d: \, \ell \leq i} \sum_{k \in [K]} \one[\type(\ell) = k] \cdot \sum_{\theta \in \Theta_{i,k}}  \left(\tilde a_{\ell}(\theta)\right)_j \cdot \bar z^{(d)}_{\ell, \theta}\right] \\
    ~&=~  \textstyle \sum_{\ell \in S_d: \, \ell \leq i} \sum_{k \in [K]}  \sum_{\theta \in \Theta_{i,k}}  y^{(d)}_{\ell, k, \theta} \cdot \left(\tilde a_{\ell}(\theta)\right)_j \\
    ~&=~ \textstyle  \sum_{\ell \in S_d: \, \ell \leq i}\sum_{k \in [K]}  \sum_{\theta \in \Theta_{i,k}} x^{\ptt}_{\ell, k, \theta} \cdot \left(\tilde a_{\ell}(\theta)\right)_j,
\end{align*}
where the second equality follows from the observation that $\bar z^{(d)}$ and $z^{(d)}$ (the random solution used to define $y^{(d)}$) are uniformly distributed.
Taking a sum over all $d \in [D]$, we get
\[ \textstyle 
\sum_{d \in [D]} \E\left[A^{(d)}_{\leq i, j}\right] ~=~ \sum_{\ell \leq i} \sum_{k \in [K]} \sum_{\theta \in \Theta_{i,k}} x^\ptt_{\ell, k, \theta} \cdot \left(\tilde a_{\ell}(\theta)\right)_j ~=~ \sum_{\ell \leq i} \cons^*_{\ell, j}{(x^\ptt)},
\]
i.e., $\sum_{\ell \leq i} \widehat \cons^{\ptt}_{\ell, j} := \sum_{d \in [D]} A^{(d)}_{\leq i, j}$ is an unbiased estimator of $\sum_{\ell \leq i} \cons^*_{\ell, j}{(x^\ptt)}$. By Hoeffding's Inequality (\Cref{Hoeffding}), we have
\begin{align*} \textstyle 
    \pr\left(\left|\sum_{\ell \leq i} \widehat \cons^{\ptt}_{\ell, j} - \sum_{\ell \leq i} \cons^*_{\ell, j}{(x^\ptt)}\right| > \epsilon_D\cdot B_j\right) ~\leq~ 2\exp\left(-\frac{2\epsilon^2_D B^2_j}{D \cdot (1 -\epsilon)^2 \cdot B^2_j/D^2}\right) ~\leq~ \frac{\epsilon}{nm},
\end{align*}
where the last inequality holds when $\epsilon_D \geq \sqrt{\frac{4\log (nm/\epsilon)}{D}}$.
\end{proof}

We now complete the proof of \Cref{lma:step1} using \Cref{claim:xpFeasibility}, \Cref{lma:xptt}, and \Cref{lma:xptt-concentrate}.

\begin{proof}[Proof of \Cref{lma:step1}]
We first generate a random partition $\ptt$ according to \Cref{lma:xptt}   with $D = {1024 \log^3(nm/\epsilon)}/{\epsilon^4}$. 
This choice of $D$ and $B_j \geq \Omega({\log^4(nm/\epsilon)}/{\epsilon^6})$ ensures that  the condition $B_j \geq \Omega({D\cdot \log (nm/\epsilon)}/{\epsilon^2})$ in \Cref{lma:xptt} is satisfied, so the lemma guarantees that with probability at least  $1-\epsilon$ we have $\obj(x^\ptt) \geq (1 - 4\epsilon) \cdot \opt$. 
Moreover, \Cref{claim:xpFeasibility} implies that $x^\ptt$  is a feasible solution for $\lp((1 - \epsilon) \cdot \B)$. 
Thus, to prove \Cref{lma:step1}, it suffices  learn good estimates  $\{\widehat \cons_{i, j}\}_{i\in [n], j \in [m]}$  (as in \eqref{eq:goodEst}) w.r.t.\,$x^\ptt$. 
Such good estimates can be obtained for partition $\ptt$ with probability at least $1-\epsilon$ using the single sample algorithm in \Cref{lma:xptt-concentrate} with  $\epsilon_D = \sqrt{{4\log (nm/\epsilon)}/{D}} = {\epsilon^2}/{(16 \log (nm/\epsilon))}$. 
Finally, applying the union bound over the guarantees of \Cref{lma:xptt}  and \Cref{lma:xptt-concentrate}, we can conclude that with probability at least $1 - 2\epsilon$, all the conditions in \Cref{lma:step1} hold. 
\end{proof}

\subsection{Learning $\widehat \opt$ and Completing the Proof of \Cref{thm:intro-main}}
\label{sec:CompletingSingleSample}

Now we complete the proof of our main \Cref{thm:intro-main} by combining the  estimated prefix budget consumptions from the last subsection with  Exponential Pricing  from the last section.

\begin{Theorem}[Formal version of \Cref{thm:intro-main}]
     Given a single sample from each of the $n$ request distributions for online resource allocation with \emph{non-identical} distributions, there exists a $\big(1-O(\epsilon)\big)$-approximation posted pricing algorithm, provided that  every resource has $\Omega\big({\log^4(nm/\epsilon)}/{\epsilon^{6}}\big)$ budget.  
\end{Theorem}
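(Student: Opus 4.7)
The plan is to combine the consumption estimates from \Cref{lma:step1} with a separately learned estimate $\widehat{\opt}$ of the hindsight optimum, and then invoke \Cref{thm:exp-pricing-estimates} on \Cref{alg:prsp} with these two inputs. A union bound over the $O(\epsilon)$ failure events across the three components will yield the final $(1-O(\epsilon))$-approximation.

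To learn $\widehat{\opt}$ from the single sample, I would take a suitably scaled order statistic of the sample values, e.g., $\widehat{\opt} := \epsilon \cdot \tilde V_{(k)}$ where $\tilde V_i := \max_{\theta \in \tilde \Theta_i} \tilde v_i(\theta)$, $k = \Theta(1/\epsilon)$, and $\tilde V_{(k)}$ denotes the $k$-th largest such value in the sample. The two requirements in \Cref{def:goodestimates} are verified as follows. For $\widehat{\opt} \leq \opt$: since every decision consumes at most one unit of any resource and the budgets satisfy $B_j = \Omega(1/\epsilon)$, the hindsight optimum may simply allocate the top-$k$ value requests, so $\opt \geq \E[V_{(k)}]$. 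Since the sample is independent and has the same marginal distribution as the true instance, $\E[\tilde V_{(k)}] = \E[V_{(k)}] \leq \opt$, and Markov's inequality together with the $\epsilon$-scaling gives $\widehat{\opt} \leq \opt$ with probability at least $1 - \epsilon$. For the high-value count with parameter $\beta = \poly(nm/\epsilon)$: writing $N(\tau) := \sum_i \one[V_i \geq \tau]$ and $\tilde N(\tau) := \sum_i \one[\tilde V_i \geq \tau]$, these have identical expectation for any fixed threshold, so a dyadic (doubling) argument over thresholds combined with Chernoff-type concentration on $N(\cdot)$ and $\tilde N(\cdot)$ will conclude that $N(\beta \widehat{\opt}) \leq 10/\epsilon$ with probability $\geq 1-\epsilon$, for $\beta$ chosen as a sufficiently large polynomial in $nm/\epsilon$.

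With $\widehat{\opt}$ in hand, I would invoke \Cref{lma:step1} under our hypothesis $B_j \geq \Omega(\log^4(nm/\epsilon)/\epsilon^{6})$ to obtain, with probability $\geq 1 - 2\epsilon$, consumption estimates $\{\widehat{\cons}_{i,j}\}$ that are good (per \Cref{def:goodestimates}) with respect to some $x$ feasible for $\lp((1-\epsilon)\B)$ with $\obj(x) \geq (1 - 4\epsilon)\opt$. Together with the learned $\widehat{\opt}$, this forms a good estimate with parameter $\beta = \poly(nm/\epsilon)$. Since $\log \beta = O(\log(nm/\epsilon))$, the budget condition $B_j = \Omega(\log(nm\beta/\epsilon)/\epsilon^2)$ demanded by \Cref{thm:exp-pricing-estimates} is absorbed by our stronger assumption. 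Running \Cref{alg:prsp} with $(\widehat{\opt}, \{\widehat{\cons}_{i,j}\})$ then yields expected total value at least $(1 - \epsilon) \obj(x) - 7\epsilon \opt \geq (1 - O(\epsilon))\opt$, and posted pricing and incentive compatibility follow directly from the structure of \Cref{alg:prsp}.

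The main obstacle is calibrating $\widehat{\opt}$ so that both conditions of \Cref{def:goodestimates} hold simultaneously from a single sample: $\widehat{\opt}$ must lie below $\opt$, yet be large enough that only $O(1/\epsilon)$ requests in the unseen true instance exceed a $\beta$-multiple of it, with the bound on $\beta$ being known up front. The key enabler is that the budget requirement in \Cref{thm:exp-pricing-estimates} depends on $\beta$ only logarithmically, so a crude $\poly(nm/\epsilon)$-factor approximation of $\opt$ is sufficient, and this is readily obtained from a single-sample order-statistic estimator together with Markov/Chernoff concentration.
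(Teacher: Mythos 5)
Your proposal is correct and follows the same high-level structure as the paper's proof: learn $\widehat{\opt}$ from the sample, invoke \Cref{lma:step1} for the consumption estimates, union-bound the failure events, and hand everything to \Cref{thm:exp-pricing-estimates}. The one place you diverge is the construction and analysis of $\widehat{\opt}$. The paper sets $\esopt := \tilde V_{(3/\epsilon)}$ with no $\epsilon$-scaling and achieves $\beta = 1$: it shows $\esopt$ sits between two fixed percentile thresholds $\eta_2 \leq \esopt \leq \eta_1$ via Bernstein, and then proves $\opt \geq \eta_1$ by noting that with high probability $\Omega(1/\epsilon)$ requests exceed $\eta_1$, all of which $\opt$ can afford since $B_j = \Omega(1/\epsilon)$. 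Your version instead scales down by $\epsilon$, gets condition~1 cheaply from Markov (using $\E[\tilde V_{(k)}] \leq \opt$, which does indeed follow from budgets being $\Omega(1/\epsilon)$), and pays for it by needing a large $\beta$; since the budget requirement of \Cref{thm:exp-pricing-estimates} depends only on $\log\beta$, your choice of $\beta = \poly(nm/\epsilon)$ is harmless. Both routes are valid. One small clean-up: the ``dyadic argument over thresholds'' you sketch for condition~2 is unnecessary and a bit dangerous with unbounded values. The cleaner route — which is also what the paper does — is to fix a single threshold $\eta_2$ with $\E[N(\eta_2)] = \Theta(1/\epsilon)$, show $\tilde V_{(k)} \geq \eta_2$ w.h.p.\ by Bernstein (so $\beta\widehat{\opt} \geq \eta_2$ once $\beta\epsilon \geq 1$, giving $N(\beta\widehat{\opt}) \leq N(\eta_2)$ deterministically), and then show $N(\eta_2) \leq 10/\epsilon$ w.h.p.\ by Bernstein again. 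No union bound over thresholds is needed.
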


\begin{proof}
    We assume $\epsilon \leq 0.1$, otherwise the theorem trivially holds. We  will also need the following claim regarding learning an estimate of the optimal value.
    
\begin{Claim}
    \label{lma:step2} Given a single sample $\{\tilde \gamma_1, \ldots, \tilde \gamma_n\}$ of the instance and provided that $\epsilon \leq 0.1$, there exists an algorithm that outputs $\esopt$, which satisfies with probability at least $1 - 2\epsilon$ the following:
    \begin{itemize}
        \item $\esopt \leq \opt$.
        \item $\pr_{(\gamma_1, \cdots, \gamma_n)}\left(\sum_{i \in [n]} \one\left[\max_{\theta \in \Theta_i} \val_i(\theta) > \esopt\right] > 10\epsilon^{-1} \right) ~\leq~ \epsilon.$
    \end{itemize}
\end{Claim}

    Before proving the claim, we use it to complete the proof of the theorem.  
    Combining \Cref{lma:step1}  from the last subsection and \Cref{lma:step2} using union bound, we get    with probability at least $1 - 4\epsilon$   good estimates $\esopt$ and $\{\widehat \cons_{i,j}\}_{i \in [n], j\in [m]}$ that satisfy \Cref{def:goodestimates} with $\beta = 1$ w.r.t. solution $x$ of $\lp((1 - \epsilon) \cdot \B)$ which has $\obj(x) \geq (1 - 4\epsilon) \cdot \opt$.  Now applying \Cref{thm:exp-pricing-estimates} from the last section with these estimates gives a posted pricing algorithm with expected value at least 
    \[
    (1 - 4\epsilon) \cdot \big((1-\epsilon)(1 - 4\epsilon)\cdot \opt - 7\epsilon \cdot \opt\big)~\geq~ (1 - 16\epsilon) \cdot \opt. \qedhere
    \]
\end{proof}
 
Next, we prove the missing claim.
\begin{proof}[Proof of \Cref{lma:step2}]
    We assume that for all $i \neq i'$, we have $\max_{\theta} \val_i(\theta) \neq \max_{\theta} \val_{i'}(\theta)$.
    This assumption is without loss of generality since we can add an arbitrarily small noise to the value functions, or equivalently, we can break ties uniformly at random.

    Given $\{\tilde \gamma_1, \ldots, \tilde \gamma_n\}$, our algorithm outputs $\esopt$ that satisfies $\sum_{i \in [n]} \one[\max_{\theta \in \tilde \Theta_i} \tilde \val_i(\theta) \geq \esopt] = 3/\epsilon$. 
    We prove \Cref{lma:step2} by showing that the two required conditions are both satisfied with probability at least $1 - \epsilon$ for the value of $\esopt$ defined above. At a high-level, we show that with high probability the number of requests with highest value above $\esopt$ is at least $\epsilon^{-1}$ and at most $6\epsilon^{-1}$. Then, the first argument implies $\esopt \leq \opt$, and the second argument directly covers the second condition in \Cref{lma:step2}.

    We first prove that $\esopt \leq \opt$ with  probability at least $1 - \epsilon$. 
    Let $\eta_1$ be a value that satisfies $\E\big[\sum_{i \in [n]} \one[\max_{\theta \in \Theta_i} \val_i(\theta) > \eta_1]\big] = 1/\epsilon$. Note that it suffices to show $\esopt \leq \eta_1$ with high probability. This is because when $\esopt \leq \eta_1$, we have
    \[ \textstyle
    \opt ~\geq~ \eta_1  ~\geq~ \esopt.
    \]
    To see the first inequality, define random variables $X_i = \one[\max_{\theta \in \Theta_i}v_i(\theta) > \eta_1]$, and note that $\sum_{i \in [n]}\E[X_i] = 1/\epsilon$. Then, applying the (one-sided version of) Bernstein's inequality to the mean-zero random variables $\E[X_i] - X_i$ gives
    \[
    \textstyle 
    \pr\left( \sum_{i \in [n]} (\E[X_i] - X_i)  \geq \frac{1}{2\epsilon}\right) ~\leq~ \exp\left( - \frac{1/8\epsilon^2}{\sigma^2 + 1/(3\epsilon)} \right) ~\leq~ 6\epsilon,
    \]
    where the final inequality uses $\sigma^2 = \sum_{i\in[n]}\sigma_i^2 \leq 1/\epsilon$ and $\epsilon \leq 0.1$. Thus, $\opt \geq (1-6\epsilon) \cdot \frac{1}{2\epsilon} \cdot \eta_1 \geq \eta_1$, which again uses $\epsilon \leq 0.1$.
    Further, note that 
    \begin{align*}\textstyle
        \pr\left(\esopt > \eta_1 \right) ~\leq~ \pr\left(\sum_{i \in [n]} (X_i - \E[X_i]) \geq \frac{3}{\epsilon} - \frac{1}{\epsilon}\right) ~\leq~ \exp\left(-\frac{4 /(2\epsilon^2)}{\sigma^2 + 2/(3  \epsilon)} \right) ~\leq~ \epsilon,
    \end{align*}
    which again uses $\sigma^2 = \sum_{i\in[n]}\sigma_i^2 \leq 1/\epsilon$ and $\epsilon \leq 0.1$.
    Therefore, we have $\esopt \leq \opt$ with probability at least $1 - \epsilon$.

    For the second condition, we define $\eta_2$ to be a value that satisfies $\E\big[\sum_{i \in [n]} \one[\max_{\theta \in \Theta_i} \val_i(\theta) > \eta_2]\big] = 6/\epsilon$. 
    Then, we define random variables $Y_i = \one[\max_{\theta \in \Theta_i} \val_i(\theta) > \eta_2]$, and note $\sum_{i \in [n]} \E[Y_i] = 6/\epsilon$. Let $\sigma_2^2 := \E[(Y_i - \E[Y_i])^2]$. 
    Observe that $\sigma_2^2 \leq \sum_{i \in [n]} \E[Y_i] = 6/\epsilon$, and
    that it suffices to show $\esopt \geq \eta_2$ with high probability. 
    This is because when $\esopt \geq \eta_2$, we have
    \begin{align*}\textstyle
        \pr\left(\sum_{i \in [n]} \one\left[\max_{\theta \in \Theta_i} \val_i(\theta) > \esopt\right] > 10\epsilon^{-1} \right) ~&\leq~ \textstyle\pr\left(\sum_{i \in [n]} \one\left[\max_{\theta \in \Theta_i} \val_i(\theta) > \eta_2\right] > 10\epsilon^{-1} \right) \\
        ~&=~\textstyle \pr\left(\sum_{i \in [n]} (Y_i  - \E[Y_i]) > \frac{10}{\epsilon} - \frac{6}{\epsilon}\right) \\
        ~&\leq~ \textstyle \exp \left(-\frac{16/(2\epsilon^2)}{6/\epsilon + 4/(3\epsilon)}\right) ~\leq~ \epsilon,
    \end{align*}
    where the penultimate inequality uses Bernstein's Inequality, and the last inequality holds when $\epsilon \leq 0.1$.
    It remains to show that $\pr(\esopt \geq \eta_2)$ is at least $1 - \epsilon$. This follows by applying  Bernstein's Inequality to variables $Y_{i} - \E[Y_i]$ as follows.
    \begin{align*}\textstyle
        \pr\left(\esopt < \eta_2 \right) ~=~ \pr\left(\sum_{i \in [n]} (Y_i - \E[Y_i]) \leq \frac{3}{\epsilon} - \frac{6}{\epsilon}\right) ~\leq~ \exp\left(-\frac{9 /(2\epsilon^2)}{\sigma_2^2 + 3/(3  \epsilon)} \right) ~\leq~ \epsilon,
    \end{align*}
    where the last inequality holds when $\epsilon \leq 0.1$.

    Finally, combining two statements with union bound proves \Cref{lma:step2}.
\end{proof}


\section{Robustness in the Byzantine Secretary Model } \label{sec:robustness}
In this section, we show that the  
Exponential Pricing algorithm is robust to outliers: it achieves a $(1-\epsilon)$-approximation in the {Byzantine Secretary} model.

\subsection{Byzantine Secretary Model and Proof Overview}
We now formally describe an instance of online resource allocation in the Byzantine Secretary model \cite{BGSZ-ITCS20,AGMS-SODA22}.
There are two sets of requests:  $\numg$ green (stochastic or good) requests and  
$\numr$ red (adversarial or rogue) requests. 
Each green request $i$, represented by $\gamma_i = (\val_i, a_i, \Theta_i)$, arrives at a uniformly random time $t_i \sim U[0, 1]$, while
each red request $i$, represented by  $\gamma^\mathsf{R}_i = (\val^\mathsf{R}_i, a^\mathsf{R}_i, \Theta^\mathsf{R}_i)$, arrives at an adversarially chosen time $t^\mathsf{R}_i \in [0, 1]$. 
We assume that the adversary is oblivious,
so the model dynamics are as follows:
The adversary first decides the set of requests $\gamma _1, \ldots, \gamma _{\numg}$ and $\gamma^\mathsf{R}_1, \ldots, \gamma^\mathsf{R}_{\numr}$, and the arrival times $t^\mathsf{R}_1, \ldots, t^\mathsf{R}_{\numr}$
of the red requests. 
Then each green request is independently assigned a uniformly random arrival time $t _i \sim \text{Unif}[0, 1]$.
When a request arrives, the algorithm does not see the color (green
or red): it just sees the request and the time of arrival.
Our goal is to design a pricing algorithm that makes an irrevocable decision upon receiving a request,  subject to the given budget constraints.

We note that since the red requests are chosen adversarially and have adversarial arrival order, we cannot hope to get the value of the offline optimum on the red requests. 
As mentioned in \Cref{sec:introModel}, the best possible approximation in this case is $\Omega\left(n/B\right)$ 
even for the special case of a single resource.
Consequently, we  compare the performance of 
our algorithm to the expected offline optimum of the green requests. 
Specifically, we define the benchmark for our algorithm to be
\begin{align}
\label{program:ro}
\textstyle
    \opt ~:=~ \max_{\{\theta _i \in \Theta _i\}_{i \in [\numg]}} \sum_{i = 1}^{\numg} \val _i(\theta _i) ~\text{ s.t. } \sum_{i = 1}^{\numg} a _i(\theta _i) \leq \B .
\end{align}
Additionally, 
we assume that an estimate $\widehat{\opt}$ of $\opt$  is provided, with $\esopt \in [\frac{\opt}{\beta}, \opt]$. As argued in \cite{BGSZ-ITCS20}, such an assumption is necessary  since it is impossible to compare to unbounded $\opt$.

The main result of this section is a $(1-\epsilon)$-approximation in the Byzantine Secretary model, greatly improving the $\Omega(1)$-approximation of \cite{AGMS-SODA22} in the same setting.

\begin{Theorem}
\label{thm:romainori}
Given $\epsilon > 0$ and an estimate $\esopt \in \left[\frac{\opt}{\beta}, \opt\right]$, there exists a pricing algorithm for online resource allocation in the Byzantine Secretary model that obtains expected total value at least $\left(1 - O(\epsilon)\right) \cdot \opt$ when the  budget satisfies $B_{j} =  \Omega\left({\log(m\beta/\epsilon)}\cdot {\epsilon^{-2}}\right)$ for every $j \in [m]$. 
\end{Theorem}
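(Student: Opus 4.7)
The plan is to run the Exponential Pricing algorithm of Section 2 in the Byzantine Secretary setting, using the random arrival time $t \in [0,1]$ as a proxy for position. Concretely, I would set the target cumulative consumption of resource $j$ by time $t$ to be $t \cdot (1-\epsilon) B_j$ -- an ``agnostic'' fluid schedule that corresponds to spending the (scaled) budget uniformly over time. The initial price and multiplier would be set as $\initprice \approx \esopt \cdot \log(nm\beta/\epsilon) / (nm)$ and $\delta \approx \log(nm\beta/\epsilon) / (\epsilon B_j)$, matching Algorithm 2.1 but using the provided $\esopt$. Each arriving request (red or green) best-responds to the current prices, and we terminate a resource once its actual consumption exceeds the target by $\tfrac{1}{2}\epsilon B_j$.

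To handle the dependency between the current price vector and the current green request (because random-order arrivals are effectively sampled without replacement from the green pool), I would restart the algorithm at $t = 1/2$: run one copy on $[0, 1/2]$ with budget $\B/2$, and a fresh copy on $[1/2, 1]$ with the remaining $\B/2$. For each half separately, the budget-safety argument of \Cref{lemma:prob-ea-small} carries over -- if any resource is overused, the exponentially growing price would exceed $\esopt$ and force the collected value to exceed $\opt$, a contradiction. Red requests either pay these high prices (generating revenue) or are immediately rejected by best-response, so they neither break feasibility nor damage welfare. The welfare bound within each half then proceeds exactly as in \Cref{thm:exp-pricing-estimates}: decompose $\alg \geq \sum_{i \text{ green in half}} \val_i(\theta_i^*) - \mathsf{LossInRevenue}$, where $\theta_i^*$ plays the LP-optimal decision for green request $i$, and apply the no-regret bound of \Cref{lem:noRegret} against the zero-price vector to control the second term.

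The main obstacle is showing that the actual-minus-target cumulative consumption concentrates well enough that the Section 2 analysis goes through \emph{despite} the dependency on past arrivals. My plan is to treat the green arrivals within each half as a sample without replacement from the pool of green requests, and apply a Serfling-type concentration inequality to the cumulative sum $\sum_{\tau \leq t} a_\tau(\theta_\tau^*)$, uniformly over prefixes $t$ and resources $j$. The midpoint restart is what makes this sharp enough: within each half, the sampling-without-replacement fraction is bounded away from $1$, so the concentration is essentially as strong as if the samples were independent, yielding deviations of $\widetilde O(\epsilon^2 B_j)$ -- precisely the ``good estimates'' condition of \Cref{def:goodestimates} needed to invoke the Section 2 machinery. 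Combining the welfare bounds from the two halves, and noting that the LP on green requests loses at most a $(1-\epsilon)$-factor in value when its budget is split evenly across halves (by scaling the optimal solution), should yield the claimed $(1 - O(\epsilon)) \cdot \opt$ guarantee.
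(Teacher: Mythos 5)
Your high-level plan (Exponential Pricing with a fluid target schedule, a midpoint restart, and a price-cap termination rule, plus sampling-without-replacement concentration) is the right starting point and matches the paper's skeleton. However, there is a genuine gap in the way you propose to close the price-request correlation, and it is the central technical step.

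You claim to establish the ``good estimates'' condition of \Cref{def:goodestimates} by showing that the cumulative benchmark consumption $\sum_{s< t} a_s(\theta^*_s)_j$ concentrates around the fluid target $(t-1)\cdot a^*_j$ to within $\widetilde O(\epsilon^2 B_j)$, uniformly over prefixes. This cannot hold: the \emph{expectation} of $\sum_{s<t} a_s(\theta^*_s)_j$ already deviates from $(t-1)a^*_j$ by $\Theta(\epsilon B_j)$. Two sources of systematic drift are unavoidable. First, whatever downsizing factor $(1-c\epsilon)$ you apply to the benchmark decisions (and you must apply one, to keep the per-slot expected consumption below $a^*_j$) introduces a drift of order $\epsilon B_j$ over a full half. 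Second, the adversary controls the positions of the $\numr$ red slots; clustering them in $[1,t]$ removes green slots from the prefix, shifting $\mathbb{E}\bigl[\sum_{s<t}a_s(\theta^*_s)_j\bigr]$ by up to $\Theta(\epsilon B_j)$ relative to the position-agnostic target $(t-1)(1-\epsilon)B_j/T$. A deviation of $\epsilon B_j$ is two orders of magnitude larger than the $\widetilde O(\epsilon^2 B_j)$ budget demanded by \Cref{def:goodestimates}; in \Cref{claim:RevLoss} it would make the ratio $\price_{t,j}/\accprice_{t,j}$ as extreme as $(\epsilon/(nm\beta))^{\Theta(1)}$ rather than lying in $[1-\epsilon,1+\epsilon]$, and the lemma collapses. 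No amount of Serfling-style concentration repairs this, since the obstruction is in the mean, not the fluctuation.

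The paper therefore does \emph{not} try to re-establish good estimates. Instead it decomposes the loss term as $\langle\bprice_t, a_t(\widetilde\theta_t)-a_t(\theta_t)\rangle = \langle\bprice_t, a_t(\widetilde\theta_t)-\ba^*\rangle + \langle\bprice_t, \ba^*-a_t(\theta_t)\rangle$. The second summand involves only the deterministic target $\ba^*$, which is exactly the rate the algorithm prices against, so \Cref{lem:noRegret} applies \emph{pathwise} with no independence required. The first summand is the one that is genuinely correlated with the history; the paper handles it by conditioning on $\calH$ (which fixes $\bprice_t$), showing via sampling-without-replacement concentration (\Cref{lma:s-wo-r}) that the \emph{conditional} mean $\E[\widetilde a_{t,j}\mid\calH]$ rarely exceeds $a^*_j$ by much, and using the termination condition to cap $\price_{t,j}$ on the rare event that it does. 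This is a strictly weaker, \emph{one-sided} requirement on the benchmark, and it is exactly what the downsizing and the midpoint restart buy you. Your proposal needs to be restructured along these lines; the two-sided ``good estimates'' route is a dead end in the Byzantine model.
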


\noindent\textbf{Proof Overview.} 
Our idea is to apply  Exponential Pricing to the Byzantine Secretary model. To facilitate this,  in \Cref{sec:disByz} we first transform the original Byzantine Secretary model into a ``discrete" model, converting the continuous time horizon 
$[0,1]$ into a finite number of time slots.

Once the Exponential Pricing algorithm is applied, the proof structure resembles that of  \Cref{thm:exp-pricing-estimates}. A key difference arises in  \eqref{eq:independent}  of that proof, where we rewrote   $\loss$ by moving the expectation inside the dot product. This step is only possible if the price vector  $\bprice_i$ is independent of the realization of request $\gamma_i$. However, 
in the Byzantine Secretary model, since the green requests arrive in random order, the realization of request $\gamma_i$ depends on previous requests, creating correlation with the corresponding price vector. 
To address this, we interpret the randomness in historical data as a ``sampling without replacement'' process and apply concentration inequalities for this type of sampling. Similar to previous work \cite{AGMS-SODA22, GM-MOR16}, we also restart the Exponential Pricing algorithm at the midpoint to further control the effects of dependencies.

\subsection{Discretized Byzantine Secretary Model}
\label{sec:disByz}

To prove \Cref{thm:romainori}, we first  reduce  the problem to solving a ``discretized'' version of the Byzantine Secretary model. 
Then, we show that an algorithm for online resource allocation in the discretized  model can be converted into an algorithm for online resource allocation in the original Byzantine Secretary model with a negligible loss in the value (see Lemma~\ref{lma:oritodisc} for a formal statement).

\medskip
\noindent\textbf{Model.}
The intuition behind discretizing the input is as follows.
Suppose we could 
break the time interval $[0, 1]$ into $T$ smaller intervals where $T \gg \numg + \numr$; 
now the $t$-th interval, or piece, represents the time $\big[(t - 1) \cdot \frac{1}{T}, t \cdot \frac{1}{T}\big)$,
then the intervals
containing green requests should contain 
at most one green request (with probability arbitrarily close to $1$).
Building on this intuition, we formally define an instance of online resource allocation in the discretized Byzantine Secretary model as follows.
Given an $\epsilon > 0$, we have $T > (\numg + \numr) / \epsilon$ time intervals. 
The adversary first decides the set of green requests $\gamma _1, \ldots, \gamma _{\numg}$. 
Then, the adversary selects intervals $R \subseteq [T]$ such that $|R| \leq \numr$, and for each $t \in R$ sets a red request 
$\gamma_t = (\val_t, a_t, \Theta_t)$.
Finally, the set of green requests are permuted uniformly at random amongst the time intervals 
in the set $[T] \setminus R$.
This gives a mapping $\pi : [\numg] \to [T] \setminus R$ which maps each one of the green requests to one of the remaining time intervals.
Without loss of generality, we 
use $\gamma _{i}$ to denote the request 
$\gamma_{\pi(i)} = (\val_{\pi(i)}, a_{\pi(i)}, \Theta_{\pi(i)})$
for $i \in [\numg]$.
For the intervals that are not assigned any request; that is, for $t \in [T] \setminus \left(\{\pi(i): i \in [\numg]\} \cup R\right)$, we set $\gamma_t$ to be 
a dummy request that only contains the null decision $\phi$.

\begin{Lemma}
\label{lma:oritodisc}
Suppose that 
we are given an algorithm for online resource allocation problem in the discretized Byzantine Secretary model
that achieves expected total value at least $\eta \cdot \opt$ when $T > \numg$ (the number of green requests).
Then,  
we can obtain an algorithm for 
online resource allocation problem in the Byzantine Secretary model
that achieves expected total value at least $(1 - \disceps) \cdot \eta\cdot\opt$, where $\opt$ equals the expected offline optimum of the green requests.
\end{Lemma}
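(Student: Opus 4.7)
\textbf{Proof plan for Lemma~\ref{lma:oritodisc}.} The plan is to embed the continuous instance into a discretized one. I would partition $[0,1]$ into $T := \lceil (\numg+\numr)^2/\disceps \rceil$ equal-length slots (note $T > \numg$, as required by the hypothesis on the discretized algorithm), assign each incoming request to the slot containing its arrival time, and run $\mathcal{A}_{\text{disc}}$ slot-by-slot, feeding a null-only dummy request for any empty slot. If two real requests ever land in the same slot---a ``collision''---the simulation reverts to the null decision for all remaining requests so as not to violate the budget or mis-feed $\mathcal{A}_{\text{disc}}$.

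The analysis has two steps. First, I would show that, conditional on the no-collision event, the induced discretized instance is distributed exactly as a discretized Byzantine Secretary instance. The red slots $R := \{\lceil T \cdot t^{\mathsf{R}}_i \rceil\}_i$ are an (adversarial) deterministic function of the oblivious adversary's red times, with $|R|\le \numr$ on the no-collision event. Each green request's slot is a priori uniform over $[T]$, and the greens' timings are mutually independent, so conditioning on all green slots being distinct and disjoint from $R$ yields precisely the uniform distribution over injections $[\numg] \to [T]\setminus R$---the discretized model verbatim. Second, a union bound over the $\binom{\numg}{2} + \numg\numr \le (\numg+\numr)^2$ ordered pairs of distinct items yields
\[
    \Pr{\text{collision}} \,\le\, \frac{(\numg+\numr)^2}{T} \,\le\, \disceps.
\]

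Since the benchmark $\opt$ depends only on the green request tuples and not on their arrival times, it is unchanged by the reduction and unaffected by the timing-only conditioning. Combining the two steps,
\[
    \E[\alg] \,\ge\, \Pr{\text{no collision}} \cdot \E[\alg \mid \text{no collision}] \,\ge\, (1-\disceps) \cdot \eta \cdot \opt,
\]
as claimed. The delicate step I expect is verifying the first: one must check that conditioning on the ``no collision'' event does not reweight the adversary's red placement or the green request contents. This holds because the adversary is oblivious (its commitments are independent of the green timings) and the collision event is measurable with respect to the timing variables alone, so neither $R$, the red request tuples, nor $\{\gamma_i\}_{i\in [\numg]}$ get reweighted by the conditioning.
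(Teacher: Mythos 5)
Your proposal is correct and follows essentially the same approach as the paper's proof: discretize $[0,1]$ into $T = \Theta\big((\numg+\numr)^2/\disceps\big)$ slots, union-bound the probability that a green request shares a slot with another request, condition on the no-collision event to obtain a bona fide instance of the discretized Byzantine Secretary model, run $\A_d$, and de-condition to pick up the $(1-\disceps)$ factor. Your explicit remark that conditioning on the (timing-measurable) no-collision event does not reweight the obliviously chosen red placements or green request contents is a slightly more careful articulation of a point the paper leaves implicit; the paper additionally spells out keeping only the first red request in any slot holding several reds, which your set notation $R := \{\lceil T\cdot t^{\mathsf{R}}_i\rceil\}_i$ handles implicitly.
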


\begin{proof}
Let $\A_d$ denote an algorithm for online resource allocation in the discretized Byzantine Secretary model. 
Now, consider an instance of  online resource allocation in the Byzantine Secretary model with $\gamma _1, \cdots, \gamma _{\numg}$ denoting the green requests, and $\opt$ denoting the offline optimum on these green requests.
The main idea is to appropriately select $T$ such that the resulting instance behaves like an instance of the discretized Byzantine Secretary model, and then to run $\A_d$ on this resulting instance. 

\medskip
\noindent{\bf Reduction.} We set $T = n^2/\disceps$ where $n = \numg + \numr$, and consider 
intervals of the form $\big[(t - 1) \cdot \frac{1}{T}, t \cdot \frac{1}{T}\big)$. 
Recall that the adversary first decides the set of requests $\gamma _1, \ldots, \gamma _{\numg}$ 
and 
$\gamma^\mathsf{R}_1, \ldots, \gamma^\mathsf{R}_{\numr}$, 
and the arrivals of the red requests; that is, $t^\mathsf{R}_1, \ldots, t^\mathsf{R}_{\numr}$, where  each $t_i^\mathsf{R} \in [0, 1]$.
Then each green request is independently assigned a uniformly random arrival time $t _i \sim \text{Unif}[0, 1]$.
We say that there is a \emph{conflict} if
a green request 
falls in an interval already containing another request (either red or green).
Note that the probability that a green request results in a conflict is at most $n/T$ (since at most $n$ intervals could contain other requests),
and by the union bound the probability of a conflict is at most $n^2/T$, which equals $\disceps$ by our choice of $T$. 
So, with probability at least $1 - \disceps$, each green request appears by itself in a time interval. 
We note that this is 
equivalent to selecting a random permutation $\pi$ as defined
in the discretized Byzantine Secretary model.
Furthermore, for a time interval that contains more than one red request, we keep only the first red request and discard the remaining. 
Lastly, for any empty interval, we assume that there is a dummy request with only the null decision $\phi$.

\medskip
\noindent{\bf Value.} From the previous discussion, with probability at least $1 - \disceps$, we have an instance of online resource allocation in the discretized Byzantine Secretary model with expected offline optimum of the green requests equal to $\opt$.
Running $\A_d$ gives us \emph{conditional} expected total value at least $\eta \cdot \opt$.
On de-conditioning, we get that the
expected total value is at least $(1 - \disceps)\cdot \eta \cdot \opt$. \end{proof}
Thus, our main result involves showing that there exists a nearly optimal pricing algorithm for online resource allocation in the discretized Byzantine secretary model.

\begin{Theorem}
\label{thm:romain}
Given $\epsilon > 0$, $T > {(\numg + \numr)}/{\epsilon}$, and an estimate $\esopt \in \big[\frac{\opt}{\beta}, \opt\big]$, there exists a pricing algorithm for the online resource allocation in the discretized Byzantine Secretary model that obtains expected total value at least $\left(1 - O(\epsilon)\right) \cdot \opt$ when budget satisfies $B_{j} =  \Omega\left(\frac{\log(m\beta/\epsilon)}{\epsilon^2}\right)$ for every $j \in [m]$, where $\opt$ represents the expected offline optimum of the green requests.
\end{Theorem}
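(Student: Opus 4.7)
The plan is to instantiate the Exponential Pricing algorithm (\Cref{alg:prsp}) with parameters scaled to the known estimate $\esopt$ in place of the unknown $\opt$, and then adapt the proof of \Cref{thm:exp-pricing-estimates} to the Byzantine Secretary structure. Two structural modifications are essential: (i) restart the algorithm at the midpoint $t = T/2$, so that in each half the price vector is a function of only half of the random permutation; and (ii) define per-slot consumption targets $\widehat{\cons}_{t,j}$ that reflect the uniform random placement of green requests rather than a fixed ordering. Since $\esopt \in [\opt/\beta,\opt]$, the initial price $\initprice$ enters only logarithmically through $\delta$, so the budget requirement inflates by only a $\log(\beta)$ factor, matching the bound in the statement.

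For the targets, observe that the adversary commits to the multiset of green requests $\gamma_1,\dots,\gamma_{\numg}$ \emph{before} the permutation, so this multiset is fixed (though unknown to the algorithm at the start of each half). Let $x^*$ be an optimal solution to the offline LP on the green requests with budget $(1-\epsilon)\B$; then $\obj(x^*) \geq (1-\epsilon)\cdot\opt$. For each half, set the per-slot target at a non-red slot $t$ to $\widehat{\cons}_{t,j} := \frac{1}{|[T]\setminus R|}\sum_{i\in[\numg]} \cons^*_{i,j}(x^*)$, which equals the conditional expected consumption at slot $t$ under the uniform random permutation. Summed across any prefix, these form an unbiased estimate of the cumulative consumption of $x^*$. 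Applying the Hoeffding--Serfling concentration inequality for sampling without replacement — pathwise within each half, conditional on the partition of the $\numg$ green requests between the halves — gives that uniformly over prefixes $s$,
\[
\Bigl|\sum_{t\leq s}\bigl(\cons^*_{t,j}(x^*) - \widehat{\cons}_{t,j}\bigr)\Bigr| \;=\; O\!\left(\tfrac{\epsilon^2 B_j}{\log(m\beta/\epsilon)}\right)
\]
with high probability, so the $\widehat{\cons}$'s qualify as good estimates in the sense of \Cref{def:goodestimates}. The excess-allocation bound (\Cref{lemma:prob-ea-small}) carries over unchanged since its proof is pathwise: if any resource approached exhaustion, its exponentially amplified price would exceed $\beta\cdot\esopt \geq \opt$, contradicting the bound on the number of high-value green requests (red requests only further inflate prices, which makes the contradiction stronger).

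The main obstacle will be adapting the loss-in-revenue bound (\Cref{claim:RevLoss}). In the prophet-model proof, \eqref{eq:independent} moved the expectation inside the inner product between $\bprice_i$ and $a_i(\theta^*_i)$ using independence of $\bprice_i$ from $\gamma_i$. In the Byzantine Secretary model both depend on the random permutation, so independence fails. The plan is to first condition on the split of $\numg$ green requests between the two halves; within each half the arrivals are then a uniformly random ordering of a fixed subpool. The pathwise no-regret inequality of \Cref{lem:noRegret} applies as-is to the sample path. To handle the correlated "expected-vs-actual" consumption term, I would write it as a martingale difference with respect to the filtration $\mathcal{F}_t$ generated by the first $t$ arrivals in that half, where the one-step conditional mean of $a_t(\theta^*_t)$ is the fresh per-slot average over the remaining pool; a Hoeffding--Serfling deviation bound, together with the stability of $\bprice_t$ guaranteed by the excess-allocation argument (which keeps $\bprice_t/\initprice$ bounded throughout the successful run), absorbs the resulting correlation into an $O(\epsilon)\cdot\opt$ slack. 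The role of restarting is precisely to limit the measurability of $\bprice_t$ to the current half's history, so that this deviation bound can be applied cleanly inside each half.

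Finally, red requests are benign for the analysis: the best-response decision only ever produces non-negative utility (via the null action $\phi$), so any value the algorithm collects from red arrivals is a bonus, while any budget they consume only raises prices and tightens the stability argument. Combining the two halves — each contributing at least $(1/2 - O(\epsilon))\cdot \opt$ in expectation (since the expected number of green requests in each half is $\numg/2$ and LP value is linear in the subset) — then yields the claimed $(1-O(\epsilon))\cdot \opt$ guarantee, completing the proof of \Cref{thm:romain}.
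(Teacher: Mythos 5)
Your overall architecture is right (restart at the midpoint, exponential pricing, no-regret pathwise, sampling-without-replacement concentration, red requests as free revenue), but there is a concrete gap in the concentration step that would make your route fail to produce the budget bound $B_j = \Omega(\log(m\beta/\epsilon)/\epsilon^2)$ claimed in the theorem.

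The problem is the reduction to the ``good estimates'' framework of \Cref{def:goodestimates} and \Cref{thm:exp-pricing-estimates}. You claim that setting $\widehat{\cons}_{t,j} := \frac{1}{|[T]\setminus R|}\sum_{i}\cons^*_{i,j}(x^*)$ and applying Hoeffding--Serfling gives, uniformly over prefixes $s$,
\[
\Bigl|\sum_{t\leq s}\bigl(\cons^*_{t,j}(x^*)-\widehat{\cons}_{t,j}\bigr)\Bigr| ~=~ O\!\left(\frac{\epsilon^2 B_j}{\log(m\beta/\epsilon)}\right)
\]
with high probability. This is false at the budget scale in the statement. The prefix sum is a sum of $\Theta(T)$ sampling-without-replacement variables in $[0,1]$ with total expectation $O(B_j)$; its typical deviation is $\Theta\bigl(\sqrt{B_j\log(\cdot)}\bigr)$, while your target accuracy $\epsilon^2 B_j/\log(\cdot)$ equals $\Theta(1)$ when $B_j=\Theta(\log(\cdot)/\epsilon^2)$. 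Matching $\sqrt{B_j\log}\lesssim \epsilon^2 B_j/\log$ forces $B_j = \Omega(\log^3(\cdot)/\epsilon^4)$, a strictly worse budget than the theorem promises. A further $\log T$ inflation appears because \Cref{alg:prsp}'s parameters carry a $\log(nm\beta/\epsilon)$ factor and here the role of $n$ is played by $T$, which can be arbitrarily large; the paper points this out explicitly.

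The paper sidesteps this entirely by \emph{not} reducing to good estimates: \Cref{alg:ropip} uses a fixed deterministic per-slot target $a^*_j=(1-\epsilon)B_j/T$ and a step size $\delta=\epsilon$ (not $\widetilde O(1/(\epsilon B_j))$), and the loss in revenue is controlled term-by-term in \Cref{lma:s-wo-r}: for each slot $t$, one conditions on the history $\calH$, uses the fact that $\price_{t,j}$ is deterministic given $\calH$ and bounded by $\initprice\cdot(m\beta/\epsilon)^8$ before termination, and then applies Bernstein for sampling without replacement to bound $\pr_\calH\bigl(\E[\widetilde a_{t,j}\mid\calH]\geq a^*_j\bigr)$. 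This per-slot concentration needs only the remaining-pool average to stay within $O(\epsilon B_j/T)$ of its population mean, which holds at $B_j=\Omega(\log(\cdot)/\epsilon^2)$ — no cumulative prefix accuracy is needed. Your martingale/Hoeffding--Serfling sketch in the last paragraph is in the right spirit, but you should drop the ``good estimates'' framing entirely. Two other details you'd want to add: the comparator $\widetilde\theta_t$ must be thinned (the paper plays the optimal decision only with probability $1-4\epsilon$ and null otherwise) precisely to create the $\Theta(\epsilon B_j/T)$ buffer between the conditional mean and $a^*_j$ that the concentration bound needs; and the case analysis replaces \Cref{lemma:prob-ea-small} with \Cref{lma:bevlarge}, which shows that when the termination event has probability $\geq\epsilon$ the algorithm already collects $\geq\opt$, rather than showing the event is rare.
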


Combining \Cref{thm:romain} with  \Cref{lma:oritodisc}  proves \Cref{thm:romainori}.

\subsection{Exponential Pricing for Discretized Byzantine Secretary}\label{sec:discByzantineSec}

In this section, we give the proof of \Cref{thm:romain}.  
As before, the algorithm uses the expected allocation of a given resource for each request. 
In the case of the Random-Order model (without corruptions), the expected allocation of resource $j \in [m]$ at time $t \in [T]$ of an optimal solution would be $B_j / T$.
Motivated by this, we find setting
$(1-\epsilon) \cdot B_j/T$
for every $t \in [T]$ and every $j \in [m]$ suffices, where the expected consumption is scaled down by a factor of $1 - \epsilon$ to avoid exhausting budget.
The prices are set according to~\eqref{eq:ora-ro-prices}. We note that the scaling factor in \eqref{eq:ora-ro-prices} is $\epsilon$, instead of the $\delta = \OTild(1/(\epsilon \cdot B_j))$  used in \Cref{alg:prsp}. Essentially, this change helps us avoid losing a $\log T$ factor in the large budget assumption. However, the corresponding $\log n$ factor in  \Cref{alg:prsp} is unavoidable due to the learning error from the single sample.

Given the prices, the $t$-th decision
involves playing a best-response decision.
Lastly, we set the following termination condition: if, for any $i \in [T/2]$, there exists $j \in [m]: \price_{t+1, j} = \initprice \cdot \exp\left(\epsilon \cdot \left(\sum_{s\leq t}\calg_{s, j} - t \cdot (1-\epsilon) \cdot B_j/T \right)\right) > \initprice \cdot  (m\beta/\epsilon)^8$, the algorithm stops/terminates. 
The algorithm restarts at $t = T/2 + 1$.

\begin{algorithm}
\caption{\textsc{Exponential-Pricing-Byzantine} ($\mathcal{I}, \epsilon, \widehat{\opt}$)}
\label{alg:ropip}
\begin{algorithmic}[1]
\For{$t = 1, \ldots, \frac{T}{2}$}
\State $(\val_t, a_t, \Theta_t) \gets $ $t$-th request 
\State $\bprice_t \gets $ vector denoting resource prices before $t$-th request arrives;
for $j \in [m]$, we set 
\begin{equation}\label{eq:ora-ro-prices}
\textstyle  \price_{t,j} \gets \initprice \cdot \exp\left(\epsilon \cdot \left(\sum_{s < t}\calg_{s, j} - (t-1)\cdot (1-\epsilon) \cdot B_j/T \right)\right)
\end{equation}
\qquad where  $\initprice \gets \epsilon^5 \cdot \esopt / m^4$
\State Take $\theta_t \in \arg \max_{\theta \in \Theta_t} \left(\val_t(\theta) - \langle \bprice_t, a_t(\theta) -  (1-\epsilon) \cdot \B/T \rangle\right)$, and the algorithm gains $\val_t(\theta_t)$.
\State Set vector $(\calg_{t,1}, \cdots, \calg_{t, m}) = a_t(\theta_t)$.
\If{ $\exists j \in [m]: \,$ $\price_{t+1, j} = \initprice \cdot \exp\left(\epsilon \cdot \left(\sum_{s\leq t}\calg_{s, j} - \frac{t \cdot (1-\epsilon)B_j}{T} \right)\right) > \initprice \cdot  (m\beta/\epsilon)^8$}
\State  \textbf{break}
\EndIf
\EndFor
\State Restart the algorithm for $t = \frac{T}{2} + 1, \ldots, T$.
\end{algorithmic}
\end{algorithm}

Now, we prove \Cref{thm:romain} by showing that \Cref{alg:ropip} is the desired algorithm.

\begin{proof}[Proof of \Cref{thm:romain}] In this proof, we assume $\epsilon \in [0,1/2]$ to be a fixed parameter, since the theorem is true for $\epsilon>1/2$.

To simplify notation, we define the vector 
\[
\ba^* ~=~ (a^*_1, \cdots, a^*_m) ~:=~ \frac{(1 - \epsilon) \cdot \B}{T},
\] which corresponds to the scaled expected budget consumption in each round. 
We first verify that \Cref{alg:ropip} is feasible. To do this, we first show that the first half of the algorithm
uses at most ${B_j}/{2}$ budget for any resource $j \in [m]$. Suppose not, then there exists $t \in [T/2]$ and $j \in [m]$ that satisfies 
    \[
    \sum_{s = 1}^t \calg_{s, j} ~>~ \frac{B_j}{2} ~\geq~ t \cdot a^*_j+ \frac{\epsilon \cdot B_j}{2}.
    \]
Then, we have
\begin{align*}
\price_{t, j} ~&=~ \initprice \cdot \exp \left( \textstyle \epsilon \cdot \left(\Big(\sum_{s = 1}^t \calg_{s, j} - t\cdot a^*_j\Big) - \Big(\calg_{t, j} - a^*_j\Big)\right)\right)\\
&>~ \initprice \cdot \exp \left( \textstyle \epsilon \cdot \left(\frac{\epsilon \cdot B_j}{2} - 1\right)\right) ~\geq~ \initprice \cdot  (m\beta/\epsilon)^8,
\end{align*}
where the last inequality holds when $B_j  \geq  {20\log (m\beta/\epsilon)}/{\epsilon^2}$.
Note that the above inequality leads to a contradiction (see the termination condition), and thus the first half of the algorithm uses budget at most $B_j / 2$ for every $j \in [m]$.

This observation implies that we can 
analyze the second half of \Cref{alg:ropip} in the same way as the first half, i.e., the first and second half of \Cref{alg:ropip} are symmetric. 
Let $\alg_{ro}$ be a random variable denoting the value obtained in the first half of \Cref{alg:ropip} (value collected until ${T}/{2}$). 
Then, it's sufficient to show $\E[\alg_{ro}] \geq (1 - O(\epsilon)) \cdot \frac{\opt}{2}$: the expected total value 
obtained by \Cref{alg:ropip} is $2 \cdot \E[\alg_{ro}] \geq (1 - O(\epsilon)) \cdot \opt$.

In order to analyze $\E[\alg_{ro}]$,  we define an event $\bev$ which corresponds to the event that for some $t \in [T/2]$ and $j \in [m]$:
$\price_{t+1, j} = \initprice \cdot \exp\left(\epsilon \cdot \left(\sum_{s\leq t}\calg_{s, j} - t \cdot a^*_j \right)\right) > \initprice \cdot  (m\beta/\epsilon)^8$; that is, termination condition of \Cref{alg:ropip} is satisfied.

We analyze $\E[\alg_{ro}]$ by thresholding $\pr(\bev)$. When $\pr(\bev)$ is large, the following lemma guarantees that our algorithm achieves a good value:

\begin{Lemma}
    \label{lma:bevlarge}
    If $\pr(\bev) \geq \epsilon$, then $\alg_{ro} \geq \opt$.
\end{Lemma}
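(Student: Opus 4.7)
}
The plan is to show that whenever event $\bev$ is triggered, the algorithm has already collected so much revenue from the resource whose price exploded that the expected value dominates $\opt$. The conclusion $\alg_{ro}\geq \opt$ should be read as $\E[\alg_{ro}] \geq \opt$: the bound on event $\bev$ will be a factor of roughly $m^4\beta^7/\epsilon^3$ larger than $\opt$, so a hypothesis $\pr(\bev)\geq \epsilon$ suffices to pay for the expectation.

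First I would use the best-response property of the algorithm to relate value to revenue. The null decision $\phi\in\Theta_t$ attains utility $\langle \bprice_t, \ba^*\rangle$ against the shifted prices, and the chosen $\theta_t$ maximizes utility, which gives
\[
  \val_t(\theta_t)-\langle \bprice_t, a_t(\theta_t)-\ba^*\rangle \;\geq\; \langle \bprice_t,\ba^*\rangle,
\]
i.e.\ $\val_t(\theta_t)\geq \langle \bprice_t,\calg_t\rangle$. Summing over $t\leq t^*$, where $t^*\leq T/2$ is the step at which the termination condition first fires (say for resource $j^*$), I get $\alg_{ro}\geq \sum_{s\leq t^*}\price_{s,j^*}\cdot \calg_{s,j^*}$ pointwise on $\bev$.

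Next I would push the lower bound through the exponential update. Writing $\price_{s+1,j^*}-\price_{s,j^*}=\price_{s,j^*}\bigl(\exp(\epsilon\calg_{s,j^*})\exp(-\epsilon a^*_{j^*})-1\bigr)\leq \price_{s,j^*}\bigl(\exp(\epsilon\calg_{s,j^*})-1\bigr)$, and using convexity of $\exp$ on $[0,\epsilon]$ to get $\exp(\epsilon\calg_{s,j^*})-1\leq \calg_{s,j^*}(\exp(\epsilon)-1)$ (since $\calg_{s,j^*}\in[0,1]$), I obtain
\[
  \price_{s+1,j^*}-\price_{s,j^*} \;\leq\; (\exp(\epsilon)-1)\cdot \price_{s,j^*}\cdot \calg_{s,j^*}.
\]
Telescoping $s=1,\dots,t^*$ and using $\exp(\epsilon)-1\leq 2\epsilon$ for $\epsilon\leq 1$,
\[
  \sum_{s\leq t^*}\price_{s,j^*}\cdot \calg_{s,j^*} \;\geq\; \frac{\price_{t^*+1,j^*}-\initprice}{\exp(\epsilon)-1} \;\geq\; \frac{\price_{t^*+1,j^*}}{4\epsilon}.
\]
This is the key step; neatly handling the baseline shift $a^*_{j^*}$ (which just tilts the ratio by $\exp(-\epsilon a^*_{j^*})\leq 1$) is the main obstacle, and the bound above is the clean conclusion.

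Finally I would substitute the termination threshold $\price_{t^*+1,j^*}>\initprice\cdot(m\beta/\epsilon)^8$ together with $\initprice=\epsilon^5\esopt/m^4$ and $\esopt\geq \opt/\beta$, yielding
\[
  \alg_{ro}\cdot \mathbf{1}[\bev] \;\geq\; \frac{\initprice\cdot(m\beta/\epsilon)^8}{4\epsilon}\cdot \mathbf{1}[\bev] \;=\; \frac{\esopt\, m^4\beta^8}{4\epsilon^4}\cdot \mathbf{1}[\bev] \;\geq\; \frac{\opt\, m^4\beta^7}{4\epsilon^4}\cdot \mathbf{1}[\bev].
\]
Taking expectations and using $\pr(\bev)\geq \epsilon$ together with $m,\beta\geq 1$ and $\epsilon\leq 1/2$ gives $\E[\alg_{ro}]\geq \epsilon\cdot \opt\, m^4\beta^7/(4\epsilon^4)\geq \opt$, completing the proof.
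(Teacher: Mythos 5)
Your proof is correct, but it takes a genuinely different route from the paper's. Both arguments begin the same way: use the best-response/non-negative utility property to lower bound $\alg_{ro}$ by the revenue $\sum_{s \le \stoptime}\price_{s,j^*}\calg_{s,j^*}$ from the resource $j^*$ that triggered $\bev$, and then show this revenue is at least $\initprice \cdot \mathrm{poly}(m\beta/\epsilon)$. Where you differ is in extracting that lower bound. The paper identifies an index $\widehat\ell$ such that $\sum_{i\in[\widehat\ell,\stoptime]}\calg_{i,j^*} \in [1,2]$ (exploiting that each step consumes at most one unit) and observes that every price $\price_{i,j^*}$ with $i \ge \widehat\ell$ is at least $\initprice\cdot(m\beta/\epsilon)^6$, so the last unit of consumption alone yields revenue $\ge \initprice\cdot(m\beta/\epsilon)^6$. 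You instead telescope the price increments: from $\price_{s+1,j^*}-\price_{s,j^*} \le (\exp(\epsilon)-1)\,\price_{s,j^*}\calg_{s,j^*}$ (convexity on $[0,1]$ plus dropping the $\exp(-\epsilon a^*_{j^*})\le 1$ factor), you get $\sum_s \price_{s,j^*}\calg_{s,j^*} \ge (\price_{\stoptime+1,j^*}-\initprice)/(\exp(\epsilon)-1) \ge \initprice(m\beta/\epsilon)^8/(4\epsilon)$. Your potential/telescoping argument is closer in spirit to the no-regret lemma (\Cref{lem:noRegret}) and yields a slightly stronger pointwise bound with less case work, whereas the paper's window extraction is more elementary and mirrors the proof of \Cref{lemma:prob-ea-small}. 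Both deliver the same conclusion after multiplying by $\pr(\bev)\ge\epsilon$ and using $\initprice = \epsilon^5\esopt/m^4$ with $\esopt\ge\opt/\beta$.

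One small stylistic note: your route through the shifted utility ($\langle\bprice_t,\ba^*\rangle$ for the null action) is correct but unnecessary — the shift $\langle\bprice_t,\ba^*\rangle$ in the argmax of \Cref{alg:ropip} is independent of $\theta$, so it does not change the best response, and the usual $\val_t(\theta_t)-\langle\bprice_t,a_t(\theta_t)\rangle \ge 0$ follows directly from comparing against $\phi$.
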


The proof  of \Cref{lma:bevlarge} is based on the simple idea  that our algorithm gets a  large revenue (more than $\opt /\epsilon$)  when $\bev$ happens, as the price grows extremely high. Hence, we defer the  proof to \Cref{sec:lmabevlarge-proof}.

\Cref{lma:bevlarge} suggests that $\alg_{ro}$ is sufficiently large when $\pr(\bev)$ is large. It remains to discuss the case where $\pr(\bev) < \epsilon$.
In this case, the algorithm reaches the half-time with probability at least $1-\epsilon$. We will show 
that $\E[\alg_{ro}]$ is approximately $\big(1 - O(\epsilon)\big)\cdot \frac{\opt}{2}$. 

Let $\stoptime$ be a random variable denoting
the time that the algorithm stops. 
Under event $\bev$, we set $\stoptime$ to be the request after which the algorithm terminates. 
Otherwise, we set $\stoptime = T/2$ and have $\pr[\stoptime = T/2] = 1 - \pr(\bev)$.
Furthermore, let $\{\widetilde {\theta}_{t}\}_{t \in [T]}$ denote random decisions 
that follow the offline optimum decision with probability $1-4\epsilon$, i.e.,  we define 
\begin{align*}
    \widetilde \theta_t ~:=~ \left\{ 
    \begin{aligned}
        & \theta^{*}_i & \qquad & \text{with probability } ~1-4\epsilon~ \text{ when } \pi(i) = t,\\
        & \phi &\qquad  &\text{ if } t \notin G\text{ or } \text { or w.p. } 4\epsilon \text{ when } t \in G
    \end{aligned}
    \right.
\end{align*}
where $\{\theta^*_i\}$ denotes the optimal decisions made by the offline optimum on the green requests; that is, $\opt ~=~ \sum_{i = 1}^{\numg} \val _i(\theta^{*}_i).$

We start from decomposing $\alg_{ro}$ into utility and revenue:
\begin{align*} 
 \alg_{ro} ~=~ \textstyle\sum_{t = 1}^\stoptime \val_t(\theta_t) \, &= \,\, \underbrace{\textstyle\sum_{t = 1}^\stoptime \Big( \val_t(\theta_t) - \langle\bprice_t, a_t(\theta_t)\rangle \Big)}_{\mathsf{Utility}} \,\, + \,\, \underbrace{\textstyle\sum_{t = 1}^\stoptime \langle\bprice_t, a_t(\theta_t)\rangle}_{\mathsf{Revenue}} \\
 &\textstyle\geq~ \sum_{t = 1}^\stoptime \Big( \val_t(\widetilde{\theta}_{t}) - \langle\bprice_t, a_t(\widetilde{\theta}_{t})\rangle \Big) +  \sum_{i = 1}^\stoptime \langle\bprice_t, a_t(\theta_t)\rangle, \notag 
\end{align*}
where the inequality follows from the fact that decisions $\{\theta_t\}_t$ are performing best response. Taking the expectations for both sides of the above inequality and rearrange the terms, we have
\begin{align*}
   \textstyle  \E[\alg_{ro}] ~\geq~ \E\left[\sum_{t = 1}^\stoptime  \val_t(\widetilde{\theta}_{t})   \right] -  \E\left[\sum_{i = 1}^\stoptime \langle\bprice_t, a_t(\widetilde{\theta}_t) - a_t({\theta}_{t})\rangle \right] \enspace .
\end{align*}
The first term on the right hand side can be lower bound as
\[
 \E\Big[ \sum_{t = 1}^\stoptime \val_t(\widetilde{\theta}_t)\Big] ~=~ \sum_{t = 1}^{T/2} \E\Big[\val_t(\widetilde{\theta}_t) \cdot \one[\stoptime \geq t] \Big]
    ~=~\sum_{t =  1}^{T/2} \E\Big[\val_t(\widetilde{\theta}_t)\Big] \cdot \pr[\stoptime \geq t] \enspace ,
\]
where we use independence to obtain the second equality
since $\stoptime$ only depends on the history from $1$ to $t - 1$ and $\val_t(\widetilde{\theta}_t)$ depends only on the type of $t$-th request since it depends on the offline solution $\widetilde \theta_t$. This can be further simplified to obtain
\begin{align}
    \sum_{t =  1}^{T/2} \E\left[\val_t(\widetilde{\theta}_t)\right] \cdot \pr[\stoptime \geq t]&\geq~ \sum_{t =  1}^{T/2} (1 - \pr(\bev)) \cdot \E\left[\val_t(\widetilde{\theta}_t)\right] \notag \\
    &\geq~ (1 - \pr(\bev))\cdot (1 - 4\epsilon)\cdot\sum_{t =  1}^{T/2} \E\left[\val_t({\theta}^*_t)\right] \enspace . \label{eq:ro-case2-p1-1} 
 \end{align}

Finally, in order to bound $\sum_{t =  1}^{T/2} \E\left[\val_t({\theta}^*_t)\right]$ in terms of $\opt$, we bound the probability that a green request, say $\gamma_i $, contributes to the sum.
Towards this end, fix a green request $\gamma_i $, and note that
by the definition of $\widetilde \theta_t$,
this request contributes to the sum when $\pi(i) \leq T/2$. Note that
\[
\pr\left(\pi(i) ~>~ \frac{T}{2}\right) ~\leq~ \frac{{T}/{2}}{T - \numr} ~\leq~ \frac{1}{2\cdot(1-\epsilon)} \leq \frac{1}{2} + \epsilon,
\]
since there are $\numr$ red requests, and in the worst-case, they all occur in the first $T/2$ time intervals. The second inequality above uses $T \geq \numr/\epsilon$. So, we have
\begin{align}
   \sum_{t =  1}^{T/2} \E\left[\val_t({\theta}^*_t)\right] ~&=~ \sum_{i \in \numg} \val_i (\theta_i^{*}) \cdot \pr\left(\pi(i) \leq T/2\right) \geq~ \sum_{i \in \numg} \val_i (\theta_i^{*}) \cdot \left(\frac{1}{2} - \epsilon\right) ~=~ \left(\frac{1}{2} - \epsilon\right) \cdot \opt. \label{eq:ro-case2-p1-2}
\end{align}
Combining \eqref{eq:ro-case2-p1-1} and \eqref{eq:ro-case2-p1-2}, we get 
\[
\E\left[ \sum_{t = 1}^\stoptime \val_t(\widetilde{\theta}_t)\right] \geq  (1 - \pr(\bev))\cdot (1 - 4\epsilon)\cdot \left(\frac{1}{2} - \epsilon\right) \cdot \opt ~\geq~ (1 - 7\epsilon) \cdot \frac{\opt}{2},
\]
where the last inequality uses the assumption that $\pr(\bev) < \epsilon$.

So far, we have shown
\begin{equation}\label{eq:ro-part1}
    \E[\alg_{ro}] ~\geq~ (1 - 7\epsilon) \cdot \frac{\opt}{2} - {\E\Big[\sum_{t =  1}^{\stoptime} \langle \bprice_t,  a_t(\widetilde{\theta}_t) - a_t(\theta_t)   \rangle \Big]}. 
\end{equation}

As in the previous proofs, we interpret the last term in the above expression as the ``loss in revenue'' that arises from making the best-response decisions. 

\paragraph{Bounding the Loss in Revenue.}  We start by  splitting $\loss$ into two terms:
\[
\loss ~:=~ \E\Big[\sum_{t =  1}^{\stoptime} \langle \bprice_t,  a_t(\widetilde{\theta}_t) - a_t(\theta_t)   \rangle \Big]~=~ \E\Big[\sum_{t =  1}^{\stoptime} \langle \bprice_t,  a_t(\widetilde{\theta}_t) - \ba^*  \rangle \Big] + \E\Big[\sum_{t = 1}^{\stoptime} \langle \bprice_t, \ba^* - a_t(\theta_t) \rangle \Big],
\]
where we recall $\ba^* = \frac{1-\epsilon}{T} \cdot \B$ is the expected budget consumption per time step.

For the first term, note that we have $\E[a_t(\widetilde \theta_t)] \leq (1-4\epsilon) \cdot \B/(T - \numr) \leq 1 - 3\epsilon$, which is further upper-bounded by $\ba^*$. Therefore, ideally we wish to move the expectation into the dot product, and upper bound the first term by $0$. If we are in the stochastic Online Resource allocation model or if the green requests are i.i.d., the decision $\widetilde{\theta}_t$ can be made to depend only on the type of request $\gamma_t$, and is therefore independent of the price vector $\bprice_t$. However, when the green requests arrive in a random order, this random-order process brings an extra correlation between $\bprice_t$ and $\gamma_t$, preventing us from moving the expectation into the dot product. To fix this issue, we view the randomness of $\gamma_t$ as a sampling without replacement process, and further bound the first term as follows:

\begin{Lemma}
\label{lma:s-wo-r}
    We have 
    \[
    \E\Big[\sum_{t =  1}^{\stoptime} \langle \bprice_t,  a_t(\widetilde{\theta}_t) - \ba^*  \rangle \Big] ~\leq~ \epsilon \cdot \opt.
    \]
\end{Lemma}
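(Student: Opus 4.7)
My plan is to first replace $a_t(\widetilde\theta_t)$ by its conditional expectation (absorbing the within-step randomness via a martingale step), and then to exploit the sampling-without-replacement structure of the random injection $\pi$ to compare that expectation with $\ba^*$.

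\textbf{Conditional-expectation reduction.} Let $\mathcal{F}_{t-1}$ be the $\sigma$-algebra of all information revealed before slot $t$, so that $\bprice_t$ and $\one[t\le\stoptime]$ are $\mathcal{F}_{t-1}$-measurable. Setting $\widetilde{\ba}_t := \E[a_t(\widetilde\theta_t)\mid\mathcal{F}_{t-1}]$, the tower property lets me rewrite the quantity of interest as $\E\big[\sum_{t\le\stoptime}\langle\bprice_t,\widetilde{\ba}_t-\ba^*\rangle\big]$. If $t\in R$, then $\widetilde\theta_t=\phi$ and $\widetilde{\ba}_t=\mathbf 0$, making the summand non-positive. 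If $t\notin R$, then conditional on $\mathcal{F}_{t-1}$ the unseen green requests $G_t:=\{i:\pi(i)\ge t\}$ are uniformly permuted over the remaining non-red slots $[t,T]\setminus R$ (of size $M_t$), yielding
\[
\widetilde{\ba}_t \;=\; (1-4\epsilon)\cdot\frac{S_t}{M_t},\qquad S_t:=\sum_{i\in G_t}a_i(\theta_i^*).
\]
Taking expectations gives $\E[\widetilde{\ba}_t]=\bar{\ba}:=(1-4\epsilon)\sum_{i\in[\numg]}a_i(\theta_i^*)/(T-\numr)$, and a short computation using $\sum_i a_i(\theta_i^*)\le\B$ together with $\numr<\epsilon T$ (from $T>(\numg+\numr)/\epsilon$) shows that $\bar{\ba}\le\ba^*$ coordinate-wise. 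Splitting $\widetilde{\ba}_t-\ba^* = (\widetilde{\ba}_t-\bar{\ba})+(\bar{\ba}-\ba^*)$ and discarding the non-positive second piece reduces the problem to bounding $\E\big[\sum_{t\le\stoptime,\,t\notin R}\langle\bprice_t,\widetilde{\ba}_t-\bar{\ba}\rangle\big]$.

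\textbf{Sampling-without-replacement concentration.} The remaining deviation $\widetilde{\ba}_t-\bar{\ba}=(1-4\epsilon)(S_t-\E[S_t])/M_t$ is the centred sample sum of the values $\{a_i(\theta_i^*)\}_{i\in[\numg]}$ over a uniformly random subset $G_t$, which is a canonical sampling-without-replacement statistic. I will apply Serfling's inequality (equivalently, Azuma to the Doob martingale obtained by revealing $\pi$ one slot at a time) to show that, with probability at least $1-\epsilon/2$, the coordinate-wise deviation $|S_t^{(j)}-\E[S_t^{(j)}]|$ is uniformly small in $t\le T/2$ and $j\in[m]$. Combined with the uniform price bound $\|\bprice_t\|_\infty \le \initprice\cdot(m\beta/\epsilon)^8$ for $t\le\stoptime$---which follows from the termination condition in \Cref{alg:ropip} (the same bound invoked in the proof of \Cref{lma:bevlarge})---the resulting price-weighted deviation summed over the first half works out to $O(\epsilon\cdot\esopt)\le\epsilon\cdot\opt$.

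\textbf{Main obstacle.} The crux is the sampling-without-replacement step: since $\bprice_t$ depends adaptively on the entire observed prefix of $\pi$, I must couple the adaptive pricing with the sampling process and apply the concentration inequality uniformly over $t$. This is exactly where the midpoint restart in \Cref{alg:ropip} pays off---it guarantees $M_t\ge T/2-\numr=\Omega(T)$, and via a hypergeometric concentration $|G_t|=\Omega(\numg)$, throughout the first half, which is what keeps the denominator of $\widetilde{\ba}_t-\bar{\ba}$ large enough for Serfling to deliver a useful bound and, by symmetry, lets the same argument carry over to the restarted second half.
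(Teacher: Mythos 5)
The setup of your proposal is sound (conditioning on the history, identifying $\widetilde{\ba}_t = (1-4\epsilon) S_t / M_t$, computing $\bar{\ba}$, and verifying $\bar{\ba} \le \ba^*$ coordinate-wise), and it tracks the paper's proof up to that point. The genuine gap is in the final concentration step, and it is not a minor fix.

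By splitting $\widetilde{\ba}_t - \ba^* = (\widetilde{\ba}_t - \bar{\ba}) + (\bar{\ba} - \ba^*)$ and discarding the second (non-positive) term, you throw away the crucial slack $a^*_j - \bar{a}_j \geq \Omega(\epsilon B_j / T)$. The paper's proof uses that slack as the deviation threshold: it shows $\E[\widetilde{a}_{t,j}\mid\calH] - a^*_j$ is non-positive \emph{unless} the conditional expectation exceeds $a^*_j$, which is a deviation of order $\epsilon B_j/T$ above its mean $\mu \leq (1-3\epsilon)B_j/T$, and hence by sampling-without-replacement Bernstein happens with probability at most $(\epsilon/m\beta)^{10}/B_j$. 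That tiny per-$(t,j)$ probability, multiplied by the worst-case price ceiling $\initprice\cdot(m\beta/\epsilon)^8$ and the worst-case conditional allocation $4B_j/T$, then summed over $\leq Tm/2$ pairs, gives $\epsilon\cdot\opt$. Once you have discarded the gap, you are left to bound $\E[\sum_{t,j}\lambda_{t,j}(\widetilde{a}_{t,j}-\bar{a}_j)]$, which is a price–allocation covariance. Its typical magnitude is of order $(\max\lambda)\cdot(\text{std of }\widetilde{a}_{t,j}) \approx \initprice(m\beta/\epsilon)^8 \cdot \sqrt{B_j}/T$ per term, and summed over $\Theta(Tm)$ terms this is $\Theta\big(m\sqrt{B_j}\cdot\initprice(m\beta/\epsilon)^8\big) = \Theta(\sqrt{B_j}\,m^5\beta^8\epsilon^{-3}\,\esopt)$, which is far larger than $\epsilon\cdot\opt$. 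In other words, the decomposition is genuinely lossy: the term you discarded is precisely what makes the argument go through.

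Independently, the claimed confidence $1-\epsilon/2$ is orders of magnitude too weak. On the bad event the summand can be as large as $\initprice(m\beta/\epsilon)^8 \cdot 4B_j/T$ per $(t,j)$; with probability $\epsilon/2$ and $\Theta(Tm)$ terms, the bad-event contribution alone is $\Theta(\epsilon\,m\,B_j\cdot\initprice(m\beta/\epsilon)^8) = \Theta(B_j m^5\beta^8\epsilon^{-2}\,\esopt)$, which dwarfs $\epsilon\cdot\opt$. The price ceiling grows like $(m\beta/\epsilon)^8$, so the failure probability must shrink correspondingly (the paper needs roughly $(\epsilon/m\beta)^{10}/B_j$ per fixed $(t,j)$, not $\epsilon/2$). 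Finally, your plan to make the deviation "uniformly small in $t\leq T/2$ and $j\in[m]$" is a union bound over $\Theta(Tm)$ events, which would force the budget requirement to pick up a $\log T = \log(n^2/\epsilon)$ factor; the paper avoids this by bounding each pair $(t,j)$ in expectation separately, never union-bounding over $t$. To repair your proof you would need to (i) retain the slack $a^*_j - \bar{a}_j$ as the concentration threshold rather than centering at $\bar{a}_j$ and taking absolute values, (ii) target a failure probability on the order of $(\epsilon/m\beta)^{10}/B_j$, and (iii) argue per $(t,j)$ rather than uniformly over all of them — at which point you have essentially reconstructed the paper's argument.
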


The proof of  \Cref{lma:s-wo-r} requires concentration inequalities for sampling-without-replacement and is deferred   to \Cref{sec:lmaswor-proof}. To complete the proof of the theorem, it remains to bound
\[
\E\Big[\sum_{t = 1}^{\stoptime} \langle \bprice_t, \ba^* - a_t(\theta_t) \rangle \Big] ~=~ \sum_{j \in [m]} \E\Big[\sum_{t = 1}^\tau \lambda_{t, j} \cdot (a^*_j - \calg_{t, j})\Big] \enspace .
\]
We apply the no-regret property (\Cref{lem:noRegret}) of  Exponential Pricing  to every $j \in [m]$, which gives
\[
\sum_{t=1}^\stoptime \lambda_{t, j} \cdot (a^*_j - \calg_{t, j}) ~\leq~ \frac{2 \initprice}{\epsilon} + 4\epsilon \cdot \sum_{t = 1}^\stoptime \lambda_{t, j} \cdot \calg_{t,j}
\]
for every realization of the instance (note the change in sign). Taking the expectation and summing over all $j \in [m]$, we have 
\begin{align*}
    \E\Big[\sum_{t = 1}^{\stoptime} \langle \bprice_t, \ba^* - a_t(\theta_t) \rangle \Big] ~&\leq~ \frac{2m \initprice}{\epsilon} + 4\epsilon \cdot \E\Big[\sum_{t = 1}^\stoptime \sum_{j = 1}^m \lambda_{t,j} \cdot \calg_{t, j} \Big] ~\leq~ \epsilon \cdot \opt + 4\epsilon \cdot \E[\alg_{ro}] \enspace ,
\end{align*}
where the second inequality uses the fact that $2m \initprice/\epsilon \leq \epsilon \cdot \esopt \leq \epsilon \cdot \opt$ and that $\alg_{ro} \geq \sum_{t \in [\stoptime]} v_t(\theta_t) \geq \sum_{t \in [\stoptime]} \sum_{j \in [m]} \lambda_{t,j} \cdot \calg_{t, j}$ since the utility is always non-negative.

Finally, applying the above inequality and \Cref{lma:s-wo-r} to \eqref{eq:ro-part1} and rearranging, we have
\[
\E[\alg_{ro}] ~\geq~ \frac{1-11\epsilon}{1+4\epsilon} \cdot \frac{\opt}{2} ~\geq~ (1 - 15\epsilon) \cdot \frac{\opt}{2} \enspace ,
\]
which completes the proof of \Cref{thm:romain}.
\end{proof}

\subsection{Proof of Lemma~\ref{lma:s-wo-r} via Sampling Without Replacement}
\label{sec:lmaswor-proof}

In this subsection, we show $\E\left[\sum_{t =  1}^{\stoptime} \langle \bprice_t,  a_t(\widetilde{\theta}_t) - \ba^*  \rangle \right] \leq \epsilon \cdot \opt$. For simplicity of notation, we define vector $(\widetilde a_{t, 1}, \cdots, \widetilde a_{t, m}) = a_t(\widetilde \theta_t)$. This simplifies our target expression as
\[
\E\Big[\sum_{t =  1}^{\stoptime} \langle \bprice_t,  a_t(\widetilde{\theta}_t) - \ba^*  \rangle \Big] ~=~ \sum_{t \in [\stoptime] \setminus R} \sum_{j = 1}^m \E\left[\lambda_{t, j} \cdot \Big(\widetilde{a}_{t,j} - a^*_j \right)\Big].
\]

Now, we fix $t \in [\stoptime] \setminus R$ and $j\in [m]$, 
and give an upper bound for $\E\left[\lambda_{t, j} \cdot \left(\widetilde{a}_{t,j} - a^*_j \right)\right]$. 
To analyze this term, we \emph{condition} on the history, say $\calH$, until (and including) time $t-1$.
This includes the random realizations and decisions made until time $t-1$.
Given history $\calH$ until time $t-1$, the price vector $\bprice_t$ is deterministic, and we get:
\[
\E\left[\lambda_{t, j} \cdot \left(\widetilde{a}_{t,j} - a^*_j \right) \mid \calH\right] ~=~ \price_{t, j} \cdot \E\left[\left(\widetilde{a}_{t,j} - a^*_j \right) \mid \calH \right]
\]
It would be more accurate to denote the price as $\bprice_t(\calH)$ to make the dependence on $\calH$ explicit: we drop this when $\calH$ is clear from context. 
We crucially prove:
\begin{equation}\label{eq:ro-key-eq}
     \sum_{\calH} \pr(\calH) \cdot \E\left[\left(\widetilde{a}_{t,j} - a^*_j \right) \mid \calH \right] ~\leq~ \frac{4}{T } \cdot \left(\frac{\epsilon}{m\beta}\right)^{10},
\end{equation}
where $\pr(\calH)$ denotes the probability that history $\calH$ occurs.

Using conditional expectations, we can write
\begin{align}
    \E\left[\price_{t, j} \cdot \left(\widetilde{a}_{t,j} - a^*_j \right)\right] ~&=~ \sum_{\calH} \pr(\calH)\cdot  \E\left[\price_{t, j}(\calH) \cdot \left(\widetilde{a}_{t,j} - a^*_j \right) \mid \calH \right] \notag \\
    ~&=~ \sum_{\calH} \pr(\calH) \cdot \price_{t, j}(\calH) \cdot \E\left[\left(\widetilde{a}_{t,j} - a^*_j \right) \mid \calH \right]. \notag
\end{align}
Moreover, since $t \leq \stoptime$, it must be the case that $\price_{t, j} \leq \initprice\cdot (m\beta/\epsilon)^8$. Using this and \eqref{eq:ro-key-eq}, we get
\begin{align}
    \E\left[\price_{t, j} \cdot \left(\widetilde{a}_{t,j} - a^*_j \right)\right] ~\leq~  \initprice \cdot \left(\frac{m\beta}{\epsilon}\right)^8 \cdot \frac{4}{T} \cdot \left(\frac{\epsilon}{m\beta}\right)^{10} ~\leq~ \frac{4\initprice}{T}\cdot \left(\frac{\epsilon}{m\beta}\right)^2.\label{eq:ro-key-term-bd}
\end{align}

Summing \eqref{eq:ro-key-term-bd} over all $t \in [\stoptime]\setminus R$ and $j \in [m]$ gives:
\begin{align*}
\E\left[\sum_{t = 1}^{\stoptime} \langle \bprice_t, a_t(\widetilde{\theta}_t) - \ba^* \rangle \right] ~=~  \sum_{t\in [\stoptime] \setminus R}~ \sum_{j = 1}^m \E\left[\price_{t, j} \cdot \left(\widetilde{a}_{t,j} - a^*_j \right) \right] ~\leq~ T\cdot m \cdot \frac{4\initprice}{T}\cdot \left(\frac{\epsilon}{m\beta}\right)^2 ~\leq~ \epsilon \cdot \opt,
\end{align*}
where the last inequality uses that $\esopt \leq \opt$.

We complete the proof of Lemma~\ref{lma:s-wo-r} by proving~\eqref{eq:ro-key-eq}.

\paragraph{Proving Equation~\eqref{eq:ro-key-eq}.}
Recall the setting of Equation~\eqref{eq:ro-key-eq}. We fix $t \in [\stoptime] \setminus R$ and $j\in [m]$, 
and condition on the history $\calH$ until (and including) time $t-1$.
We want to show:
\[
    \sum_{\calH} \pr(\calH) \cdot \E\left[\left(\widetilde{a}_{t,j} - a^*_j \right) \mid \calH \right] ~\leq~ \frac{4}{T } \cdot \left(\frac{\epsilon}{m\beta}\right)^{10}.
\]

First, we decompose the LHS as follows.
\begin{align*}
\sum_{\calH}\pr(\calH)\cdot \E\left[\left(\widetilde{a}_{t,j} - a^*_j \right) \mid \calH \right] ~&=~ \sum_{\calH}\pr(\calH) \cdot \Big(\E[\widetilde{a}_{t,j}\mid \calH] -  a^*_{j} \Big) \\
&\leq~ \sum_{\calH}\pr(\calH)\cdot\Big(\E[\widetilde{a}_{t,j}\mid \calH]\Big)\cdot \one\Big[ \E[\widetilde{a}_{t,j}\mid \calH] \geq a^*_{j} \Big] \enspace .
\end{align*}

Let $A_{\overline \calH}$ be a set denoting the remaining green 
and dummy requests the do not appear in history $\calH$ (but excluding the red requests); so, $|A_{\overline \calH}| \geq T - (t-1) - \numr \geq T/4$ since $\numr < T/\epsilon$ and $\stoptime \leq T/2$. 
Since $t \in [\stoptime] \setminus R$,  the request $\gamma_t$ is chosen uniformly at random from $A_{\overline \calH}$: for any dummy request $\widetilde{a}_{t,j}= 0$, while for any green request $i \in A_{\overline \calH}$, we have $\widehat{a}_{t, j}  = \left(a_i(\theta^*_i) \right)_j \cdot (1 - 4\epsilon) \leq \left(a_i(\theta^*_i) \right)_j$. Since the sum of $\left(a_i(\theta^*_i) \right)_j$ for all $i \in [\numg]$ is at most $B_j$, while $|A_{\overline \calH}|$ is at least $T/4$, we have
\[
\E[\widetilde{a}_{t,j}\mid \calH] ~\leq~ \frac{4B_j}{T}.
\]
for any history $\calH$. This simplifies the bound above: we obtain
\begin{align*}
\sum_{\calH}\pr(\calH)\cdot \E\left[\left( \widetilde{a}_{t,j} -   a^*_{j} \right) \mid \calH \right] 
~&\leq~ \frac{4B_j}{T}\cdot \sum_{\calH}\pr(\calH)\cdot \one\Big[ \E[\widetilde{a}_{t,j}\mid \calH] \geq a^*_{j} \Big] \\
&=~ \frac{4B_j}{T}\cdot \pr_{\calH}\Big( \E[\widetilde{a}_{t,j}\mid \calH] \geq a^*_{j} \Big).
\end{align*}
We now show that $\pr_{\calH}\Big(\E[\widetilde{a}_{t,j}\mid \calH] \geq a^*_{j} \Big) ~\leq~ \left(\frac{\epsilon}{m\beta}\right)^{10} \cdot \frac{1}{B_j}$ to finish the proof. 
We demonstrate this bound by employing concentration bounds for the sample-without-replacement process.
{Indeed, observe that we can view $A_{\overline \calH}$ being generated as follows: from the set of all green and dummy requests, pick all but $|[t-1] \setminus R |$ requests uniformly at random, without replacement. The selected requests constitute $A_{\overline \calH}$ while the remaining requests appear (uniformly at random) in the first $t-1$ slots that do not contain any red requests. }%
Formally, we first define set $A_{\phi}$ to be the set containing all green and dummy requests, and let $u = |A_{\phi}| = T - \numr \geq (1-\epsilon)\cdot T$, $v = |A_{\overline \calH}| \geq T/4$, and $\mu = (1 - 4\epsilon) \cdot \frac{1}{u} \cdot \sum_{i \in [\numg]} \left(a_i(\theta^*_i)\right)_j$ denotes the \emph{unconditional} expected allocation, where the factor $1 - 4\epsilon$ is applied because the action $\widehat \theta_t$ is set to be null with probability $4\epsilon$ even if a green request arrives.
Using $u \geq (1-\epsilon)\cdot T$ and $\sum_{i \in [\numg]} \left(a_i(\theta^*_i)\right)_j \leq B_j$, we have 
\[
\mu ~\leq~ (1-4\epsilon) \cdot \frac{B_j}{(1 - \epsilon)\cdot T} ~\leq~ (1-3\epsilon)\cdot\frac{B_j}{T}.
\]
On simplifying, we obtain:
\begin{align}
\label{eq:ro-swor-mid}
    \pr_{\calH}\Big( \E[\widetilde{a}_{t,j}\mid \calH] ~\geq~ a^*_{j} \Big)  ~&=~ \pr_{\calH}\left( \E[\widetilde{a}_{t,j}\mid \calH] ~\geq~ (1 - \epsilon)\cdot\frac{B_j}{T} \right)\notag \\
    &\leq~ \pr_{\calH}\left( \E[\widetilde{a}_{t,j}\mid \calH] - (1-3\epsilon)\cdot\frac{B_j}{T} ~>~ \frac{\epsilon\cdot B_j}{T} \right)\notag\\
    ~&\leq~ \pr_{\calH}\left(\Big| \E[\widetilde{a}_{t,j}\mid \calH] - \mu\Big| > \frac{\epsilon \cdot B_j}{T}\right).
\end{align}

{Now, we use the following corollary of Bernstein's inequality for a sampling-without-replacement process (it is obtained by replacing sums with averages in \Cref{Bernstein}) to further bound \eqref{eq:ro-swor-mid}.

\begin{Corollary}
\label{cor:swor}
    Let $Y = \{y_1, \cdots, y_u\}$ be a set of real numbers in the interval $[0, M]$. Let $S$ be a random subset of $Y$ of size $v$ and let $\bar S = \frac{1}{v}\sum_{i \in S} y_i$ be the average of $S$. Setting $\mu = \frac{1}{u} \sum_{i} y_i$, we have for every $\disceps > 0$,
    \[
    \pr\left[\left|\bar S - \mu \right| \geq \disceps \right] ~\leq~ 2\cdot\exp \left(-\frac{v\disceps^2}{M(4\mu + \disceps)} \right).
    \]
\end{Corollary}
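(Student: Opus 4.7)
The plan is to obtain this as a direct rescaling of the standard Bernstein inequality for sampling without replacement, which is presumably stated in the paper as \Cref{Bernstein} for sums. The whole argument is essentially a change of variables from the sum $\sum_{i \in S} y_i$ to its average $\bar S$, combined with the standard population-variance bound $\sigma^2 \leq M\mu$.

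First I would set $T := \sum_{i \in S} y_i$, so that $\bar S = T/v$ and $\E[T] = v\mu$. The event $|\bar S - \mu| \geq \disceps$ is identical to $|T - v\mu| \geq v\disceps$, so it suffices to bound the latter. Next, I would invoke the standard Bernstein inequality for sampling without replacement from the population $Y \subset [0,M]$: for any $t > 0$,
\[
\pr\bigl[\,|T - v\mu| \geq t\,\bigr] \;\leq\; 2\exp\!\left(-\frac{t^2/2}{\,v\sigma^2 + Mt/3\,}\right),
\]
where $\sigma^2 = \tfrac{1}{u}\sum_i (y_i - \mu)^2$ is the population variance. This is the form used in \Cref{Bernstein} (the without-replacement variance of the sum is bounded by the with-replacement variance, which is why the standard Bernstein bound applies).

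Then I would use the elementary fact that for nonnegative values bounded by $M$,
\[
\sigma^2 \;\leq\; \frac{1}{u}\sum_i y_i^2 \;\leq\; M \cdot \frac{1}{u}\sum_i y_i \;=\; M\mu.
\]
Substituting $t = v\disceps$ and using $\sigma^2 \leq M\mu$ gives
\[
\pr\bigl[\,|\bar S - \mu| \geq \disceps\,\bigr] \;\leq\; 2\exp\!\left(-\frac{v^2\disceps^2/2}{\,vM\mu + Mv\disceps/3\,}\right) \;=\; 2\exp\!\left(-\frac{v\disceps^2}{\,M(2\mu + 2\disceps/3)\,}\right).
\]
Finally, since $2\mu + 2\disceps/3 \leq 4\mu + \disceps$, the exponent only shrinks in absolute value when we relax the denominator, yielding the claimed bound $2\exp\bigl(-\tfrac{v\disceps^2}{M(4\mu+\disceps)}\bigr)$.

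There is no real obstacle here: the only thing to be careful about is citing the correct form of the Bernstein inequality for sampling without replacement (the one whose variance proxy is $v\sigma^2$, not $v \cdot M^2$ from Hoeffding), and noting that the sampling-without-replacement variance of $T$ is dominated by the with-replacement one so that \Cref{Bernstein} applies verbatim. Beyond that, the proof is pure algebra: rescale by $1/v$, bound the variance by $M\mu$, and weaken the denominator from $2\mu + 2\disceps/3$ to $4\mu + \disceps$ to match the statement.
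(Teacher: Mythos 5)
Your proof is correct, but it does not quite match the paper's intended one-line derivation, and you slightly misstate what \Cref{Bernstein} says. The paper's \Cref{Bernstein} is already packaged in the form
\[
\Pr{|Y_S - v\mu| \geq \tau} \;\leq\; 2\exp\!\left(-\frac{\tau^2}{M(4v\mu + \tau)}\right),
\]
with no $\sigma^2$ and no $M\tau/3$ term in sight; so the intended proof of \Cref{cor:swor} is nothing more than substituting $\tau = v\disceps$, noting $|Y_S - v\mu| \geq v\disceps \iff |\bar S - \mu| \geq \disceps$, and cancelling a factor of $v$ in the exponent. What you did instead is start from the more conventional Serfling/Bernstein form $2\exp(-\tfrac{t^2/2}{v\sigma^2 + Mt/3})$, bound the population variance by $\sigma^2 \leq M\mu$, and relax $2\mu + \tfrac{2}{3}\disceps \leq 4\mu + \disceps$. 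That chain is sound (and even yields the slightly sharper intermediate constant $2\mu + \tfrac{2}{3}\disceps$), and the invocation of the with-replacement domination of without-replacement tails is the standard justification — but it amounts to re-deriving the $M(4v\mu + \tau)$ form of \Cref{Bernstein} from scratch rather than applying it. Just be aware that your sentence ``this is the form used in \Cref{Bernstein}'' is inaccurate: the $\sigma^2 + Mt/3$ variant is not the statement the paper records, and if you had actually looked up \Cref{Bernstein} you would have gotten the corollary in one step without any variance bound.
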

}

Applying \Cref{cor:swor} with $M = 1$ and $\disceps = \frac{\epsilon B_j}{T}$ implies
\begin{align}
\label{eq:ro-swor-mid2}
   \pr_{\calH}\left(\Big| \E[\widetilde{a}_{t,j}\mid \calH] - \mu\Big| > \frac{\epsilon \cdot B_j}{T}\right) ~&\leq~ 2\cdot\exp \left(-\frac{v \cdot \frac{\epsilon^2 B_j^2}{T^2}}{4 \mu + \frac{\epsilon B_j}{T}}\right) \notag \\
    ~&\leq~ 2\cdot\exp \left(- \frac{\epsilon^2 B_j^2}{16 \mu T + 4 \epsilon B_j} \right) ~\leq~ 2 \cdot \exp \left( - \frac{\epsilon^2 B_j}{40} \right),
\end{align}
where the second inequality uses $v \geq T/4$, and the third inequality uses $\mu \leq (1 + 2\epsilon) \cdot \frac{B_j}{T} \leq \frac{2B_j}{T}$.
To further simplify $2 \cdot\exp \left( - \frac{\epsilon^2 B_j}{40} \right)$, we define $q_j = B_j \cdot \frac{\epsilon^2}{400 \log (m\beta/\epsilon)} \geq 3$, which holds when $B_j \geq \frac{1200 \log (m\beta/\epsilon)}{\epsilon^2}$. Then, we have  
\begin{align*}
    2 \cdot\exp \left( - \frac{\epsilon^2 B_j}{40} \right) ~\leq~  \left(\frac{\epsilon}{m\beta}\right)^{10}   \cdot 2\cdot \left(\frac{\epsilon}{m\beta}\right)^{10q_j - 10} ~&\leq~
     \left( \frac{\epsilon}{m\beta}\right)^{10} \cdot 2\cdot \left( \frac{\epsilon}{m\beta}\right)^{10q_j - 13} \cdot \frac{\epsilon^2}{\log(m/\epsilon)}\\
    &\leq~ \frac{1}{\beta} \cdot \left( \frac{\epsilon}{m\beta}\right)^{10} \cdot 2\cdot \frac{1}{800q_j} \cdot \frac{\epsilon^2}{\log(m/\epsilon)} \\
    ~&=~  \left( \frac{\epsilon}{m\beta}\right)^{10} \cdot \frac{1}{B_j} \enspace ,
\end{align*}
where the second inequality uses $q_j \geq 3$ and $\epsilon \leq 0.5$ to conclude 
$(m\beta/\epsilon)^{10 q_j - 13} \geq  (1/0.5)^{5q_j} \geq 800 q_j$.
Finally, combining the above inequality with \eqref{eq:ro-swor-mid} and \eqref{eq:ro-swor-mid2} gives
\[
\pr_{\calH}\Big( \E[\widetilde{a}_{t,j}\mid \calH] ~\geq~ a^*_{j} \Big) ~\leq~ \left(\frac{\epsilon}{m\beta}\right)^{10} \cdot \frac{1}{B_j } \enspace , 
\]
which completes the proof.

\subsection{Proof of \Cref{lma:bevlarge}}
\label{sec:lmabevlarge-proof}

We finish the section by proving \Cref{lma:bevlarge} that the algorithm has total value at least $\opt$ when $\pr(\bev)$ is at least $\epsilon$.

Note that
when event $\bev$ occurs, say after request $\stoptime$ for resource $j^*$, the resulting price $\price_{\stoptime+1, j^*} > \initprice \cdot    (m\beta /\epsilon)^8$.
Similar to the analysis of \Cref{lemma:prob-ea-small}, we lower bound the revenue obtained from the {last one unit of resource $j^*$}. 
Specifically, we first note that $\sum_{\ell \leq \stoptime}\calg_{\ell, j^*} - \stoptime \cdot a^*_{j^*} \geq {8\log(m/\epsilon)}/{\epsilon}$, which is at least $1$. 
We define $\widehat{\ell}$ to be an index such that
$\sum_{\widehat{\ell} \leq i \leq \stoptime} \calg_{i, j^*} \in [1, 2]$: such an index always exists since at most one unit of a resource can be allocated for a given request.

Conditioned on $\bev$, and using $ \val_{\ell}(\theta_{\ell}) - \langle\price_\ell, a_\ell\rangle  \geq 0$ since $\phi \in \Theta_\ell$, 
we have
\begin{align*}
\textstyle\alg_{ro} &~=~ \sum_{\ell = 1}^\stoptime \val_\ell(\theta_\ell) ~\geq~  \sum_{\ell = 1}^\stoptime\sum_{j=1}^m \price_{\ell,j} \cdot \calg_{\ell, j} 
~\geq~ \sum_{\ell = \widehat{\ell}}^\stoptime\price_{\ell,j^*} \cdot \calg_{\ell, j^*} \enspace.  \\
\intertext{Substituting $\price_{\ell,j^*}$,  this can be simplified to}
\alg_{ro} &~\geq ~ \sum_{\ell=\widehat{\ell}}^\stoptime \initprice \cdot \exp\Bigg( \epsilon \Big( \sum_{i < \ell } \calg_{i, j^*} - (\ell-1)\cdot a^*_{j^*} \Big) \Bigg) \cdot \calg_{\ell, j^*}\\
&~\geq \sum_{\ell=\widehat{\ell}}^\stoptime \initprice \cdot \exp\Bigg( \epsilon \Big( \sum_{i \leq \stoptime } \calg_{i, j^*} - \stoptime\cdot a^*_{j^*} - \sum_{\ell \leq i \leq \stoptime} \calg_{i, j^*} \Big) \Bigg) \cdot \calg_{\ell, j^*} \enspace .
\end{align*}
Now using the termination condition and the fact that $1 \leq \sum_{\widehat{\ell} \leq i \leq \stoptime} \calg_{i, j^*} \leq 2$, we get
\begin{align*}
\alg_{ro}  
&~\geq~ \sum_{\ell=\widehat{\ell}}^{\stoptime}  \initprice \cdot \exp\Bigg( \epsilon \Big( \frac{8\log(m\beta/\epsilon)}{\epsilon} - 2 \Big) \Bigg) \cdot \calg_{\ell, j^*} ~\geq~ \initprice\cdot\left(\frac{m\beta}{\epsilon}\right)^6. 
\end{align*}
Hence, if $\pr(\bev) \geq \epsilon$ then
\begin{equation*}
    \E[\alg_{ro}] ~\geq~ \pr(\bev) \cdot \E[\alg_{ro} \mid \bev] ~\geq~ \epsilon \cdot\initprice\cdot\left(\frac{m\beta}{\epsilon}\right)^6 ~\geq~ \opt \enspace ,
\end{equation*}
where in the last inequality we use the fact that $\beta \cdot \esopt \geq \opt$.

\section{Robustness in the Prophet-with-Augmentations Model}
\label{sec:sub-augmentation}

In this section, we show that the Exponential Pricing algorithm 
is robust to augmentations,
and obtains a $(1-\epsilon)$ approximation in the 
Prophet-with-Augmentations model of \cite{immorlica2020prophet,AGMS-SODA22}.

\medskip
\noindent{\bf Prophet-with-Augmentations Model.} We are given a base instance of 
the general online resource allocation problem in the stochastic model from \Cref{sec:introModel} with \emph{known} distributions.
That is,  $n$ requests  arrive sequentially, where the $i$-th request $\gamma_i = (\val_i, a_i, \Theta_i)$ is independently drawn from a known distribution $\calD_i$.
The distribution $\calD_i$ is a distribution over $K$ request tuples
$\gamma_{i,1}, \ldots, \gamma_{i,K}$, where the $i$-th request is $\gamma_{i,k} = (\val_{i,k}, a_{i,k}, \Theta_{i,k})$ with probability $p_{i, k}$.
An \emph{oblivious} adversary fixes 
perturbation functions $\{r_{i, k}\}_{i \in [n], k \in [K]}$ such that 
$r_{i,k}: \Theta_{i,k} \to \R_{\geq 0}$. When request $i$ arrives with type $k$, the algorithm observes a perturbed request $(\val_{i, k} + r_{i, k}, a_{i,k}, \Theta_{i,k})$.

To simplify notation, let the $i$-th request be $(\val_i+r_i, a_i, \Theta_i)$. After our posted pricing algorithm observes the perturbed request, it makes a best-response decision $\theta_i \in \Theta_{i}$. 
The algorithm receives the augmented value $\val_{i}(\theta_i) + r_{i}(\theta_i)$, and the
goal is to maximize total value, $\sum_{i = 1}^n \big( \val_i(\theta_i)+r_i(\theta_i) \big)$, subject to the budget constraints.
Crucially, the algorithm competes against the expected offline optimum of the base instance; that is, when there are no perturbations. 

The following is the main result of this section.

\begin{Theorem}\label{thm:onlineRA-aug}
    Given an $\epsilon > 0$, there exists a pricing algorithm  for online resource allocation in the Prophet-with-Augmentations model that obtains
    expected total value at least $\left(1 - O(\epsilon)\right)$ times the expected hindsight optimum with no perturbations, provided that $B_{j} = \Omega\left(\frac{\log(nm/\epsilon)}{\epsilon^2}\right) \, \forall j \in [m]$.
\end{Theorem}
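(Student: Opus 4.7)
The plan is to run the Exponential Pricing algorithm (\Cref{alg:prsp}) on the augmented request stream, with estimates set exactly using the known distributions: $\esopt = \opt$ (so the parameter $\beta = 1$) and $\widehat{\cons}_{i,j} = \cons^*_{i,j}(x^*)$ for an optimal LP solution $x^*$ of $\lp(\B \cdot (1-\epsilon))$, which satisfies $\obj(x^*) \geq (1-\epsilon)\cdot\opt$. Essentially the entire proof of \Cref{thm:exp-pricing-estimates} transfers, and the only step that needs modification is the handling of the excess-allocation event $\EA$.

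The first observation is that the best-response decomposition in \eqref{eq:for-augmentation} survives augmentations unchanged. Since $\theta_i$ best-responds to the augmented request and $r_i(\theta^*_i) \geq 0$,
\[
v_i(\theta_i) + r_i(\theta_i) - \langle \bprice_i, a_i(\theta_i)\rangle \;\geq\; v_i(\theta^*_i) + r_i(\theta^*_i) - \langle \bprice_i, a_i(\theta^*_i)\rangle \;\geq\; v_i(\theta^*_i) - \langle \bprice_i, a_i(\theta^*_i)\rangle.
\]
Summing over $i \leq \stoptime$, setting $\alg := \sum_{i \leq \stoptime}(v_i(\theta_i) + r_i(\theta_i))$, and rearranging yields precisely the inequality in \eqref{eq:for-augmentation}. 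The subsequent loss-in-revenue bound (\Cref{claim:RevLoss}) is a pointwise consequence of no-regret (\Cref{lem:noRegret}) and is unaffected by augmentations. The independence step $\E[v_i(\theta^*_i)\cdot \one[\stoptime \geq i]] = \E[v_i(\theta^*_i)]\cdot \pr[\stoptime \geq i]$ also continues to hold, because the adversary is oblivious so the perturbations are fixed functions of base types, and $\one[\stoptime \geq i]$ depends only on $\gamma_1,\ldots,\gamma_{i-1}$, independently of $\gamma_i$.

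The only genuinely new work is controlling $\pr(\EA)$: the contradiction in \Cref{lemma:prob-ea-small}, which bounded the value obtained by \emph{low-base-value} requests, no longer yields a contradiction when augmented values can be arbitrarily large. We therefore case-split on $\pr(\EA)$. If $\pr(\EA) \leq \epsilon$, the chain of inequalities in the proof of \Cref{thm:exp-pricing-estimates} delivers $\E[\alg] \geq (1-O(\epsilon))\opt$ directly. If instead $\pr(\EA) > \epsilon$, we exploit the price explosion: repeating the calculation in the proof of \Cref{lemma:prob-ea-small} but summing the revenue over \emph{all} requests in the terminal window $[\widehat\ell, \stoptime]$ rather than just the low-value ones, we obtain $\lambda_{i, j^*} \geq \initprice \cdot (nm/\epsilon)^2$ for every $i$ in that window. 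Hence the revenue on the over-consumed resource $j^*$ alone is at least $\initprice \cdot (nm/\epsilon)^2 \geq n\opt/\epsilon$, and since $\alg$ dominates the revenue by non-negativity of utility, $\alg \cdot \one_{\EA} \geq n\opt/\epsilon$ pointwise on $\EA$, giving $\E[\alg] \geq \pr(\EA)\cdot n\opt/\epsilon \geq n\opt \geq \opt$.

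The main obstacle is precisely this termination case: the original counting argument in \Cref{lemma:prob-ea-small} breaks down under augmentations because augmented values are unbounded. The insight (already highlighted by the authors) that resolves it is that early termination is actually the \emph{easy} case for Exponential Pricing, since the exponentially growing prices guarantee the collected revenue alone dwarfs $\opt$ whenever $\EA$ is triggered. With this single modification, both branches of the case-split yield $\E[\alg] \geq (1-O(\epsilon))\opt$, completing the proof.
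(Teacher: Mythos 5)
Your proposal is correct and follows essentially the same route as the paper: run Exponential Pricing with exact estimates, observe that the best-response and no-regret machinery is unaffected by augmentations since $r_i \geq 0$, and case-split on $\pr(\EA)$, arguing that when $\pr(\EA)$ is large the exponentially inflated prices already yield revenue far exceeding $\opt$. Your version of the $\EA$ case is slightly cleaner than the paper's — by choosing $\widehat\ell$ so that \emph{all} requests in $[\widehat\ell, \stoptime]$ consume between $1$ and $2$ units (rather than only low-base-value ones), you avoid conditioning on the number of high-value requests and obtain the pointwise bound $\alg \geq n\opt/\epsilon$ on $\EA$ directly — but this is a minor simplification within the same argument, not a different approach.
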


\begin{proof} We assume $\epsilon \in [0,1/2]$ to be a fixed  parameter, since the theorem is true for $\epsilon>1/2$.

 Our main idea is to run \Cref{alg:prsp} for the base instance. Since we know the base instance distributions, it's sufficient to run \Cref{alg:prsp} with the accurate value of $\opt$ and the optimal solution $x^*$ of $\lp((1 - \epsilon) \cdot \B)$, i.e., the estimates $\esopt$ and $\{\widehat a_{i,j}\}$ become $\opt$ and $\{a^*_{i,j}(x^*)\}$ respectively.

The proof is almost identical to the proof of \Cref{thm:exp-pricing-estimates}. We now describe the small changes needed to handle the adversarial augmentations. 
The only change to Algorithm~\ref{alg:prsp} is that the decision $\theta_i$ is computed as 
\begin{equation*}
\theta_i = \arg\max_{\theta \in \Theta_i} \Big( \val_i(\theta) + r_i(\theta) - \langle\bprice_i, a_i(\theta)\rangle \Big) \enspace.
\end{equation*}

In the proof, we still define $\EA$ to be the event that the budget constraint $ \sum_{\ell \leq i} \calg_{\ell, j} \,\, \geq \,\, \sum_{\ell \leq i}\cons^*_{\ell, j}(x) + \frac{1}{2}\epsilon\cdot B_j$ is violated. Since the adversary is oblivious, the probability $\pr(\EA)$ is still well defined. However, instead of bounding $\pr(\EA)$, we need to slightly change the argument by showing that our algorithm gains a value of at least $10\opt$ when $\pr(\EA) \geq \epsilon$. This is true because the proof of \Cref{lemma:prob-ea-small} implies that our algorithm gains a value of at least $\frac{10}{\epsilon} \cdot \opt$ when $\EA$ happens.

For the case that $\pr(\EA) < \epsilon$, the changes in the analysis are mainly regarding the best response decision: since our algorithm picks $\theta_i$ that maximizes $\val_i(\theta) + r_i(\theta) - \langle \bprice_i, a_i(\theta) \rangle$, we have
\[
\val_i(\theta_i) + r_i(\theta_i) - \langle \bprice_i, a_i(\theta_i) \rangle ~\geq~ \val_i(\theta_i^*) + r_i(\theta_i^*) - \langle \bprice_i, a_i(\theta_i^*) \rangle ~\geq~ \val_i(\theta_i^*)  - \langle \bprice_i, a_i(\theta_i^*) \rangle,
\]
where the inequality uses the fact that $r_i$'s are non-negative. 
By summing over all $i \in [\tau]$, where random variable $\tau$ denotes after which the algorithm stops, and taking expectations, we get
\[
\E[\alg] ~=~ \E\Big[\sum_{i = 1}^\tau (\val_i(\theta_i) + r_i(\theta_i))\Big] ~\geq~ \E\Big[ \sum_{i = 1}^\tau \val_i(\theta^*_i)\Big] + \E\Big[\sum_{i = 1}^{\tau} \langle \bprice_i, a_i(\theta_i) - a_i(\theta^*_i)  \rangle \Big].
\]
Note that the above inequality is identical to \eqref{eq:for-augmentation}, and so is the remainder of the analysis.

Combining the two cases that $\pr(\EA) \geq \epsilon$ and $\pr(\EA) < \epsilon$ proves \Cref{thm:onlineRA-aug}. 
\end{proof}

\bigskip
\bigskip

\appendix

\section{Basic Probabilistic Inequalities}

\begin{Theorem}[Hoeffding's Inequality]
\label{Hoeffding}
Let $X_1,\ldots,X_N$ be independent random variables such that $a_i \leq X_i \leq b_i$. Let $S_N = \sum_{i \in [N]} X_i$. Then, for all $t > 0$, we have
$  \Pr{|S_N - \Ex{S_n}| \geq t} \leq 2\exp\big(-\frac{2t^2}{\sum_{i \in [n]} (b_i - a_i)^2}\big).$
This implies, that if $X_i$ are i.i.d.\,samples of random variable $X$, and $a = a_i, b = b_i$ for all $i \in [N]$, let  $\hat X := \frac{1}{N} \sum_{i \in [N]} X_i$, then for any $\varepsilon > 0$, we have
\[
    \pr\left(|\hat{X} - \Ex{X}| \geq \varepsilon\right) \leq 2 \exp \left(-\frac{2N\varepsilon^2}{(b-a)^2}\right).
\]
\end{Theorem}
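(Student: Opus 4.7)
The plan is the classical Cram\'er--Chernoff method: convert the tail probability for $|S_N - \Ex{S_N}|$ into a moment-generating-function bound via Markov's inequality, exploit independence to factor the MGF across the $X_i$, control each factor using Hoeffding's lemma, and finally optimize the resulting exponent in a free parameter $s > 0$.

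First I would center the variables by setting $Y_i := X_i - \Ex{X_i}$, so each $Y_i$ is mean-zero and supported in an interval of length exactly $b_i - a_i$. By a union bound over the upper and lower tails (applying the argument to both $Y_i$ and $-Y_i$), it suffices to prove the one-sided bound $\pr[\sum_i Y_i \geq t] \leq \exp(-2t^2/\sum_i (b_i-a_i)^2)$ and then multiply by $2$. For the one-sided bound, I would apply Markov to $e^{s\sum_i Y_i}$ for arbitrary $s>0$ and use independence to factor:
\[
\pr\Big[\sum_i Y_i \geq t\Big] \leq e^{-st}\,\Ex{e^{s\sum_i Y_i}} = e^{-st}\prod_{i=1}^N \Ex{e^{sY_i}}.
\]
The heart of the argument is \emph{Hoeffding's lemma}: for any mean-zero random variable $Y$ supported in $[a,b]$, $\Ex{e^{sY}}\leq \exp(s^2(b-a)^2/8)$. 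I would prove this by using convexity of $y\mapsto e^{sy}$ on $[a,b]$ to write $e^{sY}\leq \tfrac{b-Y}{b-a}e^{sa} + \tfrac{Y-a}{b-a}e^{sb}$, taking expectations (using $\Ex{Y}=0$) to get an upper bound of the form $e^{\varphi(s)}$, and then showing $\varphi(0)=\varphi'(0)=0$ and $\varphi''(s)\leq (b-a)^2/4$, so Taylor's theorem gives $\varphi(s)\leq s^2(b-a)^2/8$.

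Plugging Hoeffding's lemma into the product bound yields
\[
\pr\Big[\sum_i Y_i \geq t\Big] \leq \exp\Big(-st + \tfrac{s^2}{8}\sum_i (b_i-a_i)^2\Big),
\]
and minimizing this quadratic-in-$s$ exponent at $s = 4t/\sum_i(b_i-a_i)^2$ gives exactly the claimed rate $-2t^2/\sum_i(b_i-a_i)^2$. Symmetrizing introduces the factor of $2$, completing the first inequality in the statement. The i.i.d.\ corollary then follows immediately by taking $t = N\varepsilon$ and using $\sum_i(b_i-a_i)^2 = N(b-a)^2$. The main obstacle is Hoeffding's lemma itself: the second-derivative bound on $\varphi(s)$ requires rewriting $\varphi''(s) = u(s)\bigl(1-u(s)\bigr)(b-a)^2$ for some $u(s)\in[0,1]$ (naturally arising from viewing the upper bound as the MGF of a two-point distribution on $\{a,b\}$) and then using $u(1-u)\leq 1/4$. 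Everything else---Markov, independence, and the optimization of a quadratic in $s$---is routine.
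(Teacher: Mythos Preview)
Your proof is correct and is the standard Cram\'er--Chernoff argument for Hoeffding's inequality. However, the paper does not actually prove this statement: it appears in an appendix of ``Basic Probabilistic Inequalities'' where Hoeffding's inequality is simply stated as a known result without proof, to be invoked elsewhere in the paper. So there is no paper proof to compare against; your write-up is exactly the textbook derivation one would give if a proof were required.
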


\begin{Theorem}[Bernstein's Inequality for Bounded Variables]
\label{Bernstein-bounded}
Let $X_1, \ldots, X_N$ be independent mean-zero random variables such that $|X_i| \leq M$ for all $i$ and $\sigma^2 := \sum_{i \in [N]} \E[X^2_i]$. Then, for any $\varepsilon \geq 0$, 
\[
\pr \left(\sum_{i = 1}^N X_i \geq \varepsilon \right) \leq \exp\left(- \frac{\varepsilon^2/2}{\sigma^2 + M\varepsilon/3}\right) \quad \text{and} \quad \pr \left(\sum_{i = 1}^N X_i \leq -\varepsilon \right) \leq \exp\left(- \frac{\varepsilon^2/2}{\sigma^2 + M\varepsilon/3}\right) \enspace .
\]
\end{Theorem}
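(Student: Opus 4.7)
The plan is to apply the standard exponential-Markov (Chernoff / MGF) method, with a sharper per-variable moment-generating-function bound than the one used for Hoeffding. For the upper tail, for any parameter $t > 0$,
\[
\pr\!\left(\sum_{i=1}^N X_i \geq \varepsilon\right) \,\leq\, e^{-t\varepsilon}\,\E\!\left[e^{t\sum_i X_i}\right] \,=\, e^{-t\varepsilon}\prod_{i=1}^N \E\!\left[e^{tX_i}\right],
\]
where the equality uses independence. The heart of the proof is bounding each individual MGF $\E[e^{tX_i}]$ sharply enough so that the denominator in the final exponent is $\sigma^2 + M\varepsilon/3$ (Bernstein) rather than a sum of squared ranges (Hoeffding).

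The key lemma I would establish is: for any mean-zero $X$ with $|X|\leq M$ and any $t \in (0,\,3/M)$,
\[
\E\!\left[e^{tX}\right] \,\leq\, \exp\!\left(\frac{t^2\,\E[X^2]/2}{1 - tM/3}\right).
\]
To prove this, start from the Taylor expansion $e^{tX} - 1 - tX = \sum_{k \geq 2} (tX)^k/k!$ and use the elementary inequality $k! \geq 2\cdot 3^{k-2}$ for $k \geq 2$ (a one-line induction). This gives
\[
e^{tX} - 1 - tX \,\leq\, \frac{t^2 X^2}{2}\sum_{k\geq 0}\left(\frac{t|X|}{3}\right)^{\!k} \,=\, \frac{t^2 X^2/2}{1 - t|X|/3}
\]
whenever $t|X| < 3$. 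Taking expectations and using $\E[X]=0$ and $|X|\leq M$ yields $\E[e^{tX}] \leq 1 + \tfrac{t^2\E[X^2]/2}{1 - tM/3} \leq \exp\!\left(\tfrac{t^2\E[X^2]/2}{1 - tM/3}\right)$, as desired.

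Combining the per-variable bound across all $i$ via independence produces $\E[e^{t\sum_i X_i}] \leq \exp\!\big(\tfrac{t^2\sigma^2/2}{1-tM/3}\big)$, so the Chernoff bound becomes $\exp\!\big(\tfrac{t^2\sigma^2/2}{1-tM/3} - t\varepsilon\big)$. To finish, I would make the optimizing choice $t = \varepsilon/(\sigma^2 + M\varepsilon/3)$, which lies in $(0,3/M)$. Then $1 - tM/3 = \sigma^2/(\sigma^2 + M\varepsilon/3)$, and a short calculation shows the exponent collapses to exactly $-\tfrac{\varepsilon^2/2}{\sigma^2 + M\varepsilon/3}$, proving the upper-tail bound. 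The lower-tail bound is obtained immediately by applying the upper-tail result to the variables $-X_i$, which are also independent, mean-zero, bounded by $M$ in absolute value, and have the same second moments.

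The only genuine obstacle is obtaining the sharp ``$1 - tM/3$'' denominator in the per-variable MGF bound; this is what distinguishes Bernstein from the coarser Hoeffding argument and relies on the factorial inequality $k!\geq 2\cdot 3^{k-2}$. The remaining steps---exponential Markov, independence, and the quadratic-in-$t$ optimization---are routine bookkeeping.
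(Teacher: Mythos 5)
Your proof is correct; the paper states \Cref{Bernstein-bounded} as a standard probabilistic inequality in the appendix and provides no proof of its own, so there is nothing to compare against. Your argument is the classical one: exponential Markov plus the sharp per-variable MGF bound $\E[e^{tX}]\leq\exp\!\big(\tfrac{t^2\E[X^2]/2}{1-tM/3}\big)$ obtained from the factorial estimate $k!\geq 2\cdot 3^{k-2}$, followed by the optimizing choice $t=\varepsilon/(\sigma^2+M\varepsilon/3)$. All the steps check out, including the bound $\sum_{k\geq 2}(tX)^k/k!\leq\sum_{k\geq 2}|tX|^k/k!$ (termwise), the passage from $1-t|X|/3$ to $1-tM/3$ in the denominator before taking expectation, the verification that the chosen $t$ lies in $(0,3/M)$ whenever $\sigma^2>0$ (the case $\sigma^2=0$ being trivial), and the symmetry argument for the lower tail.
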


We also require the following Bernstein’s inequality for sampling-without-replacement:
\begin{Theorem}[see Corollary 2.4.2 in \cite{GM-MOR16} and Theorem 2.14.19 in \cite{weakConvergence}]
\label{Bernstein}
    Let $Y = \{y_1, \cdots, y_u\}$ be a set of real numbers in the interval $[0, M]$. Let $S$ be a random subset of $Y$ of size $v$ and let $Y_S = \sum_{i \in S} y_i$. Setting $\mu = \frac{1}{u} \sum_{i} y_i$, 
    we have that for every $\tau > 0$,
    \[
    \pr\left[\left|Y_s - v \cdot \mu \right| \geq \tau \right] ~\leq~ 2\exp \left(-\frac{\tau^2}{M(4v\mu + \tau)} \right).
    \]
\end{Theorem}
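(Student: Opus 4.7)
The target inequality is a Bernstein-type bound for sampling without replacement, so the natural strategy is to reduce to the sampling-with-replacement case and apply the standard Bernstein inequality already stated as \Cref{Bernstein-bounded}. Let $X_1, \ldots, X_v$ denote the values obtained by drawing $v$ elements of $Y$ without replacement, so that $Y_S = \sum_{k=1}^v X_k$, and let $Z_1, \ldots, Z_v$ be i.i.d.\ uniform draws from $Y$ (sampling with replacement), so that $\E[Z_k] = \mu$ and $Z_k \in [0, M]$.

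The first and key step is Hoeffding's classical reduction (Hoeffding, 1963): for every continuous convex function $\phi:\R \to \R$,
\[
\E\bigl[\phi(X_1 + \cdots + X_v)\bigr] \;\leq\; \E\bigl[\phi(Z_1 + \cdots + Z_v)\bigr].
\]
Applying this with $\phi(x) = e^{\lambda x}$ for $\lambda \in \R$ gives that the moment generating function of the centered without-replacement sum is dominated by that of the centered with-replacement sum:
\[
\E\bigl[e^{\lambda(Y_S - v\mu)}\bigr] \;\leq\; \E\Bigl[e^{\lambda \sum_{k=1}^v (Z_k - \mu)}\Bigr].
\]
This reduces the analysis to i.i.d.\ bounded summands, where the standard Bernstein machinery applies directly.

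The second step is to bound the variance of the with-replacement sum. Since each $Z_k$ takes values in $[0,M]$ with mean $\mu$, we have $\E[(Z_k-\mu)^2] \leq \E[Z_k^2] \leq M\cdot\E[Z_k] = M\mu$, so $\sigma^2 := \sum_{k=1}^v \E[(Z_k-\mu)^2] \leq vM\mu$, and the individual summands satisfy $|Z_k - \mu| \leq M$. Plugging these parameters into \Cref{Bernstein-bounded} yields
\[
\pr\bigl(\textstyle\sum_{k=1}^v (Z_k-\mu) \geq \tau\bigr) \;\leq\; \exp\!\Bigl(-\tfrac{\tau^2/2}{vM\mu + M\tau/3}\Bigr),
\]
and similarly for the lower tail, after which a mild constant adjustment (bounding $1/2$ from below and $1/3$ from above) gives the cleaner denominator $M(4v\mu + \tau)$ stated in the theorem. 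Finally, Hoeffding's reduction step transfers both tail bounds to $Y_S$ by the standard Chernoff argument ($\pr(Y_S - v\mu \geq \tau) \leq e^{-\lambda \tau} \E[e^{\lambda(Y_S-v\mu)}]$, optimizing over $\lambda > 0$), and a union bound over the two tails supplies the factor of $2$.

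\textbf{Main obstacle.} The only nontrivial ingredient is Hoeffding's convex-domination lemma; everything else is a routine MGF computation. A clean way to prove the lemma, if one wishes to be self-contained, is to couple the two processes one step at a time and use the observation that for a fixed convex $\phi$ and a fixed prefix sum, replacing a without-replacement draw by a with-replacement draw can only increase $\E[\phi(\cdot)]$, since the conditional distribution of a with-replacement draw is a mean-preserving spread of the without-replacement conditional distribution. Alternatively, one can simply cite the references already listed (Corollary 2.4.2 of \cite{GM-MOR16} and Theorem 2.14.19 of \cite{weakConvergence}).
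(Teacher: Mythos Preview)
The paper does not actually prove this theorem: it is stated in Appendix~A as a cited fact from the literature (Corollary~2.4.2 of \cite{GM-MOR16} and Theorem~2.14.19 of \cite{weakConvergence}), with no accompanying argument. So there is no ``paper's own proof'' to compare against.

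Your proposal is the standard and correct route to this inequality. One small imprecision: you write ``plugging these parameters into \Cref{Bernstein-bounded}'' to obtain the with-replacement tail bound, and then separately invoke Hoeffding's reduction to transfer it. But Hoeffding's reduction operates at the level of the moment generating function, not the tail probability, so what you actually need is the MGF bound that underlies \Cref{Bernstein-bounded} (namely $\E[e^{\lambda(Z_k-\mu)}] \leq \exp\bigl(\sigma_k^2(e^{\lambda M}-1-\lambda M)/M^2\bigr)$ or the equivalent Bennett form), apply the domination $\E[e^{\lambda(Y_S-v\mu)}] \leq \prod_k \E[e^{\lambda(Z_k-\mu)}]$, and only then run the Chernoff optimization. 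You clearly have this in mind in your final paragraph, so this is a presentational point rather than a gap; the constants work out exactly as you claim.
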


\section{Missing Proofs from \Cref{sec:single-sample}}

\subsection{Proof of \Cref{clm:nottoomuch}}\label{app:clm-nottoomuch}

\clmnottoomuch*

\begin{proof}
Let $\ptt = \{S_1, \cdots, S_D\}$ be a partition chosen uniformly at random. 
To prove the claim, it suffices to show that for each group $S_d \in \ptt$, 
\[
\textstyle \sum_{i \in S_d} \sum_{k \in [K]} \sum_{\theta \in \Theta_{i,k}} a_{i,k}(\theta) \cdot \tildx_{i,k,\theta} ~\leq~ (1 - 2\epsilon) \cdot \B/D 
\]
holds with probability $1 - \epsilon/D$. Applying the union bound over all groups completes the proof.

Fix $j \in [m]$. Let variable $X_{i, j} = \sum_{k \in [K]} \sum_{\theta \in \Theta_{i,k}} \left(a_{i,k}(\theta)\right)_j \cdot \tildx_{i,k, \theta}$. 
Then, we have $\sum_{i \in [n]} X_{i,j} \leq (1 - 3\epsilon) \cdot B_j$, which is guaranteed by the fact that $\tildx$ is a feasible solution for $\lp((1 - 3\epsilon) \cdot \B)$, and $X_{i,j} \in [0, 1]$, which follows from the fact that each request uses at most one unit of resource $j$. 
Note that the sum $\sum_{i \in S_d} \sum_{k \in [K]} \sum_{\theta \in \Theta_{i,k}} \left(a_{i,k}(\theta)\right)_j \cdot \tildx_{i,k, \theta}$ can be viewed as uniformly choosing $|S_d| = n/D$ samples from the pool $\{X_{i,j}\}_{i \in [n]}$ without replacement. Applying \Cref{Bernstein}, we get 
\begin{align}\textstyle
    \pr \Big(\sum_{i \in S_d} X_{i, j} - (1-3\epsilon) \cdot B_j/D \geq \epsilon \cdot B_j/D \Big) ~\leq~ \exp \left(-\frac{\epsilon^2 B^2_j/D^2}{1 \cdot (4B_j/D + \epsilon \cdot B_j/D)}\right) ~\leq~ \frac{\epsilon}{mD}, \label{eq:partitioncond}
\end{align}
where the last inequality holds when $B_j \geq \frac{10\log (mD/\epsilon) \cdot D}{\epsilon^2} \geq \frac{10\log (mn/\epsilon) \cdot D}{\epsilon^2}$, as there must be $D \leq n$.
Applying the above inequality to every $j \in [m]$ and combining them via union bound implies that \eqref{eq:partitioncond} holds simultaneously for every $j \in [m]$ with probability at least $1 - \epsilon/D$, as desired.
\end{proof}

\subsection{Proof of \Cref{clm:followtildx}}\label{app:clm-followtildx}

\clmfollowtildx*

\begin{proof}
Fix $S_d \in \ptt$, and let
$\{\gamma_i\}_{i \in S_d}$ denote the requests. 
Recall that $\tildz_{i, \theta}^{(d)} =\frac{\tildx_{i,k, \theta}}{p_{i,k}} \cdot \one[\type(i)=k]$, if Condition~\eqref{eq:condition} is satisfied, which requires:
\begin{align*}
\textstyle
    \sum_{i \in S_d} \sum_{\theta \in \Theta_i} a_i(\theta) \cdot \sum_{k \in [K]} \one[\type(i) = k] \cdot  \frac{\tildx_{i,k, \theta}}{p_{i,k}} \leq (1 - \epsilon) \cdot \B/D .
\end{align*}
Therefore, we have
\begin{align*}
\textstyle
\E_{\{\gamma_i\}} \Big[\sum\limits_{i \in S_d} \sum\limits_{\theta \in \Theta_i} v_i(\theta) \cdot \tildz^{(d)}_{i, \theta} \Big]
&=~ \textstyle{ \E_{\{\gamma_i\}}\Big[\sum\limits_{i \in S_d} \sum\limits_{k \in [K]} \sum\limits_{\theta \in \Theta_{i,k}} v_{i,k}(\theta) \cdot \frac{\tildx_{i, k, \theta}}{p_{i,k}} \cdot \one[\type(i) = k] \cdot \one[\eqref{eq:condition} \text{ holds}] \Big]} \\
~& \textstyle =~ \sum\limits_{i \in S_d}\sum\limits_{k \in [K]} \sum\limits_{\theta \in \Theta_{i,k}} v_{i,k}(\theta) \cdot\frac{\tildx_{i, k, \theta}}{p_{i,k}} \cdot \E_{\{\gamma_i\}}\big[  \one[\type(i) = k] \cdot    \one[\eqref{eq:condition} \text{ holds}]\big]
\end{align*}

We note that the events $\one[\type(i) = k]$ and $\one[\eqref{eq:condition} \text{ holds}]$ can be correlated, which prevents us from further decomposing the expectation above. 
To address this, we introduce the following condition for every $i \in S_d$.
\begin{align}
\textstyle
    \sum_{\ell \in S_d \setminus \{i\}} \sum_{k \in [K]} \sum_{\theta \in \Theta_\ell} \one[\type(\ell) = k] \cdot a_\ell(\theta) \cdot \frac{\tildx_{i, k, \theta}}{p_{\ell,k}} ~\leq~  (1 - 1.5\epsilon) \cdot \B/D. \label{eq:conditionnew}
\end{align}
Crucially, we note Condition \eqref{eq:conditionnew} implies  Condition \eqref{eq:condition}. 
This is because request $i$ can consume at most one unit of each resource, and we have $0.5 \epsilon \cdot B_j/D \geq 1$, which is true as long as $B_j \geq 2D/\epsilon$. 
Using the fact that Condition \eqref{eq:conditionnew} implies  Condition \eqref{eq:condition} and that the event $\one[\type(i) = k]$ and $\one[\eqref{eq:conditionnew} \text{ holds}]$ are independent, we get
\begin{align*}
\textstyle
    \E_{\{\gamma_i\}} \Big[\sum\limits_{i \in S_d} \sum\limits_{\theta \in \Theta_i} v_i(\theta) \cdot \tildz^{(d)}_{i, \theta} \Big]
    ~& \textstyle \geq~ \sum\limits_{i \in S_d} \sum\limits_{k \in [K]} \sum\limits_{\theta \in \Theta_{i,k}} v_{i,k}(\theta) \cdot \frac{\tildx_{i, k, \theta}}{p_{i,k}} \cdot \pr\left(\type(i) = k\right) \cdot \pr(\eqref{eq:conditionnew} \text{ holds}) \\
    & \textstyle =~\sum\limits_{i \in S_d}\sum\limits_{k \in [K]} \sum\limits_{\theta \in \Theta_{i,k}} v_{i,k}(\theta) \cdot \tildx_{i, k, \theta} \cdot \pr(\eqref{eq:conditionnew} \text{ holds}).
\end{align*}
Thus, it suffices to show that Condition~\eqref{eq:conditionnew} holds with probability at least $1-\epsilon$ to complete the proof of \Cref{clm:followtildx}.

Towards this end, we fix $j \in [m]$ and let random variable $X^{(d)}_{i, j}$ denote the amount of resource $j$ consumed by request $i \in S_d$, assuming request $i$ follows solution $\tildx$. 
That is, $X^{(d)}_{i, j}$ equals $\sum_{\theta \in \Theta_{i, k}} \left(a_{i,k}(\theta) \right)_j \cdot {\tildx_{i, k, \theta}}/{p_{i, k}}$ with probability $p_{i, k}$. 
Furthermore, let $\mu^{(d)}_{i,j}$ denote the expectation of $X^{(d)}_{i,j}$. 
Since $a_{i, k}(\theta) \in [0, 1]^m$ and $\ptt$ is an average partition with respect to $\tildx$, we have $X^{(d)}_{i, j} \in [0, 1]$ and $\sum_{i \in S_d} \mu^{(d)}_{i,j} \leq (1- 2\epsilon) \cdot B_j/D$.
We will apply Bernstein's inequality (\Cref{Bernstein-bounded}) to the mean-zero random variables $\{X^{(d)}_{i,j} - \mu^{(d)}_{i,j}\}_{i \in S_d}$. 
Let $\sigma^2 = \sum_{i \in S_d} \E[(X^{(d)}_{i,j} - \mu^{(d)}_{i,j})^2]$, and note that
\[\textstyle
\sigma^2 ~=~ \sum_{i \in S_d} \E[(X^{(d)}_{i,j})^2] - \sum_{i \in S_d} (\mu^{(d)}_{i,j})^2 ~\leq~ \sum_{i \in S_d} \E[X^{(d)}_{i,j} \cdot 1] - 0 ~\leq~ \sum_{i \in S_d} \mu^{(d)}_{i, j}.
\]
Then, Bernstein's inequality together with the fact that $\sum_{i \in S_d} \mu^{(d)}_{i, j} \leq (1 - 2\epsilon) \cdot B_j/D$ gives
\[\textstyle
\pr\Big(\sum_{\ell \in S_d \setminus \{i\}} X^{(d)}_{\ell, j} - (1-2\epsilon) \cdot \frac{B_j}{D} > 0.5\epsilon \cdot \frac{B_j}{D} \Big) ~\leq~ \exp\Big(-\frac{\epsilon^2 B_j^2/(4D^2)}{\sigma^2 + 1\cdot 0.5\epsilon B_j/(3D)}\Big) ~\leq~ \frac{\epsilon}{m},
\]
where the last inequality uses $\sigma^2 \leq \sum_{i \in S_d} \mu^{(d)}_{i, j} \leq B_j/D$, and it holds when $B_j \geq \frac{4 \log(m/\epsilon) \cdot D}{\epsilon^2}$. 

Finally, by applying the union bound over all $j \in [m]$, we can conclude that Condition \eqref{eq:conditionnew} holds with probability at least $1 - \epsilon$.
\end{proof}

\section{Lower Bound for Prophet Model}
\label{sec:hardnessExample}

We prove our lower bound on the budgets needed to obtain a $(1-\epsilon)$-approximation for online resource allocation  with \emph{known} input distributions (prophet model), as claimed in \Cref{sec:results-single-sample}.

\begin{Theorem}
\label{thm:lower}
    In online resource allocation  with known input distributions, there exist instances with
    $B_j  = (\log m)/\epsilon^2$ for all resources $j \in [m]$, for which no online algorithm can achieve a competitive ratio better than $(1 - \Omega( \epsilon))$.
\end{Theorem}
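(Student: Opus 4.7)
The plan is to adapt the lower-bound construction of Agrawal, Wang, and Ye for online linear programming in the random-order (secretary) model to the stochastic (prophet) model with known, non-identical per-request distributions. I will exhibit an explicit family of instances with $m$ resources of budget $B = c\log(m)/\epsilon^2$ and a product distribution over $n$ requests, for which any online algorithm has expected value at most $(1 - \Omega(\epsilon))\cdot\opt$.

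The concrete construction uses two kinds of i.i.d.\ requests: low-value \emph{flexible} ones (value~$1$, one unit of any chosen resource) and high-value \emph{specific} ones of $m$ types (type-$j$ has value $1+\epsilon$ and consumes resource~$j$), with mixture weights chosen so that the expected number $\E[n_j]$ of type-$j$ specific requests equals $B$ for every $j$. A direct calculation using Chernoff concentration shows $\E[\opt] = (1+\epsilon)mB - O(\epsilon m\sqrt B)$, with the offline optimum serving all specific requests up to capacity and filling remaining budget greedily with flexible ones. For any online algorithm, I will argue that its decisions in the flexible phase are effectively summarized by reservation levels $(r_1,\ldots,r_m)$ with $\sum_j r_j$ constrained by the number of flexible requests already consumed, and that the expected value is bounded above by $\sum_j\bigl[(1+\epsilon)\,\E[\min(n_j,r_j)] + (B - r_j)\bigr]$ over the best reservation vector. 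Combining a Slud-type anti-concentration for the binomial tails with a union-bound-style extreme-value bound on $\max_j (n_j - r_j)^+$ will then yield a per-resource loss summing to $\Omega(\epsilon \cdot m\sqrt{B\log m}) = \Omega(\epsilon)\cdot\opt$.

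The main obstacle I anticipate is isolating the $\sqrt{\log m}$ factor --- a naive independent-per-resource analysis only yields $\Omega(1/\epsilon^2)$, matching the $B\geq 1/\epsilon^2$ bound but not the stronger $\log m/\epsilon^2$ claimed by the theorem. To force the $\log m$ factor I will exploit the correlation induced by a \emph{shared flexible budget pool}: every flexible request consumed is charged against some resource, so the $m$ reservation choices are jointly constrained by $\sum_j r_j \leq mB - F$ where $F$ is the number of flexible requests the algorithm has already served. This turns the per-resource loss $\sqrt{B}$ into a global loss $\max_j (n_j - r_j)^+ \geq \Omega(\sqrt{B\log m})$ via extreme-value concentration for the maximum of $m$ near-Gaussian binomial counts, which is exactly the $\log m$ amplification needed. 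The remaining technical care is in handling the adaptivity of online algorithms: I will condition on the realized flexible-phase history and use that the specific-phase tail counts are conditionally independent of this history (since arrivals are i.i.d.\ in the prophet model), so the inner extreme-value argument may be invoked on the conditional distribution and combined via the tower rule.
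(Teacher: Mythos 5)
Your starting point — adapting the hard instance of Agrawal, Wang, and Ye to the prophet model — is the same as the paper's, but the construction you propose diverges in a way that creates a genuine gap, and I do not think the extreme-value fix you sketch can close it.

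The first problem is the scale of $\opt$. In your instance every specific request consumes one unit of one resource, so the offline optimum is $\Theta(mB)$ (all $mB$ budget units are spent on near-unit-value requests), and you must show a loss of $\Omega(\epsilon m B) = \Omega(m\sqrt{B\log m})$. But the total ``unused-capacity'' loss $\sum_j (r_j - n_j)^+$ that a suboptimal reservation vector can incur against independent $n_j \approx \mathrm{Bin}$ with mean $B$ is only $O(m\sqrt{B})$ even if the algorithm reserves the full budget $r_j = B$ everywhere, and the $\epsilon$-scaled loss from capping specific requests adds at most $O(\epsilon m \sqrt{B\log(1/\epsilon)})$; both are a factor $\sqrt{\log m}$ (or more) short of the target. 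Note also that with $\epsilon = \sqrt{\log m/B}$ one has $\sqrt{B\log m} = \epsilon B$, so the quantity you claim to lower-bound the loss by, $\epsilon m\sqrt{B\log m}$, actually equals $\epsilon^2 m B = \Theta(\epsilon^2 \opt)$, not $\Theta(\epsilon \opt)$; the arithmetic in the final step does not match the theorem even on its face. The extreme-value idea does not rescue this: $\max_j (n_j - r_j)^+ = \Omega(\sqrt{B\log m})$ is true, but the resource attaining the maximum can contribute at most $O(B)$ to the loss, which is only $O(1/m)$ of $\opt$; the ``shared flexible pool'' constraint $\sum_j r_j \geq mB - F$ does not force the algorithm off the symmetric per-resource optimum and thus does not turn a single large tail into $m$ additive losses. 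The $\log m$ must come from $\Theta(\log m)$ loss contributions that \emph{add}, not from a max over $m$ events.

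The paper achieves this through two ideas that are absent from your construction. First, it uses single-minded buyers who each demand $m/2$ items (those agreeing with a prescribed bit in a fixed position of the binary item index), so at most $2B$ buyers can ever be served and $\opt = \Theta(B)$ — the objective is not diluted across $m$ resources. Second, it runs $z = \log m$ binomial guessing games, one per bit position: a combinatorial ``common-item'' argument (\Cref{fac:common} and \Cref{clm:lower-equal}) shows that any optimal allocation is forced to allot exactly $B/z$ units to each bit type, decoupling the games and forcing a loss of $\Omega(\sqrt{B/z})$ in each by binomial anti-concentration. These add to $\Omega(\sqrt{Bz}) = \Omega(\sqrt{B\log m}) = \Omega(\epsilon B) = \Omega(\epsilon\cdot\opt)$. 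It is the combination of the scale reduction and the bit-indexed intertwining that produces the $\log m$ factor; your flexible/specific gadget has neither.
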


\medskip \noindent \textbf{The hard instance.} The instance follows the instance given in \cite{AWY-OR14} for the random-order model.
We consider the setting of online combinatorial auctions, a special case of online resource allocation, where a sequence of $n$ buyers arrive for $m$ kinds of items/resources. 
For simplicity in describing the hard instance, we assume that $m = 2^z$, and number items from $0$ to $m - 1$. 
Each item has $B = \log m/\epsilon^2$ copies, implying $\epsilon = \sqrt{z/B}$.

In our hard instance, each buyer is \emph{single-minded}: buyer $i$ has a valuation function given by $f_i(S) = c_i \cdot \mathbf{1}[S_i \subseteq S]$, where $S_i$ is a fixed and $c_i \in \R_{\geq 0}$ could be random (i.e., they have value $c_i$ if they receive a superset of $S_i$). Each buyer belongs to one of the following \emph{types}, indexed by $l \in [z]$:
\begin{itemize}
    \item Type $A_l$: interested in the $m/2$ items $j \in \{0, \dots, m - 1\}$ whose $l$-th bit is $1$.
    \item Type $B_l$: interested in the $m/2$ items $j \in \{0, \dots, m - 1\}$ whose $l$-th bit is $0$.
\end{itemize}
Exactly $3B + \sqrt{Bz}$  buyers arrive sequentially, divided into 3 groups (with lower-numbered groups arriving earlier; the intra-group order is irrelevant for the hardness).

\begin{itemize}
    \item Group 1: $\sqrt{B/z}$ buyers of type $A_l$ arrive, for each $l \in [z]$, each with $c_i = 2$.
    \item Group 2: $2B/z$ buyers of type $A_l$ arrive, for each $l \in [z]$; the valuation $c_i$ is randomly chosen to be either $1$ or $3$, each with a probability of $0.5$.
    \item Group 3: $B/z$ buyers of type $B_l$ arrive, for each $l \in [z]$, each with $c_i = 4$.
\end{itemize}

Let $\opt$ denote the expected value achieved by the optimal offline algorithm for this instance.
We begin with a couple of observations.

\begin{Observation}\label{fac:opt-range}
    We have $5B \leq \opt \leq 7B$.
\end{Observation}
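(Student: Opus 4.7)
\medskip
\noindent\textbf{Proof proposal.} The two inequalities are proved by essentially independent counting arguments, and I would establish the upper bound first since it is the cleaner of the two.

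For $\opt \leq 7B$, the plan is to combine two simple observations. First, every single-minded buyer (of type $A_l$ or $B_l$) consumes exactly $m/2$ item-copies when satisfied, while the total budget summed over all items is $mB$. Hence in any (even fractional, hindsight) feasible allocation the number of satisfied buyers is at most $2B$. Second, the only buyers of value $4$ lie in Group~3, whose total size is $z\cdot(B/z)=B$, so at most $B$ of the $2B$ satisfied buyers can have value $4$ and the rest have value at most $3$. Combining gives $\opt \leq 4B + 3B = 7B$ in every realization, and therefore also in expectation.

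For $\opt \geq 5B$, I would exhibit an explicit random feasible assignment and lower bound its expected value. Step one is to satisfy every Group~3 buyer: the consumption at item $j$ (with $k$ zero-bits) is $k\cdot B/z \leq B$, so this is feasible and deterministically contributes $4B$ to the objective. Step two augments by accepting value-$3$ buyers from Group~2, capped at $B/z$ per type $A_l$. This is feasible because the residual budget at item $j$ with $k$ one-bits after Group~3 is exactly $B-(z-k)B/z=kB/z$, which is precisely the maximum extra consumption produced by $B/z$ type-$A_l$ buyers per $l$.

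The one non-routine step is lower bounding the expected value of the augmentation. Let $N_l \sim \mathrm{Bin}(2B/z, 1/2)$ denote the random count of value-$3$ buyers in Group~2 of type $A_l$; the number actually accepted is $\min(N_l, B/z)$. I would show $\E[\min(N_l, B/z)] \geq \tfrac{1}{2}\cdot B/z$ via the symmetry fact that the $\mathrm{Bin}(2k,1/2)$ distribution has median $k$, so $\Pr[N_l \geq B/z]\geq 1/2$ and thus $\E[\min(N_l,B/z)] \geq (B/z)\cdot \Pr[N_l \geq B/z] \geq (B/z)/2$. Summing over $l\in[z]$ yields at least $B/2$ accepted value-$3$ buyers in expectation, contributing at least $3B/2$ to the objective, for a total $\opt \geq 4B + 3B/2 \geq 5B$.

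The only delicate step in the whole argument is the symmetry-based lower bound on $\Pr[N_l\geq B/z]$; everything else is an elementary feasibility check. If one prefers to avoid the median argument, a loose concentration estimate (using $N_l = B/z \pm O(\sqrt{B/z})$ whenever $B \geq z$, which holds in the regime of interest since $B=\log m/\epsilon^2 = z/\epsilon^2$) gives the same bound with an even wider margin.
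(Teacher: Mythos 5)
Your upper-bound argument is essentially identical to the paper's: total item-copies $mB$, each served buyer consumes $m/2$, so at most $2B$ buyers, of which at most $B$ (all of Group~3) have value $4$ and the rest value at most $3$, giving $7B$.

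For the lower bound, however, you take a genuinely different and strictly harder route than the paper. The paper observes that the hindsight optimum is \emph{deterministically} at least $5B$ in every realization: after serving all of Group~3 (value $4B$), serve \emph{any} $B/z$ type-$A_l$ buyers per $l$ (they always exist, since Group~2 alone has $2B/z$ of them); feasibility follows because for each item $j$ and each $l$, exactly one of $A_l, B_l$ wants $j$, so the per-item load is $z\cdot B/z = B$. Since every type-$A_l$ buyer has value at least $1$, this yields $4B + B = 5B$ pointwise, hence in expectation. Your argument instead restricts attention to value-$3$ buyers from Group~2, caps at $B/z$ per type, and then needs a probabilistic step — the median of $\mathrm{Bin}(2k,1/2)$ is $k$, so $\E[\min(N_l,B/z)] \geq B/(2z)$ — to show the expected augmentation is at least $3B/2$. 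The result ($5.5B$ in expectation) is slightly stronger than $5B$, but the constant is irrelevant here, and the probabilistic detour is unnecessary: the paper's version avoids any distributional reasoning by simply using the fact that value is at least $1$. Both proofs are correct, and your feasibility accounting (residual budget $kB/z$ at an item with $k$ one-bits) is right; you've just spent more effort than the statement requires.
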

\begin{proof}

    First, note that for each $l \in [z]$, buyers of type $A_l$ and $B_l$ are interested in disjoint sets of items.
    Thus, for every realization of the instance, we can allocate items to all $B/z$ buyers of type $B_l$ and to any $B/z$ buyers of type $A_l$, for all $l \in [z]$.  
    This results in an allocation with $B$ buyers receiving value $4$ and another $B$ buyers receiving value  at least $1$, implying that  $\opt \geq 5B$.

    For the upper bound, note that the instance has a total of $B \cdot m$ items, and each buyer is interested in exactly $m/2$ items. 
    Therefore, any allocation can serve at most $2B$ buyers. Since at most $B$ of these can have value $4$ and the rest have  value at most $3$, it follows that $\opt \leq 7B$.
\end{proof}

\begin{Observation} \label{fac:common}
    For every subset $Y \subseteq [z]$, there is an item that is of interest to every buyer of type $A_l$ with $l \in Y$ and every buyer of type $B_{l'}$ with $l' \notin Y$.
\end{Observation}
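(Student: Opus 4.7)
The plan is to prove this by an explicit construction using the binary representations of the items. Recall that the items are indexed from $0$ to $m-1=2^z-1$, so every item $j$ is uniquely identified by its $z$-bit binary expansion $(b_1(j),\ldots,b_z(j))\in\{0,1\}^z$. The types are defined precisely in terms of these bits: a buyer of type $A_l$ wants items with $b_l(\cdot)=1$, while a buyer of type $B_l$ wants items with $b_l(\cdot)=0$.

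Given a subset $Y\subseteq[z]$, the natural candidate is the item
\[
j^{\star}(Y) \,:=\, \sum_{l\in Y} 2^{\,l-1},
\]
which is the unique item in $\{0,\ldots,m-1\}$ whose $l$-th bit equals $\mathbf{1}[l\in Y]$ for every $l\in[z]$. To verify correctness I would simply check the two required conditions directly from the definition of the types: for every $l\in Y$, bit $b_l(j^{\star}(Y))=1$, so $j^{\star}(Y)$ is of interest to buyers of type $A_l$; and for every $l'\notin Y$, bit $b_{l'}(j^{\star}(Y))=0$, so $j^{\star}(Y)$ is of interest to buyers of type $B_{l'}$. This yields the desired item.

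There is no real obstacle here since the claim is essentially definitional once one notices that the types were designed so as to slice the $m=2^z$ items according to bit patterns. The only minor subtlety is indexing conventions (whether bits are numbered $0,\ldots,z-1$ or $1,\ldots,z$, and whether $Y$ is allowed to be empty or all of $[z]$), but both the empty set (giving $j^{\star}=0$, which has all bits zero) and the full set (giving $j^{\star}=m-1$, which has all bits one) produce legitimate items in $\{0,\ldots,m-1\}$, so the construction covers every $Y\subseteq[z]$.
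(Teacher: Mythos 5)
Your construction is correct and matches the paper's proof: both choose the unique item whose binary representation has a $1$ exactly in the bit positions indexed by $Y$ and $0$ elsewhere, which by definition of the types is wanted by every $A_l$ buyer with $l\in Y$ and every $B_{l'}$ buyer with $l'\notin Y$. The only difference is that you write out the explicit formula $j^{\star}(Y)=\sum_{l\in Y}2^{l-1}$, which is a harmless elaboration.
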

\begin{proof}
    Consider item $j$ such that the $l$-th bit of $j$ is $1$ and the $l'$-th bit is 0, where $l \in Y$ and $l' \notin Y$. Then, by the definition of buyer types, \Cref{fac:common} follows.
\end{proof}

\smallskip \noindent \textbf{Online algorithms with more power.}  To prove \Cref{thm:lower}, we show that no online algorithm can achieve an expected reward greater than $(1 - 10^{-3} \cdot \epsilon) \cdot \opt$ for the above instance, even if we give the online algorithm more power.
Specifically, we assume that the online algorithm is able to make decisions based on the following \emph{two-stage arrival}:
\begin{itemize}
    \item Stage 1: All buyers in Group 1 arrive simultaneously. The algorithm needs to immediately decide how many buyers of type $A_l$ and value $2$ to pick. Let $q_l$ denote the number of type $A_l$ buyers the algorithm picks in this stage.
    
    \item Stage 2: All buyers in Groups 2 and 3 arrive simultaneously. After fixing $\{q_l\}_{l \in [z]}$ and the randomness in Group 2, the optimal allocation can be solved offline. Let $r_l$ be the number of type $A_l$ buyers the algorithm picks in this stage.
\end{itemize}

\noindent Note that Stage~1 is deterministic. We first show that the optimal allocation for Stage~2 must satisfy the following property.
\begin{Claim}
\label{clm:lower-equal}
    Fix $\{q_l\}_{l \in [z]}$ and the randomness in Group 2. Then, the optimal allocation $\{r_l\}_{l \in [z]}$ in Stage~2 must satisfy $q_l + r_l = B/z$ for each $l \in [z]$.
\end{Claim}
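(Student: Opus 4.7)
The plan is to collapse the $m$ item budget constraints into a single aggregated inequality using the bit-pattern structure of the items, and then prove the claim via a pair of perturbation arguments.

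First, I would introduce the shorthand $a_l := q_l + r_l$ for the total number of type $A_l$ buyers selected across both stages, and $b_l := u_l \in [0, B/z]$ for the number of type $B_l$ buyers selected in Stage~2. Since $m = 2^z$, each item $j$ corresponds to a distinct bit pattern $x_j \in \{0,1\}^z$ and consumes $\sum_{l \in [z]} \bigl(a_l\, x_{j,l} + b_l\,(1 - x_{j,l})\bigr)$ units. Because every bit pattern appears as some item, taking the maximum over $x \in \{0,1\}^z$ collapses the $m$ item-wise budget constraints into the single aggregated constraint
\[
\sum_{l \in [z]} \max(a_l, b_l) ~\leq~ B.
\]
This reduction is the main conceptual step.

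Given this, I would rule out $a_l < B/z$ and $a_l > B/z$ in turn. If $a_l < B/z$, then $\max(a_l, b_l) \leq B/z - 1$ (since $b_l \leq B/z$), so incrementing $r_l$ by $1$ keeps the aggregated constraint satisfied; such an increment is feasible because Stage~2 contains $2B/z$ type $A_l$ buyers, and each such buyer contributes value at least $1$, yielding a strict increase in objective that contradicts optimality. If instead $a_l > B/z$, the aggregated constraint forces some $l' \neq l$ with $\max(a_{l'}, b_{l'}) < B/z$, and in particular $b_{l'} < B/z$; moreover $r_l \geq 1$, since $q_l \leq \sqrt{B/z} < B/z < a_l$. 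I would then swap: decrement $r_l$ by $1$ (losing at most value $3$, since the worst Stage~2 A-buyer removed is worth $3$) and increment $u_{l'}$ by $1$ (gaining value exactly $4$).

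The main obstacle I anticipate is verifying that the swap preserves the aggregated constraint. Here one checks that $\max(a_l - 1, b_l) = \max(a_l, b_l) - 1$, because $a_l \geq b_l + 1$, while $\max(a_{l'}, b_{l'} + 1) \leq \max(a_{l'}, b_{l'}) + 1$ in general; hence $\sum_{l''}\max(a_{l''}, b_{l''})$ weakly decreases. The swap therefore preserves feasibility while strictly improving the objective by at least $4 - 3 = 1$, contradicting optimality. Combining both directions yields $q_l + r_l = B/z$ for every $l \in [z]$ at optimum, proving the claim.
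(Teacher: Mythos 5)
Your central reformulation is the right idea and is, in fact, cleaner than what the paper does. Collapsing the $m$ item constraints to $\sum_{l \in [z]} \max(a_l, b_l) \leq B$ is exactly the content of the paper's \Cref{fac:common} stated in its tightest form: the paper picks one "hard" item $j_Y$ for a specific set $Y$ and argues globally, whereas you observe that since every bit pattern occurs as an item, the worst item already gives $\sum_l \max(a_l,b_l)$. Your local-swap argument for $a_l > B/z$ (decrement $r_l$, increment $u_{l'}$ for an $l'$ with $\max(a_{l'},b_{l'}) < B/z$, net value gain $\geq 4-3 = 1$, aggregated constraint weakly decreases since $a_l > b_l$) is correct and replaces the paper's global swap; your increment argument for $a_l < B/z$ makes explicit what the paper dismisses with a "greedy principle" handwave.

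That said, two things in the write-up need repair. First, the inequality "$\max(a_l, b_l) \leq B/z - 1$ (since $b_l \leq B/z$)" is wrong as stated: if $b_l = B/z$ then $\max(a_l, b_l) = B/z$. Second, and more substantively, your increment argument silently relies on every \emph{other} term satisfying $\max(a_{l'}, b_{l'}) \leq B/z$, which is precisely what your $a_l > B/z$ case establishes; without first ruling out $a_{l'} > B/z$, the sum could be tight at $B$ even with the $l$-th term strictly below $B/z$, and incrementing $a_l$ could then overshoot. The fix is to order the cases: first use the swap argument to conclude $a_l \leq B/z$ for all $l$ at optimum; then, knowing each term is at most $B/z$, observe that if $a_l < B/z$, either $b_l > a_l$ (so incrementing $a_l$ leaves $\max(a_l,b_l)$ unchanged) or $b_l \leq a_l \leq B/z - 1$ (so the $l$-th term is at most $B/z - 1$, the sum is at most $B-1$, and incrementing raises it to at most $B$). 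With that sequencing made explicit, your proof is complete and is a genuinely different, arguably tidier, route than the paper's.
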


\begin{proof}
    We first show that $q_l + r_l \leq B/z$ for each $l \in [z]$.  
    Towards a contradiction, let $Y \subseteq [z]$ such that $Y = \{l \in [z] : q_l + r_l > B/z \}$. 
    Without loss of generality, assume that $Y \neq \emptyset$.
    We begin by giving an upper bound on the number of buyers from Group 3 (i.e., buyer of type $B_l$ for $l \in [z]$) that can be served. 
    As a consequence of \Cref{fac:common}, there exists an item $j_Y$ that is of interest to every buyer of type $A_l$ and every buyer of type $B_{l'}$ where $l \in Y$ and $l' \notin Y$. Since $\sum_{l \in Y}(q_l + r_l)$ copies of $j_Y$ have already been allocated, the algorithm can serve at most $B - \sum_{l \in Y}(q_l + r_l)$ buyers of type $B_{l'}$ where $l' \notin Y$. Since there are exactly $B/z$ buyers of type $B_l$ for each $l \in [z]$, the total number of buyers from Group 3 that can be served is at most $\sum_{l \in Y}B/z + B - \sum_{l \in Y}(q_l + r_l) = B - \sum_{l \in Y}(q_l + r_l - B/z)$.

    We now modify the solution $\{r_l\}_{l \in [z]}$ as follows: we set $\hat{r}_l = B/z-q_l$ if $l \in Y$, else, $\hat{r}_l = r_l$.
    Then, the total loss (resulting from dropped buyers of type $A_l$) is at most $3 \cdot \sum_{l \in Y} (q_l + r_l - B/z)$.
    However, note that after the modification, we have $q_l + \widehat{r}_l \leq B/z$ for each $l \in [z]$. 
    This implies that buyers from Groups 1 and 2 use at most $B/2$ copies of each item. Consequently, all buyers from Group 3 (with value 4) can be served, resulting in a gain of at least $4 \cdot\sum_{l \in Y} (q_l + r_l - B/z)$. This leads to a gain of at least  $\sum_{l \in Y} (q_l + r_l - B/z) > 0$, which contradicts the optimality of $\{r_l\}_{l \in [z]}$.

    Finally, we show that in any optimal allocation, it must hold that $q_l + r_l = B/z$ for each $l \in [z]$. 
    This follows from the observation that if $q_l + r_l \leq B/z$ for each $l \in [z]$,
    then even after serving $B/z$ buyers of type $A_l$ (note Group 2 contains $2B/z$ buyers of type $A_l$) and all $B/z$ buyers of type $B_l$, the algorithm uses at most $B$ copies of each item.
    Therefore, the greedy principle guarantees that the optimal allocation must fully utilize this capacity, implying $q_l + r_l = B/z$ for each $l \in [z]$.
\end{proof}

\paragraph{Proving \Cref{thm:lower}.} 
With \Cref{clm:lower-equal}, the optimal strategy of the two-stage online algorithm is clear: allocate to all type $B_l$ buyers with value $4$ and to $B/z$ type $A_l$ buyers, for every $l \in [z]$. 
However, since buyers in Group 1 arrives first, the online algorithm cannot predict the number of type $A_l$ buyers with value $3$. 
Consequently, the algorithm is unable to select the top $B/z$ type $A_l$ buyers, incurring a loss relative to the offline optimum. To formalize this, we state the following claim, which follows from standard anti-concentration properties of the binomial distribution.

\begin{Claim}
    \label{clm:lower-anticon}
    For every $l \in [z]$, with probability at least $0.2$, there are at least $B/z + 0.2 \sqrt{B/z}$ type $A_l$ buyers with value $3$. Symmetrically, with probability at least $0.2$, there are at most $B/z - 0.2 \sqrt{B/z}$ type $A_l$ buyers with value $3$.
\end{Claim}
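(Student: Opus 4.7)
The plan is to recognize the quantity of interest as a symmetric binomial and invoke a central-limit-type anti-concentration bound. Fix $l \in [z]$, and recall that Group~2 contains $n := 2B/z$ buyers of type $A_l$, each independently assigned value $3$ with probability $1/2$ and value $1$ with probability $1/2$. Let $X_l$ denote the number of these buyers with value $3$; then $X_l \sim \mathrm{Bin}(n, 1/2)$, so its mean is $\mu = B/z$ and its variance is $\sigma^2 = B/(2z)$. In particular, the target deviation in the claim is
\[
0.2\sqrt{B/z} \;=\; 0.2\sqrt{2}\cdot \sigma \;=:\; \alpha\sigma, \qquad \alpha \approx 0.283.
\]

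I would then apply the Berry-Esseen theorem to $X_l = \sum_{i=1}^n Y_i$, where $Y_i$ are i.i.d.\ $\mathrm{Bernoulli}(1/2)$ variables. Since each $Y_i$ has variance $1/4$ and centered third absolute moment $1/8$, Berry-Esseen yields
\[
\sup_{x \in \mathbb{R}} \Big|\Pr\big[(X_l - \mu)/\sigma \leq x\big] - \Phi(x)\Big| \;\leq\; \frac{C}{\sqrt{n}}
\]
for an absolute constant $C$, where $\Phi$ denotes the standard normal CDF. Plugging in $x = \alpha$ and noting that $1 - \Phi(0.283) > 0.38$, we obtain
\[
\Pr\big[X_l \geq \mu + 0.2\sqrt{B/z}\,\big] \;\geq\; 0.38 - \frac{C}{\sqrt{n}} \;\geq\; 0.2
\]
for all sufficiently small $\epsilon$ (equivalently, for $n = 2/\epsilon^2$ large enough), which is the only regime in which the conclusion $(1 - \Omega(\epsilon))$ of \Cref{thm:lower} is meaningful. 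The symmetric lower-tail statement follows immediately by applying the identical argument to $n - X_l \sim \mathrm{Bin}(n, 1/2)$, or equivalently by the symmetry of the distribution of $X_l$ around $\mu$.

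The only (minor) obstacle is quantifying how small $\epsilon$ must be, since the Berry-Esseen remainder is $O(1/\sqrt{n}) = O(\epsilon)$ rather than a strict constant bound. However, because the lower bound in \Cref{thm:lower} is an asymptotic statement of the form $(1 - \Omega(\epsilon))$, we may freely restrict to $\epsilon$ below a suitable absolute constant, absorbing the $O(\epsilon)$ slack into the gap $0.38 - 0.20$. Alternatively, one could avoid Berry-Esseen entirely by an elementary Stirling-approximation computation on $\binom{n}{\lceil n/2 + \alpha \sqrt{n}\rceil}/2^n$ to directly estimate $\Pr[X_l = k]$ near the mean; this yields the same $0.38$ limit but with slightly more work.
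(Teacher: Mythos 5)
Your proof is correct, and it takes a genuinely different route from the paper. The paper bounds the probability that $X_l$ lands in the central window $[N - 0.2\sqrt{N},\, N + 0.2\sqrt{N}]$ (where $N = B/z$) from above by multiplying the width of the window ($\approx 0.4\sqrt{N}$) by the maximum point mass $\binom{2N}{N}/2^{2N}$, which Stirling's approximation bounds by $1/\sqrt{\pi N}$; this gives a ceiling of roughly $0.5/\sqrt{\pi} \approx 0.28 < 0.6$ on the central window, leaving at least $0.2$ in each tail by symmetry. You instead invoke Berry--Esseen to compare the tail probability directly against a Gaussian tail $1 - \Phi(0.2\sqrt{2}) \approx 0.39$, then absorb the $O(1/\sqrt{n}) = O(\epsilon)$ remainder into the slack. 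Both arguments need $n = 2B/z$ to be at least a modest constant (the paper also remarks ``when $N$ is sufficiently large''), and both are legitimate. The paper's Stirling computation is more self-contained and hands-on; your Berry--Esseen argument is shorter given the black box and makes the Gaussian intuition explicit, at the cost of citing an external theorem with an unspecified constant. You already noticed the Stirling alternative in your final paragraph, so you were aware of both paths. One small caution for a final write-up: if you invoke Berry--Esseen, nail down a concrete admissible constant $C$ (e.g., the classical $C \leq 0.4748$) and the resulting explicit threshold on $n$, so the ``sufficiently small $\epsilon$'' step is fully quantified rather than gestured at.
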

\begin{proof}
    For simplicity of the notation, we define $N = B/z$ and assume $N$ is an integer. Then, the distribution of type $A_l$ buyers in Group 2 with value $3$ follows from a Binomial distribution $Bin(2N, 0.5)$.  Since $Bin(2N, 0.5)$ is symmetric, to prove \Cref{clm:lower-anticon}, it's sufficient to prove that
    \begin{align}
        \mathop{\pr}\limits_{x \sim Bin(2N, 0.5)}\left[N - 0.2\sqrt{N} \leq x \leq N + 0.2 \sqrt{N}\right] ~\leq~ 0.6 \label{eq:lower-tmp}
    \end{align}
    Since the binomial coefficient of $Bin(2N, 0.5)$ is maximized when $x = N$, we have
    \begin{align}
        \mathop{\pr}\limits_{x \sim Bin(2N, 0.5)}\left[N - 0.2\sqrt{N} \leq x \leq N + 0.2 \sqrt{N}\right] ~&\leq~ \mathop{\pr}\limits_{x \sim Bin(2N, 0.5)}\left[x = N\right] \cdot (0.4\cdot \sqrt{N} + 1) \notag \\
        ~&\leq~ \frac{\binom{2N}{N}}{2^{2N}} \cdot 0.5\sqrt{N}, \label{eq:binom-coeff}
    \end{align}
    where the last inequality holds when $N$ is sufficiently large. 
    Thus, it suffices to show that
    $\binom{2N}{N} \cdot \sqrt{N} ~\leq~ 1.2 \cdot 2^{2N}$.
    We proceed by using Stirling's approximation (see \cite{stirling}) to bound $\binom{2N}{N}$. For every positive integer $N$, we have
    \[
\sqrt{2\pi N} \cdot \left(\frac{N}{e}\right)^N \cdot e^{\frac{1}{12N + 1}} ~<~ N! ~<~ \sqrt{2\pi N} \cdot \left(\frac{N}{e}\right)^m \cdot e^{\frac{1}{12N}}.
\]
Plugging the above inequalities in the equation $\binom{2N}{N} = \frac{(2N)!}{(N!)^2}$, we get
\[
\binom{2N}{N} ~<~ \frac{4^N}{\sqrt{\pi \cdot N}} \cdot \frac{e^\frac{1}{24N + 1}}{e^\frac{1}{6N}} ~\leq~ \frac{4^N}{\sqrt{\pi \cdot N}},
\]
which when plugged into \eqref{eq:binom-coeff} completes the proof of the claim. \qedhere 
\end{proof}

Now, we can complete the proof of  \Cref{thm:lower}. 
\begin{proof}[Proof of \Cref{thm:lower}]
Fix an online algorithm that has more power to make decisions based on the two-stage arrival process and satisfies \Cref{clm:lower-equal}. 
We define $X \subseteq [z]$ to be the subset of types such that $q_l  \geq 0.1 \cdot \sqrt{B/z}$. Then, either $|X| > z/2$, or $|X| \leq z/2$. 

When $|X| > z/2$, \Cref{clm:lower-anticon} implies that for every $l \in X$, with probability at least $0.2$, there are at least $B/z + 0.2 \sqrt{B/z} > B/z$ type $A_l$ buyers with value $3$. 
The optimal offline algorithm can allocate to $B/z$ buyers of type $A_l$ and value $3$, by foregoing type $A_l$ buyers from Group 1.
Consequently, the optimal offline allocation gains at least $(3 - 2) \cdot 0.1 \sqrt{B/z} = 0.1 \sqrt{B/z}$ more than the online algorithm from type $A_l$ buyers. Therefore, when $|X| > z/2$, the value of the online algorithm is at least $0.2 \cdot \frac{z}{2} \cdot 0.1\sqrt{B/z} = 0.01 \sqrt{Bz}$  less than $\opt$.

When $|X| \leq z/2$, \Cref{clm:lower-anticon} implies that for every $l \in [z] \setminus X$, with probability at least $0.2$, there are at most $B/z - 0.2 \sqrt{B/z}$ type $A_l$ buyers with value $3$. Since \Cref{clm:lower-equal} guarantees that the online algorithm allocates to $B/z$ buyers of type $A_l$, for $l \in [z] \setminus X$, the online algorithm picks at least $B/z - (B/z - 0.2 \sqrt{B/z}) - 0.1\sqrt{B/z} = 0.1\sqrt{B/z}$ type $A_l$ buyers with value $1$. 
However, the optimal offline allocation can replace each value $1$ buyer with a  buyer with value $2$, and obtain at least $(2 - 1) \cdot 0.1 \sqrt{B/z} = 0.1 \sqrt{B/z}$ more value  than the online algorithm from type $A_l$ buyers. 
Therefore, when $|X| \leq z/2$, the value of the online algorithm is also at least $0.2 \cdot \frac{z}{2} \cdot 0.1\sqrt{B/z} = 0.01 \sqrt{Bz}$  less than $\opt$.

To conclude, we show that no online algorithm can achieve an expected value more than $\opt - 0.01 \sqrt{Bz}$, even the online algorithm is equipped with more power to make decisions. 
Recall that \Cref{fac:opt-range} gives $5B \leq \opt \leq 7B$ and $\epsilon$ is defined as $\sqrt{z/B}$, which implies,
\[
\opt - 0.01\sqrt{Bz} ~\leq~ (1 - 10^{-3} \cdot \epsilon) \cdot \opt \enspace . \qedhere
\]
\end{proof}

{\small
\bibliographystyle{alpha}
\bibliography{ref.bib}
}
\end{document}